\renewcommand{\fnum@figure}{\textbf{Fig.~\thefigure}}
\renewcommand{\fnum@table}{\textbf{Table~\thetable}}
\newtheorem{theorem}{Theorem}[section]
\newtheorem{lemma}[theorem]{Lemma}
\providecommand{\customgenericname}{}
\newcommand{\newcustomtheorem}[2]{%
  \newenvironment{#1}[1]
  {%
   \renewcommand\customgenericname{#2}%
   \renewcommand\theinnercustomgeneric{##1}%
   \innercustomgeneric
  }
  {\endinnercustomgeneric}
}
\newcommand{\ket}[1]{\ensuremath{\vert{#1}\rangle}}
\newcommand{\bra}[1]{\ensuremath{\langle{#1}\vert}}
\newcommand{\Tr}{\ensuremath{\mathrm{Tr}}}
\newcommand{\sfrac}[2]{\ensuremath{{\textstyle\frac{#1}{#2}}}}
\newcommand{\half}[0]{\sfrac{1}{2}}
\newcommand{\norm}[1]{\ensuremath{\Vert{#1}\Vert}}
\newcommand{\dnorm}[1]{\ensuremath{\Vert{#1}\Vert_{\diamond}}}
\renewcommand{\vec}[1]{{\mathbf #1}}
\newcommand{\TJWat}{IBM Quantum, IBM T.J.~Watson Research Center, Yorktown Heights, NY 10598, USA}
\newcommand{\sref}[2]{Supplementary Materials Sec.~{#1}}
\begin{document}

\title{Probabilistic error cancellation with sparse Pauli-Lindblad models on noisy quantum processors} 

\author{Ewout van den Berg, Zlatko K. Minev, Abhinav Kandala, Kristan Temme}

\affiliation{\TJWat}
\date{\today}


\maketitle 

{\bf{
Noise in pre-fault-tolerant quantum computers can result in biased
estimates of physical observables. Accurate bias-free estimates can be
obtained using probabilistic error cancellation (PEC), which is
an error-mitigation technique
that effectively inverts well-characterized noise channels. Learning
correlated noise channels in large quantum circuits, however, has been
a major challenge and has severely hampered experimental realizations.
Our work presents a practical protocol for learning and inverting a
sparse noise model that is able to capture correlated noise and scales
to large quantum devices. These advances allow us to demonstrate PEC
on a superconducting quantum processor with crosstalk errors, thereby
providing an important milestone in opening the way to quantum
computing with noise-free observables at larger circuit volumes.}}

\paragraph*{Introduction}
As a result of continuous improvement in quantum hardware and control
systems, quantum processors are now able to provide more qubits with
longer coherence times and better gate
fidelities~\cite{zhang2020ibm,arute2019,Wu2021}. Despite these
improvements, the levels of noise in current quantum processors still
limit the depth of quantum circuits and reduce the accuracy of
measured observables. Nevertheless, there is a growing number of
quantum applications that run on noisy quantum processors and still
provide competitive
results~\cite{peruzzo2014variational,kandala2017hardware,KIM2021WYMa-arXiv,havlivcek2019supervised,schuld2019quantum}. Fault
tolerance using quantum error correction or similar techniques would
solve many noise related issues, but until this is achieved, quantum
error
mitigation~\cite{TEM2017BGa,LiBenjamin2017,kandala2019error,END2018BLa}
may very well be the best way forward. Unlike error correction, which
ensures that quantum circuits can be executed faithfully, error
mitigation only aims to produce accurate expectation values
$\langle A \rangle$ of observables $A$.

One of the earliest and most general protocols for error mitigation is
probabilistic error cancellation (PEC)~\cite{TEM2017BGa}. To implement
the error mitigated action $\mathcal{U}(\rho) = U \rho \, U^{\dag}$ of
an ideal gate $U$ on a devices where only noisy operations
${\cal U} \circ \Lambda$ are available, the protocol first requires an
accurate noise model $\Lambda$. The action of the ideal gate would
then be obtained by applying the mathematical inverse $\Lambda^{-1}$
before the noisy gate. Although $\Lambda^{-1}$ is not a physical
operation, it can be expressed as a linear combination of gates and
state-preparation operations~\cite{TEM2017BGa,END2018BLa}. The PEC
protocol implements this linear combination on average by promoting it
to a quasi-probability distribution. Sampling the distribution
generates physical circuit instances and results in an expectation
value $\langle \hat{A}_N \rangle$ that is unbiased and completely
removes the effect of $\Lambda$. However, this comes at the expense of
an increased sampling overhead we denote by $\gamma$, which captures
the noise strength and the resulting increase of the standard
deviation.

Despite the method's theoretical
appeal~\cite{END2018BLa,GUO2022Ya-arXiv,PIV2021SWa-arXiv,doi:10.7566/JPSJ.90.032001,piveteau2021error,PhysRevResearch.3.033178,takagi2021fundamental},
practical challenges have limited its demonstration to the one- and
two-qubit level~\cite{doi:10.1126/sciadv.aaw5686,Zha2020LZCa}.  The
main difficulty has been the accurate representation of the noise in a
full device, which is particularly complicated by cross-talk errors
that occur during the parallel application of gates.  This has lead to
protocols where a quasi-probability distribution for mitigation is
determined by minimizing the deviation of a set of measured and exact
expectation values~\cite{strikis2021learning}.  Fully scalable
implementations of PEC require a noise model $\Lambda$ that accurately
captures correlated errors across all qubits, has a compact
representation that can be learned efficiently, and has an inverse
representation that enables tractable sampling from the associated
quasi-probability distribution.

We address these challenges in the context of quantum circuits that
consist of $l$ layers of noisy two-qubit gates interleaved with layers
of single-qubit gates. Each layer $i=1,\ldots,l$ consists of a noisy
operator $\tilde{\mathcal{U}}_i$ and is error mitigated by
$\Lambda_i^{-1}$, as shown in Fig.~\ref{Fig:Layers}a. The noise
channel $\Lambda_i$ is specific to the gates in layer $i$ and is
assumed to be a Pauli channel. If needed, this can be ensured using
Pauli
twirling~\cite{PhysRevLett.76.722,knill2004fault,kern2005quantum,geller2013efficient,wallman2016noise},
as illustrated in Figs.~\ref{Fig:Layers}b and~\ref{Fig:Layers}c for an
example with four qubits and two {\sc{cx}} gates.

We present an efficient mitigation scheme that models the noise across
each layer of two-qubit gates as a sparse Pauli-Lindblad error model.
In our experiments, the model includes only weight-one and weight-two
Pauli terms whose support coincides with the quantum processor's
connectivity. The parameters of the resulting model scale linearly
with the number of qubits, which ensures that the model is efficiently
represented and easy to learn. The inverse noise model is obtained
simply by negating the model coefficients and gives rise to a
quasi-probability distribution on Pauli matrices. We provide an
efficient algorithm for sampling this distribution in linear time with
the number of model coefficients. The mitigation Paulis can be
combined with those used for twirling as well as with the single-qubit
operations in the interleaved layers. The error mitigation scheme
therefore maintains the original circuit structure and changes only
the classical distribution of the single-qubit gates.

\begin{figure*}[t]
\setlength{\tabcolsep}{10pt}
\centering
\includegraphics[width=0.98\textwidth]{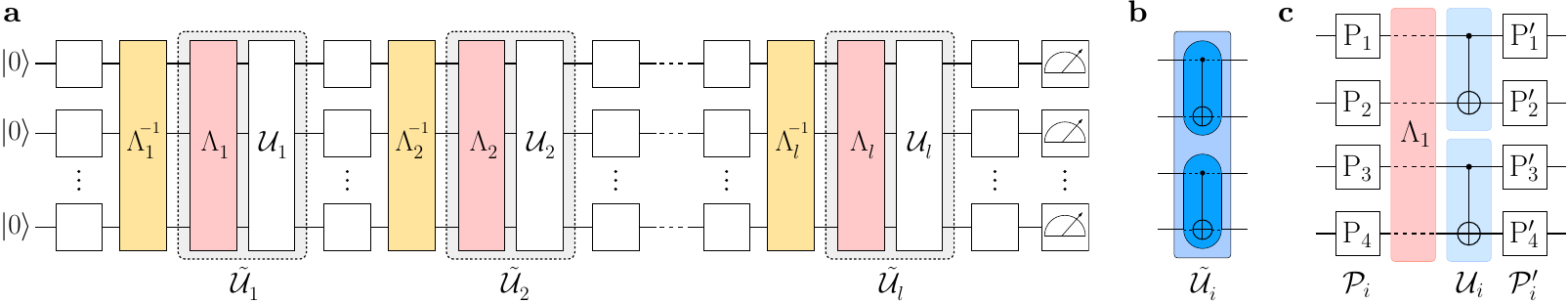}
\caption{ \textbf{Context of the noise model.} ({\bf{a}}) ideal error
  mitigation of a circuit consisting of $l$ layers of noisy two-qubit
  gates interleaved with layers of single-qubit gates. ({\bf{b}})
  example of a layer consisting of two noisy {\sc{cx}}
  gates. ({\bf{c}}) expansion of the same layer in terms of the ideal
  gates $\mathcal{U}_i$ and noise channel $\Lambda_i$, flanked with
  Pauli-twirl gates $\mathcal{P}_i$ and
  $\mathcal{P}_i' = \mathcal{U}_i \mathcal{P}_i\mathcal{U}_i^{\dag}$, where $\mathcal{P}_i$ is sampled uniformly at random.}\label{Fig:Layers}
\end{figure*}

\paragraph*{Pauli-Lindblad noise model}

We model a given $n$-qubit Pauli noise channel $\Lambda$ that arises
from a sparse set of local interactions, according to a Lindblad
Master equation~\cite{BRE2002Pa} with generator
$\mathcal{L}(\rho) = \sum_{k \in \mathcal{K}} \lambda_k\big(P_k\rho
P_k^{\dag} - \rho\big)$, where $\mathcal{K}$ represents a set of local
Paulis $P_k$ and $\lambda_k$ denotes the corresponding model
coefficient. The resulting model is then given by
(see~\sref{SIII}{Sec:ErrorModel})
\begin{equation}\label{Eq:MainFinalForm}
\Lambda(\rho) = \mbox{exp}[\mathcal L](\rho) = \prod_{k \in \mathcal{K}}\left(w_k\cdot + (1-w_k)P_k\cdot P_k^{\dag}\right)\rho,
\end{equation}
where~$w_k = 2^{-1}(1 + e^{-2 \lambda_k})$. The model
terms~$\mathcal{K}$ are chosen to reflect the noise interactions in
the quantum processor and their number, which determines the model
complexity and expressivity, typically scales polynomially in $n$ and
therefore allows us to represent noise models for the full device by a
small set of nonnegative coefficients~$\lambda_k$.

The fidelity of a Pauli matrix $P_b$ with respect to $\Lambda$ is
given by $f_b = \frac{1}{2^n}\Tr\big(P_{b}^{\dag}
\Lambda(P_b)\big)$. Defining the symplectic inner product
$\langle b,k\rangle_{sp}$ to be $0$ if Paulis $P_b$ and $P_k$ commute
and $1$ otherwise, we can concisely express the relationship between
model coefficients $\lambda$ and the vector
$f=\{f_b\}_{b\in\mathcal{B}}$ of fidelities for an arbitrary set of
Paulis $\mathcal{B}$ as
$\log(f) = -2M(\mathcal{B},\mathcal{K})\lambda$, where the logarithm
is applied elementwise and the entries of binary matrix
$M(\mathcal{B},\mathcal{K})$ are given by
$M_{b,k} = \langle b,k\rangle_{sp}$. For a given $\lambda$ this allows
us to evaluate the fidelity of any set of Paulis $\mathcal{B}$. More
importantly, though, the relationship allows us to fit physical model
parameters, $\lambda\geq 0$, given the fidelity estimates $\hat{f}$
for a set of benchmark Paulis $\mathcal{B}$ by solving a nonnegative
least-squares problem in $\log(\hat{f})$;
see~\sref{SIII.3}{subsec:learning} for more details.

Various methods of learning the fidelities of Pauli channels are
known~\cite{FLA2020Wa,ERH2019WPMa,PhysRevX.4.011050,HEL2019XVWa} and
have been implemented experimentally~\cite{harper2020efficient}. The
central idea in these methods is that the same noise process is
repeated up to $d$ times and the corresponding Pauli expectation
values are measured at every depth. The fidelities for the noise
channel can then be extracted from the decay rates in the resulting
curves in a way that is robust to state-preparation and measurement
(SPAM) errors. In \sref{SIV.2}{Sec:ErrorAnalysis} we provide
theoretical guarantees for the sample complexity for learning the
error model. Under mild conditions on the minimal fidelity of the
noise channel and the level of SPAM errors we provide the following
result for all the fidelities predicted by the model: Assume that the
channel can be represented with the model Paulis from set
$\mathcal{K}$, and that the channel fidelities for Paulis in
$\mathcal{B}$ are learned by benchmarking up to depth $d$ with at
least $2\epsilon^{-2}\log(2\vert\mathcal{B}\vert/\delta)$ circuit
instances for each of the relevant measurement bases. Then it holds
with probability at least $1-\delta$ that the estimates $\hat{f}_j$ of
all fidelities $f_j$ are bounded by
\begin{equation}\label{Eq:MainBound1}
C_{\epsilon}^{-\tau} \leq f_j \, \hat{f_j}^{-1} \leq C_{\epsilon}^{\tau},
\end{equation}
with
$\tau\! =\!
\sqrt{\vert\mathcal{K}\vert\cdot\vert\mathcal{B}\vert}/(\sigma_{\min}(M(\mathcal{B},\mathcal{K}))d)$,
and $C_{\epsilon}=\big({\textstyle\frac{1+4\epsilon}{1-4\epsilon}}\big)$.

\begin{figure*}[!t]
\centering
\includegraphics[width=0.975\textwidth]{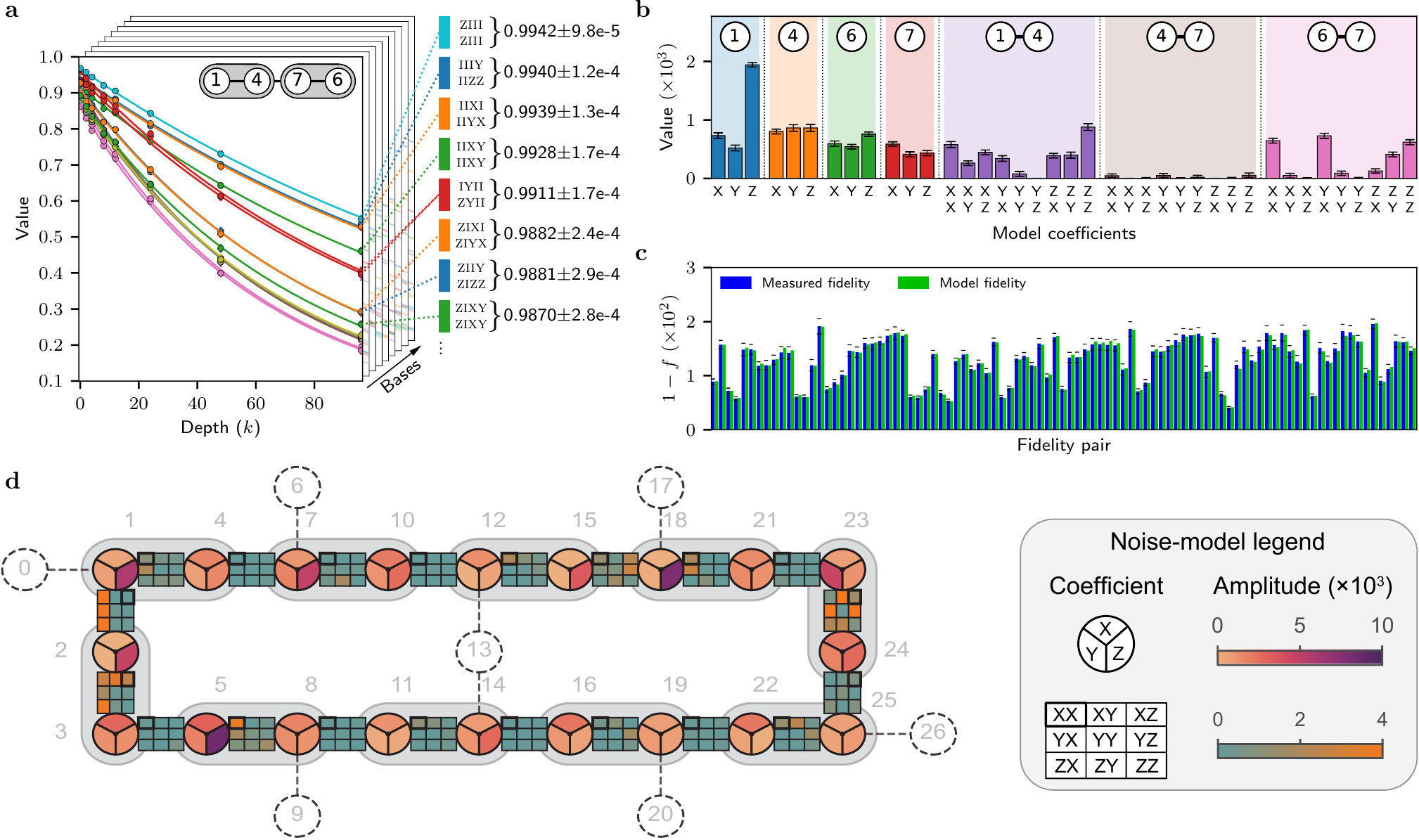}
\caption{ \textbf{Learning the noise channel.}  (\textbf{a}) The first
  step in learning our noise model, in this case for the four-qubit
  layer depicted in the top inset with two concurrent {\sc{cx}} gates,
  is to measure a set of observables with increasing numbers of
  circuit repetitions $k$ (even) up to some maximum depth $d$. This
  requires measurements in nine different bases, illustrated by the
  stacked planes. Associated with each observable $P_o$ is a fidelity
  of the form $\alpha_{o}(f_1f_2)^{k/2}$, where $\alpha_{o}$ is a
  constant that captures the state-preparation and measurement error,
  and $f_1$ and $f_2$ are the fidelities of the noise channel for two
  Pauli terms. We estimate the values of the different fidelity pairs
  in a consistent manner by fitting exponentially decaying curves
  through the data point of all observables that include the same
  pair, which may arise in multiple bases and possibly different
  observables, whose curves may have a different offset values
  $\alpha_o$. The legend on the right hand side illustrates the
  fidelity estimates for several pairs along with their standard
  deviation obtained using a 100-fold bootstrap (the error bars for
  the data points are small and largely covered by the markers).
  (\textbf{b}) Model coefficients obtained using a nonnegative
  least-squares fit of the log fidelities.  (\textbf{c}) Plot of one
  minus the fidelity for each of the measured fidelity pairs including
  error bars representing the standard deviation (vertical lines in
  the error bars are omitted for clarity), along with the
  corresponding fidelities from the learned noise model.  (\textbf{d})
  Visualization of the sparse noise model of a 20-qubit layer with 10
  concurrent {\sc{cx}} gates (shaded pairs) overlaid on the topology
  of the \textsc{ibm\_hanoi} quantum processor. Circles denote qubits
  (labeled by numbers); colored wedges in the circle visualize the
  single-body~$X,Y$, and~$Z$ Lindblad coefficients (see legend
  top). Two-body coefficients, e.g. $XX$, for adjacent qubit pairs are
  visually represented by a $3\times 3$ matrix (see legend
  bottom). The first Pauli character corresponds to the qubit adjacent
  the highlighted tile.}
\label{Fig:Fitting}
\end{figure*}

\bigskip
\paragraph{Experimental model fitting}

To illustrate the learning protocol, we first benchmark the four-qubit
layer with two {\sc{cx}} gates shown in Figure~\ref{Fig:Layers}b on a
27-transmon-qubit, fixed-connectivity processor with a heavy-hex
topology, with qubits as indicated at the top of
Fig.~\ref{Fig:Fitting}a. For all our experiments we apply dynamical
decoupling sequences during idle times of qubits in the layer. These
idle times arise when one or more gates in the layer are significantly
faster than the slowest one, or when a qubit in the layer does not
contain a gate (see also~\sref{SVII.3}{sec:exp-setup}).
Repeated application of a noise channel in the context of self-adjoint
two-qubit Clifford gates, such as {\sc{cx}} and {\sc{cz}} gates,
generally results in pairwise products of fidelities.  Although
inserting appropriate single-qubit gates between applications can
increase the number of individual Pauli fidelities estimates, pairwise
fidelities will always remain, leading to indeterminacy of model
coefficients; for instance, we can express the pairwise fidelity
$f_af_b$ as $(\alpha f_a)(f_b/\alpha)$ for any $\alpha$. We address
this indeterminacy either through direct estimation of missing
fidelities by measuring a single layer, at the cost of an additive
error in the estimate and sensitivity to state preparation and readout
errors, or through symmetry relations that follow under the reasonable
assumption on the noise (see \sref{SV}{Sec:TwoQubitGates}).

With this in mind, we benchmark the four-qubit layer for increasing
depths up to $d$ in nine different bases in order to obtain all
necessary data. Each data point in Fig.~\ref{Fig:Fitting}a represents
an estimated observable in a given basis, averaged over 100 random
circuit instances with 256 shots each. We then fit exponentially
decaying curves through the data points corresponding to each unique
fidelity pair $f_af_b$, and augment the fidelities obtained this way
with fidelity estimates resulting from the symmetry condition. From
this, we obtain the model coefficients $\lambda$, shown in
Fig.~\ref{Fig:Fitting}b, using an adapted nonnegative least-squares
fitting procedure that uses the modified relation
$\log(f_1f_2) = -2(M_1+M_2)\lambda$ to reflect the use of pairwise
fidelities (see~\sref{SV}{Sec:TwoQubitGates}). As seen in
Fig.~\ref{Fig:Fitting}c, the fidelities of the resulting model closely
match the measured fidelities. This provides confidence that the
selected model captures the noise accurately.

To illustrate scalability of the method we used the same protocol to
learn the noise model for a 20-qubit layer involving ten concurrent
{\sc{cx}} gates. Figure~\ref{Fig:Fitting}d depicts the layer and the
resulting model coefficients.  The illustration visualizes the
sparse-model coefficients as a map over the quantum processor. We
emphasize that learning the 20-qubit noise model takes the same number
of circuit instances as that of the 4-qubit model.
\begin{figure*}[ht]
\centering
\includegraphics[width=\textwidth]{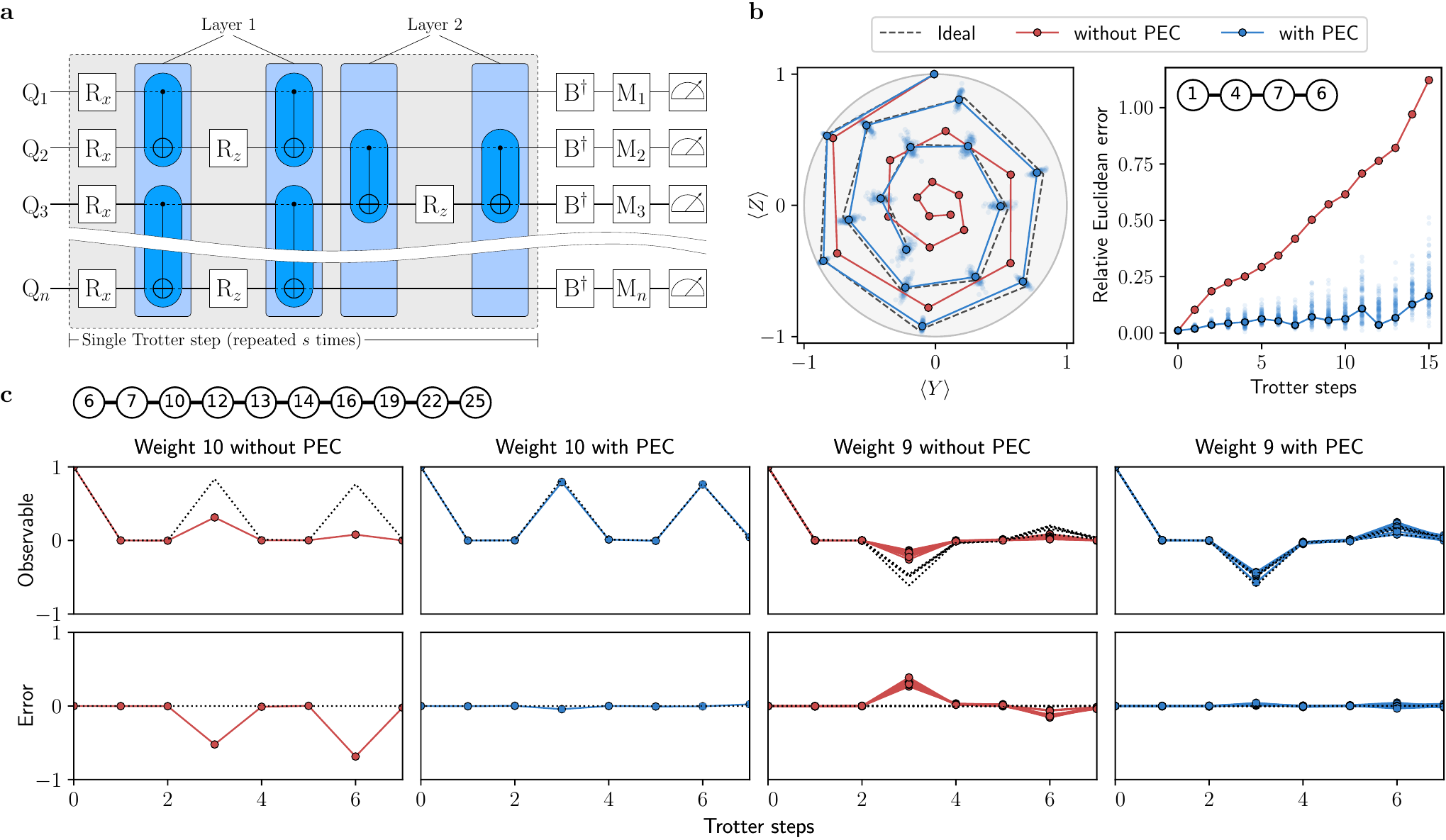}
\caption{ \textbf{Error mitigated time evolution of Ising spin
    chains.}  (\textbf{a}) Trotter circuit for the Ising Hamiltonian
  in Eq.~\eqref{eq:IsingH} over a one-dimensional $n$-qubit
  lattice. The shaded box represents a single Trotter step and is
  repeated $s$ times, with associated $R_X(2h\delta t)$ and
  $R_Z(-2J\delta t)$ rotations. Each step comprises two instances of
  two unique \textsc{cx} layers. The $B^{\dag}$ gates select the
  measurement basis and the~$M$ gates facilitate our model-free
  readout-error mitigation~\cite{BER2020MTa-arXiv}.  (\textbf{b}) Time
  evolution of the Ising model for an $n=4$ spin lattice with and
  without probabilistic error correction (PEC) for 15 Trotter steps;
  $h=1$, $J=0.15$, and $\delta t=1/4$.  Left: Trotterized
  time-evolution of the global
  magnetization~$\vec M := \sum_n \left( \left\langle X_n
    \right\rangle, \left\langle Y_n \right\rangle, \left\langle Z_n
    \right\rangle \right)/N$ shown in the Y-Z Bloch plane. The
  experimentally measured evolution (dots and solid lines) is compared
  to the ideal noise-free one (dashed lines).  The bootstrap-estimated
  error distribution for each data point is shown as clouds (light
  dots).  Right: The error between ideal and measured magnetization
  vectors, in terms of the relative Euclidean distance
  $\Vert\vec M - \vec M_\mathrm{ideal}\Vert_2 / \Vert\vec
  M_\mathrm{ideal}\Vert_2$.  (\textbf{c}) Time evolution of the Ising
  model on a one-dimensional ten-qubit lattice sites (top) with $h=1$,
  $J=0.5236$, and $\delta t=1/4$. All weight-10 (left) and weight-9
  Pauli-Z observables (right) are plotted along with the ground truth
  (dashed).  }\label{Fig:IsingCircuit}
\end{figure*}

\paragraph*{Probabilistic error cancellation}
Once the noise model has been learned, it can be used to mitigate the
noise using the PEC method~\cite{TEM2017BGa}. The protocol implements
the channel inverse $\Lambda_i^{-1}$ through quasi-probabilistic
sampling for each of the $l$ layers. The inverse of the map $\Lambda$
is obtained by negating $\mathcal{L}$, leading to a non-physical map
given by
\begin{equation}
\Lambda^{-1}(\rho) = \mbox{exp}[-\mathcal L](\rho) = \gamma  \prod_{k \in \mathcal{K}}\left( w_k\cdot - (1-w_k)P_k\cdot P_k^{\dag}\right)\rho,
\label{eq:InverseNoiseChanLindblad}
\end{equation}
with {\it sampling overhead}
$\gamma = \exp(\sum_{k\in {\cal K}} 2\lambda_k)$.  This amounts
exactly to inverting each individual factor in
Eq.~\eqref{Eq:MainFinalForm} due to commutativity of the factors.  The
product structure allows for a direct way of sampling the map. For
each $k \in {\mathcal K}$ we sample the identity with probability
$w_k$ or apply the Pauli $P_k$ otherwise. We record the number of
times $m$ we have applied a non-identity Pauli, compute a final Pauli
as the product of all sampled terms. Repeating this for each noise
channel $i=1,\ldots, l$ with respective $m_i$ and $\gamma_i$ values,
we construct a circuit instance in which each noisy layer is preceded
with the corresponding sampled Pauli. The measurement outcome of the
circuit is then multiplied by $\prod_{i=1}^l (-1)^{m_i} \gamma_i$. On
average, this implements the inverse maps and produces an unbiased
expectation value with sampling overhead
$\gamma(l) = \prod_{i=1}^l \gamma_i$ In
\sref{SVI.2}{Sec:ErrorAnalysisObservables}, we derive an error bound
on the final expectation value that considers the errors in all steps
of the procedure. The bound states that, given a quantum circuit with
$l$ layers whose learning layer satisfies Eq.~\eqref{Eq:MainBound1},
we can estimate the ideal expectation value $\langle A\rangle$ of an
observable $A$ with $\norm{A} \leq 1$ by the average mitigated
estimate $\langle \hat{A}_N\rangle$ using $N$ error-mitigated circuit
instances, such that
\[
\vert\langle{A}\rangle - \langle \hat{A}_N\rangle\vert
\leq 
(C_{\epsilon}^{l \tau} - 1) + \gamma(l) \sqrt{2\log(2/\delta')/N}
\]
is satisfied with probability at least $1-\delta'$. For modest noise,
$C_{\epsilon}$ can be expected to be close to one, which leads to a
scaling that is only weakly exponential in $l$ and~$\tau$. The
sampling overhead $\gamma(l)$ dictates the resources needed to obtain
a reliable estimator~\cite{TEM2017BGa}.

\paragraph*{Quantum simulation of the Ising model}
As a practical application for noise mitigation with our proposed
noise model we consider time evolution of the one-dimensional
transverse-field Ising model due to the Hamiltonian
\begin{equation}\label{eq:IsingH}
H = -J\sum_{j} Z_{j}Z_{j+1} + h\sum_{j} X_{j}  = -JH_{ZZ} + hH_{X},
\end{equation}
where $J$ denotes the exchange coupling between neighboring spins and
$h$ represents the transverse magnetic field. Unitary time
evolution~$e^{-iHt}$ can be approximated by a first-order Trotter
decomposition $\big(e^{iJH_{ZZ}t/s}e^{-ihH_{X}t/s}\big)^s$ with $s$
segments.  We perform the time evolution on a linear chain of qubits,
where we implement the unitary $\exp(iJ(Z_{j}Z_{j+1})\delta_t)$ with
$\delta_t = t/s$ as a quantum circuit consisting of an
$R_Z(-2J\delta_t)$ rotation on qubit $j\!+\!1$ between two {\sc{cx}}
gates with control and target qubits $j$ and $j\!+\!1$. Similarly,
$\exp(-ihH_x\delta_t)$ decomposes into a product of single-qubit
rotations $R_X(2h\delta_t)$ on each qubit $j$ (for more details
see~\cite{KIM2021WYMa-arXiv}). This results in circuits of the form
shown in Fig.~\ref{Fig:IsingCircuit}a.  The circuit contains two
unique layers of {\sc{cx}} gates, one starting at even and one at odd
locations in the qubit chain. Once the noise models for the two layers
are learned, we generate random circuit instances. We apply
readout-error mitigation on all
observables~(see~\cite{BER2020MTa-arXiv} for more on readout
mitigation). To counter time-dependent fluctuations in the noise we
relearn the noise model after fixed intervals (see
also~\sref{SVII}{sec:exp-setup}). The final observables are obtained
after averaging.

As a first experiment, we consider the Ising-model dynamics for a spin
chain with four sites with $h=1$ and $J=0.15$. Learning of the first
layer was detailed in Fig.~\ref{Fig:Fitting}a--c and resulted in
factor $\gamma = 1.03$. All other models were learned in a similar
fashion.  The number of mitigated circuit instances for each
$s = 1,2,\ldots,15$ is given by
$\min(200, 40\cdot(\gamma_1\gamma_2)^{2s})$, where $\gamma_1$ and
$\gamma_2$ are the sampling overhead factors for the first and second
layer. Each circuit instance is measured 1,024 times.

For each of the $s$ Trotter-steps we compute the global magnetization
component $\langle Z\rangle_s$ as the overall average of all
weight-one Pauli-Z observables, and likewise for $\langle X\rangle_s$
and $\langle Y\rangle_s$. The resulting $Y$ and $Z$ magnetization
components are plotted in Fig.~\ref{Fig:IsingCircuit}b (left) along
with the results obtained without PEC and exact simulation. We compare
the relative Euclidean distance for the estimated and exact global
magnetization in Fig.~\ref{Fig:IsingCircuit}b (right).

Our second experiment considers the simulation of a one-dimensional
lattice on ten qubits with $h=1$ and $J=0.5236$ for up to seven
Trotter steps. High-weight observables are highly noise sensitive and
serve as a demanding test of the method. In
Fig.~\ref{Fig:IsingCircuit}c, we compare the results for weight-9 and
-10 Pauli-Z observables obtained with and without PEC.  Mitigated
observables exhibit vanishing residuals.

\begin{figure}[ht]
\centering
\includegraphics[width=0.8\linewidth]{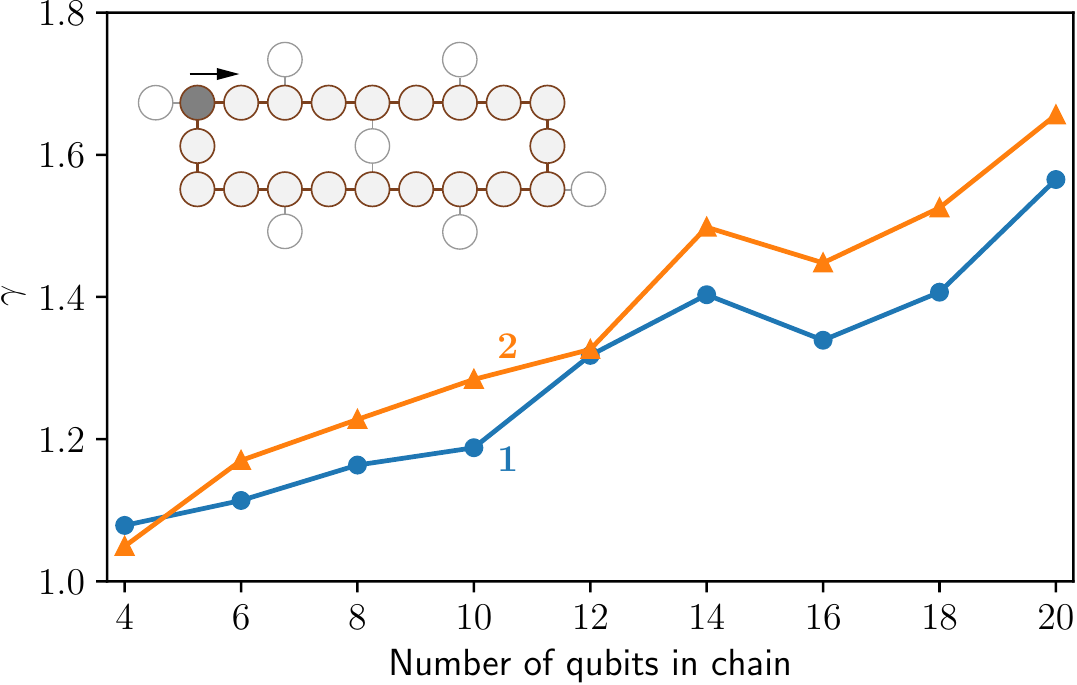}
\caption{ \textbf{Mitigation sampling overhead.}\label{Fig:Gamma}
  Sampling overhead~$\gamma$ for the two Ising layers as a function of
  the number of qubits in the Ising lattice chain. The chain is
  depicted as an inset at the top. The first qubit in the chain is at
  the top-left (dark shading) and the chain proceeds clockwise
  (arrow). Layers 1 and 2 have complementary \textsc{cx} gates on
  alternating pairs of qubits in the chain (see
  Fig.~\ref{Fig:IsingCircuit}).  }\label{fig:mitigation-budget}
\end{figure}

\paragraph*{Discussion and conclusions}
The remarkable accuracy of the error-mitigated observables in Fig.~3
provides strong evidence for the validity of our sparse noise model
and learning protocol. It is nonetheless important to discuss
potential limitations of our method, such as the sampling overhead.
In particular, the variance in the estimator scales with the square of
the sampling overhead factor $\gamma(l)$. This factor depends on the
number of qubits (see Fig.~\ref{Fig:Gamma}) as well as the circuit
depth in terms of the number of layers. We can define a qubit- and
depth-normalized version of the scaling factor, $\bar{\gamma}$, which
allows us to conveniently express the sampling overhead for $l$ layers
on $n$ qubits as $\bar{\gamma}^{nl}$. This normalized parameter itself
can also be used as a metric to represent quantum processor
performance; improvements in the hardware quality are reflected in
lower $\bar{\gamma}$ values, which in turn translate into potentially
dramatic decreases in the sampling overhead (see also
\sref{SVI.3}{Sec:Scaling}).  Our work serves as a powerful example of
how classical run-time overheads can be traded for tremendously
improved quantum computation on noisy processors. However, this also
highlights the importance of improving total circuit execution
time~\cite{wack2021}, which will reduce the practical PEC overhead.

In conclusion, our results demonstrate for the first time a practical
path to extend probabilistic error cancellation to remove the
noise-induced bias from from high-weight observable across the full
circuit (see Fig~\ref{Fig:IsingCircuit}c). This is made possible by
our sparse learning protocol, which provides a versatile noise
representation with rigorous theoretical bounds and near-constant
learning with number of qubits, and an effective noise-inversion
scheme. The accuracy of the model-reconstructed noise-fidelity pairs,
as shown in Fig~\ref{Fig:Fitting}c, and our error mitigated
observables validate the view that the Lindbladian learning is
accurate, efficient, and scalable. We expect our learning protocol to
be a powerful characterization and benchmarking tool, and more-broadly
to enable the study and mitigation of noise in quantum processors at a
new scale.

\paragraph*{Acknowledgments}
The authors thank Sergey Bravyi, Douglas T. McClure, and Jay
M. Gambetta for helpful discussions. Research in characterization and
noise learning was sponsored in part by the Army Research Office and
was accomplished under Grant Number W911NF-21-1-0002. The views and
conclusions contained in this document are those of the authors and
should not be interpreted as representing the official policies,
either expressed or implied, of the Army Research Office or the
U.S. Government. The U.S. Government is authorized to reproduce and
distribute reprints for Government purposes notwithstanding any
copyright notation herein.

\paragraph*{Data availability}
Data are available from the authors on reasonable request.

\bibliographystyle{naturemag}
\bibliography{bibliography,library}

\begin{thebibliography}{10}
\expandafter\ifx\csname url\endcsname\relax
  \def\url#1{\texttt{#1}}\fi
\expandafter\ifx\csname urlprefix\endcsname\relax\def\urlprefix{URL }\fi
\providecommand{\bibinfo}[2]{#2}
\providecommand{\eprint}[2][]{\url{#2}}

\bibitem{zhang2020ibm}
\bibinfo{author}{Zhang, E.~J.} \emph{et~al.}
\newblock \bibinfo{title}{High-fidelity superconducting quantum processors via
  laser-annealing of transmon qubits}.
\newblock \bibinfo{howpublished}{arXiv:2012.08475} (\bibinfo{year}{2020}).

\bibitem{arute2019}
\bibinfo{author}{Arute, F.} \emph{et~al.}
\newblock \bibinfo{title}{Quantum supremacy using a programmable
  superconducting processor}.
\newblock \emph{\bibinfo{journal}{Nature}} \textbf{\bibinfo{volume}{574}},
  \bibinfo{pages}{505--510} (\bibinfo{year}{2019}).

\bibitem{Wu2021}
\bibinfo{author}{Wu, Y.} \emph{et~al.}
\newblock \bibinfo{title}{Strong quantum computational advantage using a
  superconducting quantum processor}.
\newblock \emph{\bibinfo{journal}{Phys. Rev. Lett.}}
  \textbf{\bibinfo{volume}{127}}, \bibinfo{pages}{180501}
  (\bibinfo{year}{2021}).

\bibitem{peruzzo2014variational}
\bibinfo{author}{Peruzzo, A.} \emph{et~al.}
\newblock \bibinfo{title}{A variational eigenvalue solver on a photonic quantum
  processor}.
\newblock \emph{\bibinfo{journal}{Nature communications}}
  \textbf{\bibinfo{volume}{5}}, \bibinfo{pages}{1--7} (\bibinfo{year}{2014}).

\bibitem{kandala2017hardware}
\bibinfo{author}{Kandala, A.} \emph{et~al.}
\newblock \bibinfo{title}{Hardware-efficient variational quantum eigensolver
  for small molecules and quantum magnets}.
\newblock \emph{\bibinfo{journal}{Nature}} \textbf{\bibinfo{volume}{549}},
  \bibinfo{pages}{242--246} (\bibinfo{year}{2017}).

\bibitem{KIM2021WYMa-arXiv}
\bibinfo{author}{Kim, Y.} \emph{et~al.}
\newblock \bibinfo{title}{Scalable error mitigation for noisy quantum circuits
  produces competitive expectation values}.
\newblock \bibinfo{howpublished}{arXiv:2108.09197} (\bibinfo{year}{2021}).

\bibitem{havlivcek2019supervised}
\bibinfo{author}{Havl{\'\i}{\v{c}}ek, V.} \emph{et~al.}
\newblock \bibinfo{title}{Supervised learning with quantum-enhanced feature
  spaces}.
\newblock \emph{\bibinfo{journal}{Nature}} \textbf{\bibinfo{volume}{567}},
  \bibinfo{pages}{209--212} (\bibinfo{year}{2019}).

\bibitem{schuld2019quantum}
\bibinfo{author}{Schuld, M.} \& \bibinfo{author}{Killoran, N.}
\newblock \bibinfo{title}{Quantum machine learning in feature {H}ilbert
  spaces}.
\newblock \emph{\bibinfo{journal}{Physical review letters}}
  \textbf{\bibinfo{volume}{122}}, \bibinfo{pages}{040504}
  (\bibinfo{year}{2019}).

\bibitem{TEM2017BGa}
\bibinfo{author}{Temme, K.}, \bibinfo{author}{Bravyi, S.} \&
  \bibinfo{author}{Gambetta, J.~M.}
\newblock \bibinfo{title}{Error mitigation for short-depth quantum circuits}.
\newblock \emph{\bibinfo{journal}{Physical Review Letters}}
  \textbf{\bibinfo{volume}{119}}, \bibinfo{pages}{180509}
  (\bibinfo{year}{2017}).

\bibitem{LiBenjamin2017}
\bibinfo{author}{Li, Y.} \& \bibinfo{author}{Benjamin, S.~C.}
\newblock \bibinfo{title}{Efficient variational quantum simulator incorporating
  active error minimization}.
\newblock \emph{\bibinfo{journal}{Phys. Rev. X}} \textbf{\bibinfo{volume}{7}},
  \bibinfo{pages}{021050} (\bibinfo{year}{2017}).

\bibitem{kandala2019error}
\bibinfo{author}{Kandala, A.} \emph{et~al.}
\newblock \bibinfo{title}{Error mitigation extends the computational reach of a
  noisy quantum processor}.
\newblock \emph{\bibinfo{journal}{Nature}} \textbf{\bibinfo{volume}{567}},
  \bibinfo{pages}{491--495} (\bibinfo{year}{2019}).

\bibitem{END2018BLa}
\bibinfo{author}{Endo, S.}, \bibinfo{author}{Benjamin, S.~C.} \&
  \bibinfo{author}{Li, Y.}
\newblock \bibinfo{title}{Practical quantum error mitigation for near-future
  applications}.
\newblock \emph{\bibinfo{journal}{Physical Review X}}
  \textbf{\bibinfo{volume}{8}}, \bibinfo{pages}{031027} (\bibinfo{year}{2018}).

\bibitem{GUO2022Ya-arXiv}
\bibinfo{author}{Guo, Y.} \& \bibinfo{author}{Yang, S.}
\newblock \bibinfo{title}{Quantum error mitigation via matrix product
  operators}.
\newblock \emph{\bibinfo{journal}{arXiv preprint arXiv:2201.00752}}
  (\bibinfo{year}{2022}).

\bibitem{PIV2021SWa-arXiv}
\bibinfo{author}{Piveteau, C.}, \bibinfo{author}{Sutter, D.} \&
  \bibinfo{author}{Woerner, S.}
\newblock \bibinfo{title}{Quasiprobability decompositions with reduced sampling
  overhead}.
\newblock \bibinfo{howpublished}{arXiv:2101.09290} (\bibinfo{year}{2021}).

\bibitem{doi:10.7566/JPSJ.90.032001}
\bibinfo{author}{Endo, S.}, \bibinfo{author}{Cai, Z.},
  \bibinfo{author}{Benjamin, S.~C.} \& \bibinfo{author}{Yuan, X.}
\newblock \bibinfo{title}{Hybrid quantum-classical algorithms and quantum error
  mitigation}.
\newblock \emph{\bibinfo{journal}{Journal of the Physical Society of Japan}}
  \textbf{\bibinfo{volume}{90}}, \bibinfo{pages}{032001}
  (\bibinfo{year}{2021}).

\bibitem{piveteau2021error}
\bibinfo{author}{Piveteau, C.}, \bibinfo{author}{Sutter, D.},
  \bibinfo{author}{Bravyi, S.}, \bibinfo{author}{Gambetta, J.~M.} \&
  \bibinfo{author}{Temme, K.}
\newblock \bibinfo{title}{Error mitigation for universal gates on encoded
  qubits}.
\newblock \bibinfo{howpublished}{arXiv:2103.04915} (\bibinfo{year}{2021}).

\bibitem{PhysRevResearch.3.033178}
\bibinfo{author}{Takagi, R.}
\newblock \bibinfo{title}{Optimal resource cost for error mitigation}.
\newblock \emph{\bibinfo{journal}{Phys. Rev. Research}}
  \textbf{\bibinfo{volume}{3}}, \bibinfo{pages}{033178} (\bibinfo{year}{2021}).

\bibitem{takagi2021fundamental}
\bibinfo{author}{Takagi, R.}, \bibinfo{author}{Endo, S.},
  \bibinfo{author}{Minagawa, S.} \& \bibinfo{author}{Gu, M.}
\newblock \bibinfo{title}{Fundamental limits of quantum error mitigation}.
\newblock \bibinfo{howpublished}{arXiv:2109.04457} (\bibinfo{year}{2021}).

\bibitem{doi:10.1126/sciadv.aaw5686}
\bibinfo{author}{Song, C.} \emph{et~al.}
\newblock \bibinfo{title}{Quantum computation with universal error mitigation
  on a superconducting quantum processor}.
\newblock \emph{\bibinfo{journal}{Science Advances}}
  \textbf{\bibinfo{volume}{5}}, \bibinfo{pages}{eaaw5686}
  (\bibinfo{year}{2019}).

\bibitem{Zha2020LZCa}
\bibinfo{author}{Zhang, S.} \emph{et~al.}
\newblock \bibinfo{title}{Error-mitigated quantum gates exceeding physical
  fidelities in a trapped-ion system}.
\newblock \emph{\bibinfo{journal}{Nature Communications}}
  \textbf{\bibinfo{volume}{11}}, \bibinfo{pages}{587} (\bibinfo{year}{2020}).

\bibitem{strikis2021learning}
\bibinfo{author}{Strikis, A.}, \bibinfo{author}{Qin, D.},
  \bibinfo{author}{Chen, Y.}, \bibinfo{author}{Benjamin, S.~C.} \&
  \bibinfo{author}{Li, Y.}
\newblock \bibinfo{title}{Learning-based quantum error mitigation}.
\newblock \emph{\bibinfo{journal}{PRX Quantum}} \textbf{\bibinfo{volume}{2}},
  \bibinfo{pages}{040330} (\bibinfo{year}{2021}).

\bibitem{PhysRevLett.76.722}
\bibinfo{author}{Bennett, C.~H.} \emph{et~al.}
\newblock \bibinfo{title}{Purification of noisy entanglement and faithful
  teleportation via noisy channels}.
\newblock \emph{\bibinfo{journal}{Phys. Rev. Lett.}}
  \textbf{\bibinfo{volume}{76}}, \bibinfo{pages}{722--725}
  (\bibinfo{year}{1996}).
\newblock \urlprefix\url{https://link.aps.org/doi/10.1103/PhysRevLett.76.722}.

\bibitem{knill2004fault}
\bibinfo{author}{Knill, E.}
\newblock \bibinfo{title}{Fault-tolerant postselected quantum computation:
  Threshold analysis}.
\newblock \bibinfo{howpublished}{arXiv:quant-ph/0404104}
  (\bibinfo{year}{2004}).

\bibitem{kern2005quantum}
\bibinfo{author}{Kern, O.}, \bibinfo{author}{Alber, G.} \&
  \bibinfo{author}{Shepelyansky, D.~L.}
\newblock \bibinfo{title}{Quantum error correction of coherent errors by
  randomization}.
\newblock \emph{\bibinfo{journal}{The European Physical Journal D-Atomic,
  Molecular, Optical and Plasma Physics}} \textbf{\bibinfo{volume}{32}},
  \bibinfo{pages}{153--156} (\bibinfo{year}{2005}).

\bibitem{geller2013efficient}
\bibinfo{author}{Geller, M.~R.} \& \bibinfo{author}{Zhou, Z.}
\newblock \bibinfo{title}{Efficient error models for fault-tolerant
  architectures and the {P}auli twirling approximation}.
\newblock \emph{\bibinfo{journal}{Physical Review A}}
  \textbf{\bibinfo{volume}{88}}, \bibinfo{pages}{012314}
  (\bibinfo{year}{2013}).

\bibitem{wallman2016noise}
\bibinfo{author}{Wallman, J.~J.} \& \bibinfo{author}{Emerson, J.}
\newblock \bibinfo{title}{Noise tailoring for scalable quantum computation via
  randomized compiling}.
\newblock \emph{\bibinfo{journal}{Physical Review A}}
  \textbf{\bibinfo{volume}{94}}, \bibinfo{pages}{052325}
  (\bibinfo{year}{2016}).

\bibitem{BRE2002Pa}
\bibinfo{author}{Breuer, H.-P.} \& \bibinfo{author}{Petruccione, F.}
\newblock \emph{\bibinfo{title}{The theory of open quantum systems}}
  (\bibinfo{publisher}{Oxford University Press}, \bibinfo{year}{2002}).

\bibitem{FLA2020Wa}
\bibinfo{author}{Flammia, S.~T.} \& \bibinfo{author}{Wallman, J.~J.}
\newblock \bibinfo{title}{Efficient estimation of {P}auli channels}.
\newblock \emph{\bibinfo{journal}{ACM Transactions on Quantum Computing}}
  \textbf{\bibinfo{volume}{1}} (\bibinfo{year}{2020}).

\bibitem{ERH2019WPMa}
\bibinfo{author}{Erhard, A.} \emph{et~al.}
\newblock \bibinfo{title}{Characterizing large-scale quantum computers via
  cycle benchmarking}.
\newblock \emph{\bibinfo{journal}{Nature Communications}}
  \textbf{\bibinfo{volume}{10}}, \bibinfo{pages}{1--7} (\bibinfo{year}{2019}).

\bibitem{PhysRevX.4.011050}
\bibinfo{author}{Kimmel, S.}, \bibinfo{author}{da~Silva, M.~P.},
  \bibinfo{author}{Ryan, C.~A.}, \bibinfo{author}{Johnson, B.~R.} \&
  \bibinfo{author}{Ohki, T.}
\newblock \bibinfo{title}{Robust extraction of tomographic information via
  randomized benchmarking}.
\newblock \emph{\bibinfo{journal}{Phys. Rev. X}} \textbf{\bibinfo{volume}{4}},
  \bibinfo{pages}{011050} (\bibinfo{year}{2014}).

\bibitem{HEL2019XVWa}
\bibinfo{author}{Helsen, J.}, \bibinfo{author}{Xue, X.},
  \bibinfo{author}{Vandersypen, L. M.~K.} \& \bibinfo{author}{Wehner, S.}
\newblock \bibinfo{title}{A new class of efficient randomized benchmarking
  protocols}.
\newblock \emph{\bibinfo{journal}{npj Quantum Information}}
  \textbf{\bibinfo{volume}{5}}, \bibinfo{pages}{1--9} (\bibinfo{year}{2019}).

\bibitem{harper2020efficient}
\bibinfo{author}{Harper, R.}, \bibinfo{author}{Flammia, S.~T.} \&
  \bibinfo{author}{Wallman, J.~J.}
\newblock \bibinfo{title}{Efficient learning of quantum noise}.
\newblock \emph{\bibinfo{journal}{Nature Physics}}
  \textbf{\bibinfo{volume}{16}}, \bibinfo{pages}{1184--1188}
  (\bibinfo{year}{2020}).

\bibitem{BER2020MTa-arXiv}
\bibinfo{author}{van~den Berg, E.}, \bibinfo{author}{Minev, Z.} \&
  \bibinfo{author}{Temme, K.}
\newblock \bibinfo{title}{Model-free readout-error mitigation for quantum
  expectation values}.
\newblock \bibinfo{howpublished}{arXiv:2012.09738} (\bibinfo{year}{2020}).

\bibitem{wack2021}
\bibinfo{author}{Wack, A.} \emph{et~al.}
\newblock \bibinfo{title}{Quality, speed, and scale: three key attributes to
  measure the performance of near-term quantum computers}
  (\bibinfo{year}{2021}).
\newblock \eprint{2110.14108}.

\bibitem{BRA2021SKMa}
\bibinfo{author}{Bravyi, S.}, \bibinfo{author}{Sheldon, S.},
  \bibinfo{author}{Kandala, A.}, \bibinfo{author}{McKay, D.} \&
  \bibinfo{author}{Gambetta, J.~M.}
\newblock \bibinfo{title}{Mitigating measurement errors in multiqubit
  experiments}.
\newblock \emph{\bibinfo{journal}{Physical Review A}}
  \textbf{\bibinfo{volume}{103}}, \bibinfo{pages}{042605}
  (\bibinfo{year}{2021}).

\bibitem{CHE2020YZFa-arXiv}
\bibinfo{author}{Chen, S.}, \bibinfo{author}{Yu, W.}, \bibinfo{author}{Zeng,
  P.} \& \bibinfo{author}{Flammia, S.~T.}
\newblock \bibinfo{title}{Robust shadow estimation}.
\newblock \bibinfo{howpublished}{arXiv:2011.09636} (\bibinfo{year}{2020}).

\bibitem{AAR2004Ga}
\bibinfo{author}{Aaronson, S.} \& \bibinfo{author}{Gottesman, D.}
\newblock \bibinfo{title}{Improved simulation of stabilizer circuits}.
\newblock \emph{\bibinfo{journal}{Physical Review A}}
  \textbf{\bibinfo{volume}{70}}, \bibinfo{pages}{052328}
  (\bibinfo{year}{2004}).

\bibitem{cai2019constructing}
\bibinfo{author}{Cai, Z.} \& \bibinfo{author}{Benjamin, S.~C.}
\newblock \bibinfo{title}{Constructing smaller {P}auli twirling sets for
  arbitrary error channels}.
\newblock \emph{\bibinfo{journal}{Scientific reports}}
  \textbf{\bibinfo{volume}{9}}, \bibinfo{pages}{1--11} (\bibinfo{year}{2019}).

\bibitem{BRA2020Ma}
\bibinfo{author}{Bravyi, S.} \& \bibinfo{author}{Maslov, D.}
\newblock \bibinfo{title}{Hadamard-free circuits expose the structure of the
  {C}lifford group}.
\newblock \bibinfo{howpublished}{arXiv:2003.09412} (\bibinfo{year}{2020}).

\bibitem{Jurcevic2020}
\bibinfo{author}{Jurcevic, P.} \emph{et~al.}
\newblock \bibinfo{title}{Demonstration of quantum volume 64 on a
  superconducting quantum computing system}.
\newblock \emph{\bibinfo{journal}{Quantum Science and Technology}}
  \textbf{\bibinfo{volume}{6}} (\bibinfo{year}{2020}).

\bibitem{Zhang2020-laser-anneal}
\bibinfo{author}{Zhang, E.~J.} \emph{et~al.}
\newblock \bibinfo{title}{High-fidelity superconducting quantum processors via
  laser-annealing of transmon qubits}.
\newblock \bibinfo{howpublished}{arXiv:2012.08475} (\bibinfo{year}{2020}).

\bibitem{Koch2007}
\bibinfo{author}{Koch, J.} \emph{et~al.}
\newblock \bibinfo{title}{Charge-insensitive qubit design derived from the
  {C}ooper pair box}.
\newblock \emph{\bibinfo{journal}{Physical Review A}}
  \textbf{\bibinfo{volume}{76}}, \bibinfo{pages}{42319} (\bibinfo{year}{2007}).

\bibitem{Motzoi2009}
\bibinfo{author}{Motzoi, F.}, \bibinfo{author}{Gambetta, J.~M.},
  \bibinfo{author}{Rebentrost, P.} \& \bibinfo{author}{Wilhelm, F.~K.}
\newblock \bibinfo{title}{Simple pulses for elimination of leakage in weakly
  nonlinear qubits}.
\newblock \emph{\bibinfo{journal}{Physical Review Letters}}
  \textbf{\bibinfo{volume}{103}}, \bibinfo{pages}{110501}
  (\bibinfo{year}{2009}).

\bibitem{JChow2010-DRAG}
\bibinfo{author}{Chow, J.~M.} \emph{et~al.}
\newblock \bibinfo{title}{Optimized driving of superconducting artificial atoms
  for improved single-qubit gates}.
\newblock \emph{\bibinfo{journal}{Physical Review A}}
  \textbf{\bibinfo{volume}{82}}, \bibinfo{pages}{040305}
  (\bibinfo{year}{2010}).

\bibitem{PhysRevA.96.022330}
\bibinfo{author}{McKay, D.~C.}, \bibinfo{author}{Wood, C.~J.},
  \bibinfo{author}{Sheldon, S.}, \bibinfo{author}{Chow, J.~M.} \&
  \bibinfo{author}{Gambetta, J.~M.}
\newblock \bibinfo{title}{Efficient $z$ gates for quantum computing}.
\newblock \emph{\bibinfo{journal}{Phys. Rev. A}} \textbf{\bibinfo{volume}{96}},
  \bibinfo{pages}{022330} (\bibinfo{year}{2017}).
\newblock \urlprefix\url{https://link.aps.org/doi/10.1103/PhysRevA.96.022330}.

\bibitem{Paraoanu2006}
\bibinfo{author}{Paraoanu, G.~S.}
\newblock \bibinfo{title}{Microwave-induced coupling of superconducting
  qubits}.
\newblock \emph{\bibinfo{journal}{Physical Review B}}
  \textbf{\bibinfo{volume}{74}}, \bibinfo{pages}{140504}
  (\bibinfo{year}{2006}).

\bibitem{Chow2011}
\bibinfo{author}{Chow, J.~M.} \emph{et~al.}
\newblock \bibinfo{title}{Simple all-microwave entangling gate for
  fixed-frequency superconducting qubits}.
\newblock \emph{\bibinfo{journal}{Physical Review Letters}}
  \textbf{\bibinfo{volume}{107}}, \bibinfo{pages}{080502}
  (\bibinfo{year}{2011}).

\bibitem{Klimov2018}
\bibinfo{author}{Klimov, P.~V.} \emph{et~al.}
\newblock \bibinfo{title}{Fluctuations of energy-relaxation times in
  superconducting qubits}.
\newblock \emph{\bibinfo{journal}{Physical Review Letters}}
  \textbf{\bibinfo{volume}{121}} (\bibinfo{year}{2018}).

\bibitem{Carroll2021}
\bibinfo{author}{Carroll, M.}, \bibinfo{author}{Rosenblatt, S.},
  \bibinfo{author}{Jurcevic, P.}, \bibinfo{author}{Lauer, I.} \&
  \bibinfo{author}{Kandala, A.}
\newblock \bibinfo{title}{Dynamics of superconducting qubit relaxation times}.
\newblock \bibinfo{howpublished}{arXiv:2105.15201} (\bibinfo{year}{2021}).

\bibitem{Viola1999}
\bibinfo{author}{Viola, L.}, \bibinfo{author}{Knill, E.} \&
  \bibinfo{author}{Lloyd, S.}
\newblock \bibinfo{title}{Dynamical decoupling of open quantum systems}.
\newblock \emph{\bibinfo{journal}{Physical Review Letters}}
  \textbf{\bibinfo{volume}{82}}, \bibinfo{pages}{2417} (\bibinfo{year}{1999}).

\bibitem{Zanardi1999}
\bibinfo{author}{Zanardi, P.}
\newblock \bibinfo{title}{Symmetrizing evolutions}.
\newblock \emph{\bibinfo{journal}{Physics Letters A}}
  \textbf{\bibinfo{volume}{258}}, \bibinfo{pages}{77--82}
  (\bibinfo{year}{1999}).

\bibitem{Carr1954}
\bibinfo{author}{Carr, H.~Y.} \& \bibinfo{author}{Purcell, E.~M.}
\newblock \bibinfo{title}{Effects of diffusion on free precession in nuclear
  magnetic resonance experiments}.
\newblock \emph{\bibinfo{journal}{Physical Review}}
  \textbf{\bibinfo{volume}{94}}, \bibinfo{pages}{630} (\bibinfo{year}{1954}).

\bibitem{Meiboom1958}
\bibinfo{author}{Meiboom, S.} \& \bibinfo{author}{Gill, D.}
\newblock \bibinfo{title}{Modified spin-echo method for measuring nuclear
  relaxation times}.
\newblock \emph{\bibinfo{journal}{RScI}} \textbf{\bibinfo{volume}{29}},
  \bibinfo{pages}{688--691} (\bibinfo{year}{1958}).

\end{thebibliography}

\clearpage
\newpage
\widetext

\begin{center}
  \textbf{\large Supplementary Information:\\ \vspace{0.5cm}
    Probabilistic error cancellation with sparse Pauli-Lindblad
    models\\ on noisy quantum processors}
\end{center}

\setcounter{equation}{0}
\setcounter{figure}{0}
\setcounter{table}{0}
\setcounter{page}{1}
\setcounter{section}{0}
\makeatletter
\renewcommand{\theequation}{S\arabic{equation}}
\renewcommand{\thefigure}{S\arabic{figure}}
\renewcommand{\thesection}{S\Roman{section}}

\section{Summary of the method}\label{Sec:MethodSummary}

\fbox{\begin{minipage}{0.98\linewidth}
\raggedright
Input: the layer's qubits and gates, and processor topology

\bigskip
{\bf{Model definition}}\\
\begin{itemize}[label=\textbullet,leftmargin=*,partopsep=-0.2em]
\item Using the qubit and topology information, define model Paulis $\mathcal{K}$. This set contains all weight-one Paulis supported on the model qubits as well as all weight-two Paulis supported on selected pairs of connected qubits
\end{itemize}

\medskip
{\bf{Preparation for model fitting}}\\
\begin{itemize}[label=\textbullet,leftmargin=*,partopsep=-0.2em]
\item Define the measurement bases and determine the fidelities $\mathcal{B}$ needed to fit the model
(Section~\ref{Sec:Bases})\\
\end{itemize}

\medskip
{\bf{Fidelity estimation}}\\
\begin{itemize}[label=\textbullet,leftmargin=*,partopsep=-0.2em,itemsep=-0.15em]
\item For each basis, run benchmark circuits at different depths
\item Fit data with exponentially decaying curves to estimate
individual fidelities or fidelity-pair products
\item Complete the fidelities using unit-depth benchmark circuits or symmetry assumptions
\item Form vector $\hat{f}$ of estimated fidelities
\end{itemize}

\medskip
{\bf{Model fitting}}\\
\begin{itemize}[label=\textbullet,leftmargin=*,partopsep=-0.2em,itemsep=-0.15em]
\item Form matrix $M = \mathcal{M}(\mathcal{B},\mathcal{K})$ (see Eq.~\eqref{Eq:MFT} in Section~\ref{subsec:learning})
\item Set model parameters to the solution of the following problem (see Eq.~\eqref{Eq:NNLS} in Section~\ref{subsec:learning})
\[
\mathop{\mathrm{minimize}}_{\lambda\geq 0}
\quad {\textstyle{\frac{1}{2}}}
\Vert M\lambda + \log(\hat{f})/2\Vert_2^2
\]
\end{itemize}

\smallskip
{\bf{Mitigation}}\\
\begin{itemize}[label=\textbullet,leftmargin=*,partopsep=-0.2em,itemsep=-0.15em]
\item Given a circuit that contains the layer of gates
\item Generate multiple circuit instances with each layer preceded by a Pauli sampled from the quasi-probability distribution and with a Pauli twirled instance of the layer
\item Estimate the expectation of the observables of interest and scale by $\gamma$
\end{itemize}

\medskip
Note: for layers of two-qubit gates there are two lists of fidelity terms $\mathcal{B}_1$ and $\mathcal{B}_2$. In this case, we replace $M$ by $\mathcal{M}(\mathcal{B}_1,\mathcal{K}) + \mathcal{M}(\mathcal{B}_2,\mathcal{K})$. The elements in vector $\hat{f}$ then represents products of two fidelities. See Section~\ref{Sec:FullRankM2} for more details.
\end{minipage}}

\section{Background and review}\label{Sec:Preliminaries}

Most quantum applications combine classical computing with the
execution of one or more sets of quantum circuits on the quantum
processor. Each circuit execution can roughly be thought of as
consisting of three phases: (i) initialization of the quantum
processor to the $\ket{0}$ ground state; (ii) application of the gates
that make up the quantum circuit; and (iii) measurement of the qubits
of interest. For each circuit, this process is repeated multiple times
to obtain the desired measurement statistics.  The process of running
a quantum circuit is affected by different sources of noise. The noise
associated with the first and last stage is usually combined into
so-called state-preparation and measurement (SPAM) error. There are
quite a few algorithms for dealing with this type of noise, see for
instance~\cite{BRA2021SKMa,CHE2020YZFa-arXiv,BER2020MTa-arXiv} for
different algorithms and further references. Noise in the second stage
consists of global background noise, such as dephasing and
decoherence, and noise associated with the application of one or more
gates, including cross-talk. Here, we focus on the noise associated
with the application of a single operation on one or more qubits. It
often helps to write a noisy operation $\tilde{\cal U}$ as a
combination of a noise channel $\tilde{\Lambda}$ and the ideal
operation ${\cal U}(\rho) = U \rho U^\dagger$:
\[
\tilde{\cal U} = {\cal U} \circ \tilde{\Lambda}.
\]
In the remainder of this section we look at techniques for shaping
general noise channels $\tilde{\Lambda}$ into more structured and
therefore more manageable channels, as well as ways of inverting these
new channels.

\subsection{Noise channel simplification} \label{Sec:Twirling} There
are many ways to characterize or represent noise channels. Suppose
that $\tilde{\Lambda}$ is a noise channel that applies to $n$-qubits
and denote by $\{P_i\}_{i=0}^{4^{n}-1}$ the Pauli basis for the
corresponding Hilbert space. Then we can express $\tilde{\Lambda}$ in
terms of the Pauli transfer matrix $\mathcal{T}_{\tilde{\Lambda}}$
with entries
\[
T_{\tilde{\Lambda}}[a,b] = \frac{1}{2^n}\Tr\left[P_a^{\dag}(\tilde{\Lambda}(P_b))\right].
\]
In general, this will be a dense matrix, and working with the explicit
form with $\mathcal{O}(4^{2n})$ nonzero coefficients therefore quickly
becomes intractable, certainly because all coefficients need to be
estimated in tomography. However, it has been shown
\cite{knill2004fault,kern2005quantum,geller2013efficient,wallman2016noise}
that conjugation of the noise channel with randomly sampled operators
from the Pauli group results in an averaged channel
\begin{equation}\label{Eq:PauliTwirl}
\Lambda(\cdot) = \mathbb{E}_{i}\big[ P_i^{\dag}\tilde{\Lambda}(P_i^\dagger \cdot P_i)P_i^\dagger\big],
\end{equation}
with a diagonal transfer matrix
$T_{\Lambda}[a,b] =\delta_{a,b}T_{\tilde{\Lambda}}[a,b]$.  The
averaging operation in~\eqref{Eq:PauliTwirl} is called a Pauli
twirl. In addition to a much more compact representation, this
transfer matrix is also easily inverted. The quantities on the
diagonal of the transfer matrix represent the Pauli fidelities
$f_a = 2^{-n} \Tr\left[P_a (\tilde{\Lambda}(P_a))\right]$. We can use
the symplectic Walsh-Hadamard transformation to convert these
fidelities \cite{FLA2020Wa} to coefficients
\begin{equation}\label{Eq:WalshHadamard}
c_b = \frac{1}{2^n}\sum_{a} (-1)^{\langle a,b\rangle_{sp}}f_a,
\end{equation}
where $\langle a,b\rangle_{sp}$ denotes the symplectic inner product
of Paulis $P_a$ and $P_b$, which is zero if the Paulis commute (that
is $[P_a,P_b] = P_aP_b - P_bP_a = 0$), and one otherwise. These
coefficients allow us to then rewrite the noise operator applied to
the density matrix $\rho$ as a Pauli channel:
\begin{equation}\label{Eq:PauliChannel}
\Lambda (\rho) = \sum_{i} c_i P_i \rho P_i^{\dag},
\end{equation}
where the vector $c = [c_i]$ of all coefficients represents a
distribution: $c_i \geq 0$ and $\sum_i c_i = 1$. The Pauli twirl can
be approximated by generating multiple instances of the appropriate
quantum circuit, each with a Pauli term $P_a$ sampled uniformly at
random from the $n$-Pauli matrices. This may seem difficult, since, in
general, we are not given an isolated noise channel but rather have
access only to a noisy gate
$\tilde{\cal U} = {\cal U} \circ \tilde{\Lambda}$. In this case we
just want to twirl $\tilde{\Lambda}$, which is possible by pushing the
Pauli through the $U$ gate, when this is a Clifford gate
\cite{knill2004fault,kern2005quantum,geller2013efficient,wallman2016noise}. To
see how this works, observe that
\[
U P_a^{\dag} \tilde{\Lambda}(P_a \cdot P_a^{\dag}) P_a U^\dag  = 
U P_a^{\dag} (U^{\dag} U) \tilde{\Lambda}(P_a \cdot P_a^{\dag}) (U^{\dag}U) P_a U^\dag =
P_{a^{U}} \; \tilde{\cal U}(P_a \cdot P_a^{\dag}) P_{a^{U}}^\dag.
\]
When $U$ is a Clifford operator, it is well known that the conjugation
of one Pauli operator results in another Pauli, namely
$U P_a U^{\dag} = P_{a^{U}}$. That means that Pauli twirling for
Clifford operators $U$ can be conveniently implemented by sampling a
random $P_a$ term and applying this terms and its conjugate under $U$
to the circuit, around the noisy gate to get $P_{a^{U}}
\tilde{U}P_a$. Pauli operators themselves are formed as the direct
product of Pauli matrices $X$, $Y$, and $Z$ and the two-by-two
identity matrix, and $n$-Paulis can therefore be efficiently
represented by a string $\{I,X,Y,Z\}^n$ of length $n$, or in
symplectic form as a binary vector of length $2n$. The latter
representation enables a computationally efficient way of conjugating
the Pauli operator by any Clifford operator~\cite{AAR2004Ga}, and
therefore allows us to efficiently find $P_{a^{U}}$ for a given
$P_a$. Since Pauli operators can be implemented using single-qubit
gates, we can often simplify the circuits of twirled gates. Any
single-qubit directly preceding or following the gate can be combined
with the respective single-qubit gate of operators $P_a$ or
$P_{a^U}$. This can reduce or even completely eliminate the circuit
overhead of the Pauli twirl.

Twirling is possible over groups
\cite{wallman2016noise,cai2019constructing} other than the Pauli
group. In general, given a group $\mathcal{G}$, we can define the
twirled noise channel
\[
\Lambda_{\mathcal{G}} :=
\frac{1}{\vert\mathcal{G}\vert}\sum_{G\in\mathcal{G}} G^{-1} \circ \tilde{\Lambda} \circ G.
\]
When $\mathcal{G}$ is the Clifford group, or any other two-design, the
resulting transfer matrix $T_{\Lambda_{\mathcal{G}}}$ is not only
diagonal, but such that the fidelities for Pauli operators other than
the identity (which is always one) are all equal. This means that the
new twirled noise channel can be described by only a single
parameter. In case both $U$ and $G$ are elements of the Clifford
group, it holds that the conjugated operator $U G U^{\dag}$ remains an
element of the Clifford group. As in the Pauli case, one can
efficiently represent elements from the Clifford group and compute the
conjugation. However, the problem is that the circuit implementation
of a Clifford gate can have a significant depth~\cite{BRA2020Ma}, and
may therefore introduce an unacceptable amount of noise itself.

\subsection{Quasi-probabilistic noise inversion}\label{Sec:QPNoiseInversion}
The probabilistic error cancellation method as given
in~\cite{TEM2017BGa,END2018BLa} asks that for the general procedure an
ideal ${\cal U}$ operation is expanded into a set of noisy operators
$\{\tilde{\cal U}_i\}_{i}$ that can be implemented on the quantum
hardware. However, we are in the particular situation that our noisy
operations for each layer are exactly of the form $\tilde{\cal U} =
{\cal U} \circ \Lambda$, where $\Lambda$ is a Pauli channel. In
particular, as explained in the previous section, the general
procedure considered in \cite{TEM2017BGa,END2018BLa} can be reduced to
this special case after Pauli twirls have been applied. We are then in
the setting where it is sufficient to only focus on the noise of the
Pauli noise channel $\Lambda$ and implement its inverse $\Lambda^{-1}$
in experiment. \\

When represented as the diagonal Pauli transfer matrix $T_{\Lambda}$
it is clear that the inverse should have a Pauli transfer matrix given
by $T_{\Lambda}^{-1} = \mbox{diag}(f_a^{-1})$. That is, a diagonal
matrix with the inverse fidelities on the diagonal. If then follows
from the Walsh-Hadamard transform in~\eqref{Eq:WalshHadamard}, that we
would like to have a Pauli channel with coefficients
\[
c_b^{\mathrm{inv}} = \frac{1}{2^n}\sum_{a} (-1)^{\langle a,b\rangle_{sp}}\frac{1}{f_a}.
\]
However, except for the case where all fidelities are one, the
resulting coefficients will contain negative values, and therefore
does not represent a physical Pauli channel. The method proposed
in~\cite{TEM2017BGa} addresses this as follows. We can first rewrite
the desired channel as
\[
\sum_{i}c_i^{\mathrm{inv}}P_i\rho P_i^{\dag}
=
\sum_{i}\mathrm{sgn}(c_i^{\mathrm{inv}})\vert c_i^{\mathrm{inv}}\vert
  P_i\rho P_i^{\dag}
=
\gamma
\sum_{i}\mathrm{sgn}(c_i^{\mathrm{inv}})\gamma^{-1}\vert c_i^{\mathrm{inv}}\vert
  P_i\rho P_i^{\dag}\;,
\]
where $\mathrm{sgn}$ denotes the signum function and
$\gamma := \sum_i \vert c_i^{\mathrm{inv}}\vert$. The transformed
coefficients
$\hat{c}_i^{\mathrm{inv}} = \gamma^{-1}\vert c_i^{\mathrm{inv}}\vert$
are clearly nonnegative and by definition of $\gamma$, sum up to one,
and therefore represent a distribution. In order to implement noise
inversion the algorithm proceeds as follows. First, a random Pauli $a$
is sampled according to the distribution
$\hat{c}_i^{\mathrm{inv}}$. We store the sign
$\mathrm{sgn}(c_a^{\mathrm{inv}})$ and form a circuit with that
includes the sampled $P_a$ prior to the noise channel. We then
estimate the expectation values of any desirable observable and scale
it by the sign as well as by $\gamma$. When computed over multiple
random samples $P_a$, the empirical mean value of the scaled
observables then provide an unbiased estimator of the ideal
expectation value that would result from a noiseless circuit. The cost
of sampling from the quasi-probabilistic distribution is an increase
in variance in the expected value by a factor of
$\mathcal{O}(\gamma^2)$.

\subsection{Scalable noise models}
While working with explicit Pauli channels is convenient, they do
require the storage and processing of $4^n$ coefficients for $n$
qubits in general. In order to reduce the model complexity and
maintain efficiency, the work presented in~\cite{FLA2020Wa} considers
Pauli channels with bounded degree correlations. The probability
distribution representing the Pauli channel in this case is factored
based on the individual terms and such that certain terms are
conditionally independent. The resulting probabilities are in Gibbs
form and can be reconstructed from locally measured patches at the
expense of computing the full partition function of the
distribution. While the resulting structure can help reduce the noise
channel representation, application of the model to noise mitigation
and computing or sampling from the noise inverse remains
challenging. We therefore focus on a Pauli model that retains the
local correlation but is better suited to the probabilistic
error-cancellation protocol.

\section{Pauli-Lindblad noise model}\label{Sec:ErrorModel}
We propose the use of a locally correlated noise model that is
motivated by the continuous-time Markovian dynamics of open quantum
systems. These dynamics can be described by a  quantum master
equation. When appropriately rewritten in diagonal form, this can be
expressed as the Lindblad equation~\cite{BRE2002Pa}
$\frac{d}{dt}\rho(t) = \mathcal{L}\rho(t)$, where $\mathcal{L}(\rho) =
-i[H,\rho] + \sum_{k} \left(A_k\rho A_k^{\dag} -
  \frac{1}{2}A_k^{\dag}A_k\rho - \frac{1}{2}\rho
  A_k^{\dag}A_k\right)$. For a general Lindbladian, the unitary part
of the dynamics is described by the Hamiltonian $H$, and $A_k$ are the
Lindblad operators. The resulting channel after evolution time $t$ is
then the formal exponential $T_t = \mbox{exp}\left[{\cal
    L}t\right]$.\\ 

The Pauli-Lindblad noise model we consider contains no internal
Hamiltonian dynamics and we therefore do not consider a Hamiltonian
contribution. We want to generate a Pauli channel and therefore take
$A_k = \sqrt{\lambda_k} P_k$ for a set of Pauli operators
$\{P_k\}_{k \in {\cal K}}$ we enumerate with an index set ${\cal K}$:
\begin{equation}\label{Eq:sparse-Lindbladian}
\mathcal{L}(\rho) = \sum_{k \in {\cal K}}\lambda_k\left(P_k \rho P_k- \rho\right).
\end{equation}
In particular we assume that $|{\cal K}| \ll 4^n-1$ is a set that is
only of polynomial size in the number of qubits. This means the model
is determined by a set of non-negative numbers $\lambda_k \geq 0$ for
$k \in {\cal K}$. We will discuss the choice of this set for our
experimentally considered set up in section
\ref{Sec:SparseNoiseModel}. In general, the set can be chosen as to
account for the correlations that are present in the quantum hardware
of interest. Since we are only interested in a particular noise model,
we set the dynamics to be at time $t=1$ and directly define the sparse
Pauli-noise model as $\Lambda = \mbox{exp}\left[{\cal L}\right]$. \\

When working in the matrix representation expressing the Lindbladian
$\mathcal{L}$, which we denote by
$\mbox{vec}\left[\mathcal{L}\right]$, it follows that the sparse noise
model $\Lambda$ is given by the conventional matrix exponential
\begin{equation}
    \mbox{vec}\left[\Lambda \right] = e^{\mbox{vec}\left[\mathcal{L}\right]}.
\end{equation}
Here, the matrix representation of the Lindbladian is defined as
\begin{equation}\label{Eq:SuperLindbladian}
\mbox{vec}\left[\mathcal{L}\right] = \sum_{k \in {\cal K}}\lambda_k\left(P_k \otimes P_k^T
  - I\otimes I\right).
\end{equation}
Note that for any two Pauli operators $P$ and $Q$ it holds that
\[
(P\otimes P^T)(Q\otimes Q^T) = (PQ\otimes(QP)^T) = ((\pm PQ)\otimes
(\pm (PQ)^T) = (QP\otimes (PQ)^T) = (Q\otimes Q^T)(P\otimes P^T).
\]
This shows that the terms in~\eqref{Eq:SuperLindbladian} commute, and
also expresses the fact that Pauli channels commute. Given the
commutativity of the terms, we can write the time-evolution operator
as
\begin{equation}\label{Eq:SuperLindbladian2}
\mbox{vec}\left[\Lambda \right] = \prod_{k}e^{-\lambda_k}e^{\lambda_k P_k \otimes P_k^T}
\end{equation}
Exponentiation with a Pauli operator can we written as
\begin{align}
e^{\lambda (P\otimes P^T)}
& = \cosh(\lambda)(I\otimes I) + \sinh(\lambda)(P\otimes P^T)\notag\\
& = \frac{e^{\lambda} + e^{-\lambda}}{2}(I\otimes I) + 
\frac{e^{\lambda} - e^{-\lambda}}{2}(P\otimes P^T)\label{Eq:ExpGammaP}
\end{align}
Combining~\eqref{Eq:SuperLindbladian2} and~\eqref{Eq:ExpGammaP} and we
obtain the final form of the noise model as
\begin{equation}\label{Eq:FinalForm}
\Lambda(\rho) = \prod_{k}\left(w_k\cdot +
  (1-w_k)P_k\cdot P_k^{\dag}
\right)\rho,
\end{equation}
where $w_k = (1+e^{-2\lambda_k})/2$. Given the time evolution of
states in~\eqref{Eq:FinalForm}, it is natural to ask what effect it
has on Pauli operators. The fidelity $f_a$ of a Pauli operator $P_a$
can be expressed as
\begin{align}
f_a
&=
\frac{1}{2^n}\Tr\left[P_{a}^{\dag}\Lambda(P_a)\right]
=
\frac{1}{2^n}\Tr\Big[P_{a}^{\dag}P_a\prod_{\{a,k\}=0}(2w_k-1)\Big]\notag\\
&= \prod_{\langle a,k\rangle_{sp} = 1}(2w_{k}-1)
= \prod_{\langle a,k\rangle_{sp} = 1 }e^{-2\lambda_k}
=\exp\left(-2\sum_{k\in\mathcal{K}}
  \lambda_{k}\langle a,k\rangle_{sp}\right).
\label{Eq:FidelityFromGamma}
\end{align}
We can define a matrix $M$ with entries
$M_{a,b} = \langle a,b\rangle_{sp}$, such that $M_{a,b}=0$ if Paulis
$P_a$ and $P_b$ commute, and $M_{a,b} = 1$ otherwise. Denoting by $f$
and $\lambda$ the full vector of Pauli fidelities and model
coefficients, respectively, we can compactly
express~\eqref{Eq:FidelityFromGamma} as
\begin{equation}\label{Eq:Mf}
-\log(f)/2 = M\lambda\;,
\end{equation}
where the logarithm is applied elementwise. Finally, we observe that
the coefficients $w_k$ in~\eqref{Eq:FinalForm} are all nonnegative,
and that the fidelity for the identity operator is always one, since
all Pauli terms commute with the identity. It follows
that~\eqref{Eq:FinalForm} is a valid Pauli channel for all
$\lambda \geq 0$.

\subsection{Channel operations}\label{Sec:ChannelOperations}

The Lindbladian noise channel in~\eqref{Eq:FinalForm} has some useful
properties. First, changing the evolution time amounts to scaling
$\lambda$. Second, given two separate noise channels with parameters
$\lambda_1$ and $\lambda_2$, it follows from multiplicativity of
fidelities under successive Pauli channels that
\[
-\log(f_1f_2)/2 = -\log(f_1)/2 -\log(f_2)/2 = M\lambda_1 + M\lambda_2 = M(\lambda_1+\lambda_2)\;, 
\]
which shows that combination of channels amounts to addition of the
coefficients. The inverse of a channel is characterized by inverse
fidelities, and it directly follows from
\[
-\log(1/f)/2 = \log(f)/2 = -M\lambda = M(-\lambda)
\]
that the inverse noise model is obtained by simply negating the
coefficients.

\subsection{Sparse models}\label{Sec:SparseNoiseModel}
Quantum circuits are generally transpiled into native single- and
two-qubit gates applied to individual qubits or pairs of qubits that
are topologically connected, that is, neighboring qubits. The noise
associated with the application of these gates can be expected to have
limited range and therefore be negligible beyond some local
neighborhood around the qubits to which the operation is applied. This
suggests it may not be necessary to include all possible Pauli terms
in~\eqref{Eq:FinalForm}, and motivates us to simplify the model and
include only a select subset of Pauli terms $P_k$. For instance, we
could include those Paulis that contain only a single non-identity
term, or two such terms on neighboring qubits. Such sparse models can
be represented far more efficiently than their full counterpart. For a
linear topology of $n$ qubits, the number of coefficients $\lambda$
reduces from $4^n-1$ to a mere $3n + 9(n-1)$, which is clearly far
more scalable in terms of the number of qubits.

\subsection{Learning the model}\label{subsec:learning}
In order to characterize a noise channel we need to find model
coefficients that best explain the experimental data. For the proposed
noise model, a practical way of determining the model coefficients
follows directly from equation~\eqref{Eq:Mf}. We first form a vector
$f$ of fidelity for Pauli terms in some list $B$. Given model Paulis
$K$ we can form the matrix
\begin{equation}\label{Eq:MFT}
M = \mathcal{M}(B,K)\quad\mbox{such that}\quad
M_{i,j} = \begin{cases} 0 & [B_i,K_j] = 0\\
1 & [B_i,K_j] \neq 0.
\end{cases}
\end{equation}
We then find nonnegative coefficients $\lambda$ such that $M\lambda$
is as close to $-\log(f)$ as possible. When measuring in Euclidean
distance (other norms could be used here as well), this can be
conveniently formulated as a nonnegative least-squares problem:
\begin{equation}\label{Eq:NNLS}
\lambda(f) := \mathop{\mathrm{argmin}}_{\lambda \geq 0}\quad \half\Vert
M\lambda + \log(f)/2\Vert_2^2\;.
\end{equation}
The columns in matrix $M$ correspond to the Pauli terms included in
the model, denoted by $\mathcal{K}$, whereas the rows could be any of
the $4^n$ Pauli operators, although we generally omit the row for the
identity operator since all its entries as well as the log fidelity
are zero. In case of the sparse noise model described in
Section~\ref{Sec:SparseNoiseModel}, the number of model Paulis is
relatively small and matrix $M$ will have far more rows than
columns. The model coefficients are well defined if the solution
of~\eqref{Eq:NNLS} is unique, which is guaranteed whenever
$M(\mathcal{B},\mathcal{K})$ has full column rank.

\subsection{Variance in mitigated observable}

We now consider the variance in the error-mitigated
observable. Starting with binomial distribution with $p(1) = p$ and
$p(0) = q = 1-p$ and $n$ trials we have mean $np$ and variance
$npq$. For the estimation of observables we sample from $\pm 1$ which
means scaling by two and subtracting one per trial, which gives mean
$2np-n$ and variance $2^2npq$. In order to obtain the observable we
divide by the number of trials and scale by $\gamma$, which leads to
an updated mean of $2p\gamma - \gamma = (2p-1)\gamma$ and variance
$(\gamma/n)^2(4npq) = (4\gamma^2/n)pq$. The ideal observable or
fidelity $f$ is equal to the mean, namely $(2p-1)\gamma =
f$. Rewriting gives
\[
p = \frac{1}{2}\left(\frac{f}{\gamma}+1\right),\qquad
q = 1-p = \frac{1}{2}\left(1 - \frac{f}{\gamma}\right),
\qquad
pq = \frac{1}{4}\left(1 - \frac{f^2}{\gamma^2}\right).
\]
Using this we obtain variance
\[
\frac{4\gamma^2}{n}pq = \frac{1}{n}\left(\gamma^2-f^2\right).
\]
In order to keep the variance of the estimator fixed we therefore need
to scale $n$ proportional to $\gamma^2$.

\section{Noise learning for single-qubit gates with crosstalk}\label{Sec:Benchmarking}
The proposed noise model readily applies to benchmarking and
mitigating the noise in layers of single-qubit gates. A common
assumption in this setting is that, for a given qubit, the noise is
independent of the gate that is applied.  Here we refine this and
assume that the noise channel associated with a layer of single-qubit
operations depends only on the particular subset of qubits that
contain a gate. The motivation for this is that application of a gate
to a qubit can result in crosstalk, which depends in part on qubit
connectivity as well as the presence or absence of gates on
neighboring qubits.
\begin{figure}[t]
\centering
\includegraphics[width=0.78\textwidth]{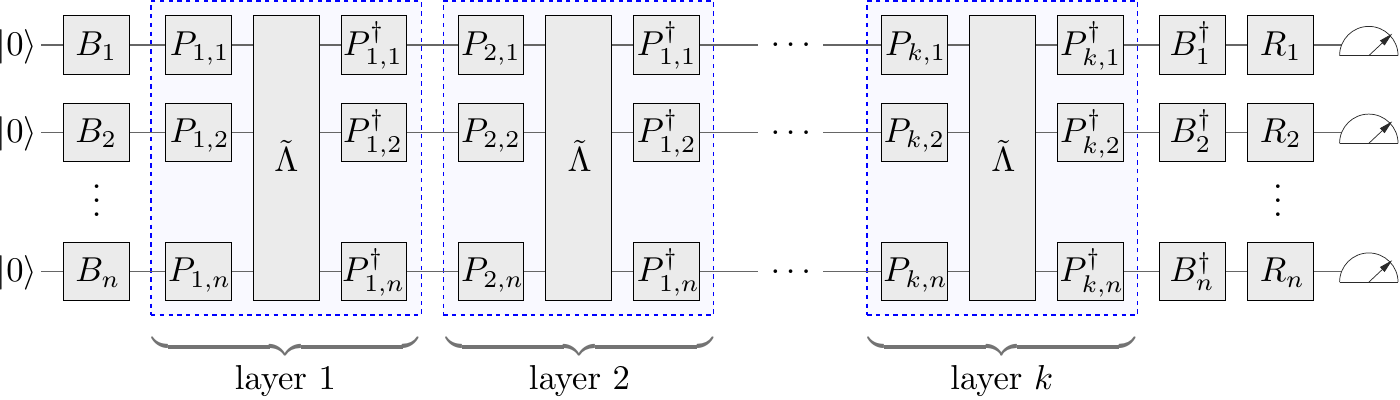}
\caption{Benchmark circuit.}\label{Fig:Benchmark}    
\end{figure}
The estimation of the fidelities that will be needed to reconstruct
the sparse noise model uses a slightly simplified version of the
algorithm proposed in~\cite{FLA2020Wa} and considers the setting where
gates are applied to all qubits. The benchmark circuits are of the
form shown in Figure~\ref{Fig:Benchmark}, where the single-qubit gates
shown are all noiseless. Although this may seems to contradict the
assumption that we only have access to noisy gates, note that the
noise is assumed to be independent of the unitaries applied to each
qubit, which therefore allows us to apply as many consecutive
unitaries as we like with only a single noise term by simply
multiplying the individual unitaries into a single final unitary and
applying the noisy version of this final unitary. That means that
successive gates $B_1$ and $P_{1,1}$ will be combined into some
unitary $U_1$, which is then applied to the circuit along with the
associated noise channel. For convenience we assume that the noise
following the $R_i$ gates appears as readout errors. Having convinced
ourselves that we can actually implement the circuits from
Figure~\ref{Fig:Benchmark}, we now describe the different
components. Gates $B_i$ and $B_i^{\dag}$ implement basis changes
between different Pauli bases. Each cycle consists of the noise
channel $\hat{\Lambda}$ conjugated by random Pauli terms
$P_{i,j}$. When averaged over all possible Pauli terms, this
implements a Pauli twirl of the noise channel $\tilde{\Lambda}$,
resulting in a Pauli channel $\Lambda$, which has a diagonal
Pauli-transfer matrix with fidelity $f_i $ for Pauli $P_i$. The final
gates $R_i$ are sampled uniformly at random from $\{I,X\}$ and are
used in combination with classical post-processing to diagonalize the
readout error~\cite{CHE2020YZFa-arXiv,BER2020MTa-arXiv}. In order to
determine the fidelity $f_i$ we start with the Pauli-Z term $P_{z(i)}$
that has the same support as $P_i$. The initial state $\tilde{\rho}$
is a noisy version of $\ket{0}\bra{0}$ with associated
state-preparation fidelity $s_i = \Tr[P_{z(i)}\tilde{\rho}]$. The
basis change gates $B_i$ change $P_{z(i)}$ to $P_i$, and we then apply
$k$ cycles, each contributing a fidelity term $f_i$. As a result of
diagonalization of the readout errors, we can define a readout
fidelity $r_i = r_{z(i)}$. Overall, this means that the expected value
for the observable $P_i$, measured through observable $P_{z(i)}$ using
bases changes, is given by $(s_i r_i) f_i^k = \alpha_i f_i^k$.
Dividing the estimates obtained for $k$ and zero cycles, then gives an
unbiased estimate of $f^k$, free of state-preparation and readout
errors.

Now that we have access to estimates of individual fidelities of
$\Lambda$, we would like to fit a model that can capture
crosstalk. For this we propose to use a two-local Lindblad model, with
coefficients terms $\mathcal{K}$ given by the union of all unit-weight
Paulis and all weight-two Paulis whose support corresponds to
connected qubits. Given these coefficients we need to determine the
set $\mathcal{B}$ of Paulis for which we estimate the fidelity, such
that the matrix $M(\mathcal{B},\mathcal{K})$ is full rank. For this we
use the result from Section~\ref{Sec:FullRankM}, which shows that
choosing $\mathcal{B} = \mathcal{K}$ results in a square invertible
$M$. With this, the next step is to estimate the fidelities and fit
the model. This is where sampling error comes in: we can only estimate
$\alpha_i f_i^k$ up to an additive error $\epsilon$ that decreases
with the number of circuit instances. In
Section~\ref{Sec:ErrorAnalysis} we therefore study the sample
complexity and the final accuracy of the noise model and its inverse.
For a given circuit it is generally possible to estimate a number of
fidelities. In section~\ref{Sec:Bases} we show, under mild conditions
on the qubit topology, that it suffices to measure in nine different
bases.

\subsection{Fidelities for model fitting}\label{Sec:FullRankM}

We can represent qubit topology as an undirected graph in which each
vertex corresponds to a qubit, and where edges indicate a physical or
logical connection between qubits. For our two-local Lindbladian noise
model we choose model coefficients $\mathcal{K}$ corresponding to
Paulis with support on qubits that are connected by edges, as well as
all Paulis that are supported on subsets of the former supports, which
in this case corresponds to the individual qubits. A direct
consequence of the result below is that
$\mathcal{M}(\mathcal{K},\mathcal{K})$ is full rank. Choosing any set
of benchmark fidelities $\mathcal{B}$ that includes $\mathcal{K}$
gives a full column-rank $M$ and thus ensures that the least-squares
problem has a unique solution.

\begin{theorem}
  Given a set $\{\mathcal{S}_i\}$ of supports
  $\mathcal{S}_i \subseteq [n]$.  Define the set
  $\mathcal{V} = \{\mathcal{V}_j\}$ as the union over $i$ of all
  non-empty subsets of $\mathcal{S}_i$, including the sets
  themselves. For each $j$ let $\mathcal{P}_j$ be the set of all
  $n$-qubit Pauli strings supported on $\mathcal{V}_j$, and let
  $\mathcal{P} = \bigcup_j \mathcal{P}_j$. Then
  $M(\mathcal{P},\mathcal{P})$ is full rank.
\end{theorem}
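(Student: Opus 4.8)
\emph{Proof plan.}  The statement must be read over $\mathbb{R}$ (equivalently $\mathbb{Q}$): it is false over $\mathbb{F}_2$ already for one qubit, where the relevant block $\left(\begin{smallmatrix}0&1&1\\1&0&1\\1&1&0\end{smallmatrix}\right)$ of $M$ has determinant $2$.  The plan is first to linearize the commutation pattern.  I would introduce the $\pm1$ matrix $N$ on all $4^{n}$ Paulis with $N_{a,b}=(-1)^{\langle a,b\rangle_{sp}}$, so that $M(\mathcal{P},\mathcal{P})=\tfrac12\big(J-N|_{\mathcal{P}}\big)$, where $J$ is the all-ones matrix and $N|_{\mathcal{P}}$ is the principal submatrix on $\mathcal{P}$.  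Adjoining the identity Pauli $\mathbb{1}$ (support $\emptyset$, with $N_{\mathbb{1},a}=1$ for all $a$) and taking the Schur complement of the leading $1\times1$ block gives $\det N|_{\mathcal{P}\cup\{\mathbb{1}\}}=\det\!\big(N|_{\mathcal{P}}-J\big)=(-2)^{|\mathcal{P}|}\det M(\mathcal{P},\mathcal{P})$.  It therefore suffices to show that $N$ restricted to $\mathcal{P}^{+}:=\{P:\mathrm{supp}(P)\in\mathcal{D}\}$ is invertible, where $\mathcal{D}:=\mathcal{V}\cup\{\emptyset\}$ is the down-closed family of subsets of $[n]$ generated by $\{\mathcal{S}_i\}$.

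Because $\langle\cdot,\cdot\rangle_{sp}$ is additive over qubits, $N$ tensorizes: $N=E^{\otimes n}=L^{\otimes n}D^{\otimes n}(L^{\otimes n})^{T}$, where $E$ is the $4\times4$ single-qubit sign matrix.  In the splitting $\mathbb{C}^{4}=\mathbb{C}e_{\mathbb{1}}\oplus V$ with $V=\mathrm{span}(e_X,e_Y,e_Z)$ one has $E=\left(\begin{smallmatrix}1&\mathbf{1}^{T}\\ \mathbf{1}&J_3-2T_0\end{smallmatrix}\right)$, with $J_3$ the $3\times3$ all-ones matrix and $T_0=\left(\begin{smallmatrix}0&1&1\\1&0&1\\1&1&0\end{smallmatrix}\right)$, and $E=LDL^{T}$ with $L=\left(\begin{smallmatrix}1&0\\ \mathbf{1}&I_3\end{smallmatrix}\right)$, $D=\mathrm{diag}(1,-2T_0)$; the Schur-complement block $-2T_0$ is invertible over $\mathbb{Q}$ since $\det T_0=2$.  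Decomposing $\mathbb{C}^{4^{n}}=\bigoplus_{S\subseteq[n]}V_S$, where $V_S$ is the span of the Paulis with support exactly $S$, one finds that $D^{\otimes n}$ is block-diagonal for this decomposition, acting on $V_S$ as the invertible operator $(-2)^{|S|}T_0^{\otimes|S|}$; in particular it commutes with the coordinate projection $\Pi$ onto $W:=\bigoplus_{S\in\mathcal{D}}V_S$, whose basis is exactly $\{e_P:P\in\mathcal{P}^{+}\}$.  Since $L^{T}$ fixes $e_{\mathbb{1}}$ and maps $V$ into $\mathbb{C}e_{\mathbb{1}}\oplus V$, the operator $(L^{T})^{\otimes n}$ carries each $V_S$ into $\bigoplus_{T\subseteq S}V_T$ with identity diagonal blocks; because $\mathcal{D}$ is down-closed, $W$ is invariant under $(L^{T})^{\otimes n}$, which then acts on $W$ as a block-triangular map (for any linear extension of $(\mathcal{D},\subseteq)$) with identity diagonal blocks, hence invertibly.

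To finish, I would push $\Pi$ inward in $N=L^{\otimes n}D^{\otimes n}(L^{\otimes n})^{T}$.  Using invariance of $W$ under $(L^{T})^{\otimes n}$ and $[\Pi,D^{\otimes n}]=0$, a short computation gives, as operators on $W$,
\[
N|_{\mathcal{P}^{+}}=C^{T}BC,\qquad C:=(L^{T})^{\otimes n}\big|_{W},\qquad B:=D^{\otimes n}\big|_{W}=\bigoplus_{S\in\mathcal{D}}(-2)^{|S|}T_0^{\otimes|S|}.
\]
Both $C$ (block-triangular, unit diagonal) and $B$ (block-diagonal with invertible blocks) are invertible, so the congruence $C^{T}BC$ is invertible; unwinding the reductions then yields $\det M(\mathcal{P},\mathcal{P})\neq0$.

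The substantive point --- and the reason the theorem is true at all --- is the combination of the factorization $E=LDL^{T}$ being \emph{unitriangular with respect to the support order} (so it restricts cleanly to the down-set $\mathcal{D}$) with the invertibility over $\mathbb{Q}$ of the Schur-complement block $-2T_0$, i.e.\ $\det T_0=2$; this even determinant is exactly why the conclusion holds over $\mathbb{R}$ but not over $\mathbb{F}_2$.  Establishing these and verifying the projection identity is where the work lies; the remainder is routine bookkeeping with tensor indices and down-sets.
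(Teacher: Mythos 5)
Your argument is correct, and it reaches the same algebraic heart of the matter as the paper --- invertibility over $\mathbb{Q}$ of the single-qubit anticommutation block (your $T_0$, the paper's $Q=-2T_0$) combined with down-closedness of the support family --- but by a genuinely different and more structural route. The paper works directly on $-2M(\mathcal{P},\mathcal{P})$, writing each block as $O(i,j,S)-O(i,j,\mathbb{1})$ and running an explicit iterative block Gaussian elimination (the row/column ``sweeps''), with a nontrivial inductive invariant, Eq.~\eqref{Eq:Bij}, that tracks the partial sums $\mathcal{Q}(w,\ell)$ and requires a binomial-coefficient count of which rows to sweep at each stage; the terminal state is block diagonal with invertible blocks $Q^{\otimes w_i}$. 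You instead lift to the $\pm1$ sign matrix $N=E^{\otimes n}$, adjoin the identity Pauli so a $1\times1$ Schur complement converts $M(\mathcal{P},\mathcal{P})$ into $N|_{\mathcal{P}^{+}}$ up to the nonzero factor $(-2)^{|\mathcal{P}|}$, and then obtain the same block-diagonal core $\bigoplus_{S}(-2)^{|S|}T_0^{\otimes|S|}$ in one shot from the global factorization $E=LDL^{T}$: the paper's entire sweeping schedule is replaced by the single observation that $(L^{T})^{\otimes n}$ is block-unitriangular for the support order and hence restricts invertibly to the down-set $W$. Your route buys a conceptual explanation --- the theorem becomes a congruence statement $C^{T}BC$, and $\det T_0=2$ cleanly explains why the claim holds over $\mathbb{R}$ but fails over $\mathbb{F}_2$ --- and it sidesteps the delicate combinatorial induction entirely; the paper's sweep proof is more elementary in its toolkit and exhibits the reduced diagonal blocks explicitly without ever invoking the full $4^{n}\times4^{n}$ sign matrix. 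The steps you defer as routine (the form of $E=LDL^{T}$, invariance of $W$ under $(L^{T})^{\otimes n}$, and the projection identity $\Pi N\Pi|_{W}=C^{T}BC$ via $[\Pi,D^{\otimes n}]=0$) do check out, so I see no gap.
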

\begin{proof}
  Since permuting the matrix rows and columns leaves the rank
  unchanged we assume that the sets $\mathcal{V}_j$ are ordered
  according to increasing cardinality and that the Pauli strings in
  each set $\mathcal{P}_j$ are sorted lexicographically.
  Define $V = -2M(\mathcal{P},\mathcal{P})$ and partition the matrix
  into blocks such that block
  $V(i,j) = -2M(\mathcal{P}_i,\mathcal{P}_j)$. These blocks are
  concisely expressed as $V(i,j) = O(i,j,S) - O(i,j,\mathbb{1})$,
  where $\mathbb{1}$ is a $3$-by-$3$ matrix of ones, and
\begin{equation}\label{Eq:OS}
O(i,j,op) = \bigotimes_{\ell=1}^n \begin{cases}
op & \ell\in \mathcal{V}_i \cap \mathcal{V}_j\\
\mathbf{e}  & \ell \in \mathcal{V}_i \setminus \mathcal{V}_j\\
\mathbf{e}^T & \ell \in \mathcal{V}_j \setminus \mathcal{V}_i\\
1 & \ell\not\in \mathcal{V}_j  \cup \mathcal{V}_j
\end{cases},
\qquad
S = \left(
\begin{array}{rrr} 1 & -1 & -1\\ -1 & 1 & -1 \\ -1 & -1 & 1\end{array}
\right),
\end{equation}
and $\bf{e}$ denotes a column vector of ones of length three with
transpose $\bf{e}^T$. Note that matrices $S$ and $Q:= S - \mathbb{1}$
are invertible. We prove invertibility of $V$ by reducing it to a
block-diagonal matrix with invertible block by iteratively applying
sweep operations. Sweeping of the blocks in row $i$ or column $j$ by
those $k$ is done only when $\mathcal{V}_k \subset\mathcal{V}_i$ or
$\mathcal{V}_k \subset\mathcal{V}_j$, respectively. The sweep
operations are defined as
\begin{align}
\begin{split}
\mbox{row\_sweep$(i,k$):}\ \ & V(i,j) \gets V(i,j) - O(i,k,I)V(k,j),\\
\mbox{column\_sweep($j,k)$:}\ \  & V(i,j) \gets V(i,j) - V(i,k)O(k,j,I).
\end{split}\label{Eq:SweepOp}
\end{align}
The structure of $O(i,j,op)$ in terms of the locations of matrices and
additional terms $\bf{e}$, $\bf{e}^T$, and $1$ is prescribed by the
sets $\mathcal{V}_i$ and $\mathcal{V}_j$, and we now show that
$V(i,j)$ can always be written as a sum of tensors sharing the same
structure. This is immediate for the initial $V(i,j)$ and we therefore
focus on the updates in~\eqref{Eq:SweepOp}. For the row update we
consider the term $O(i,k,I)V(k,j)$, or, since we are only interested
in structure, $O(i,k,I)O(k,j,op)$. By writing out a table of terms
based on membership of $\ell$ in $\mathcal{V}_i$, $\mathcal{V}_j$, and
$\mathcal{V}_k$, with the constraint that
$\mathcal{V}_k \subset \mathcal{V}_i$, it can easily be verified that
this indeed holds. Of special interest is the case where
$\ell\in\mathcal{V}_{i,j}$ and $\ell\not\in\mathcal{V}_{k}$. In this
case the $\ell$th terms in $O(i,k,I)$ and $O(k,j,op)$ are $\bf{e}$ and
$\bf{e}^T$ respectively, which means that their product is the matrix
$\mathbb{1}$. The same approach shows that column sweeps also maintain
the structure. For convenience we represent by $B^{(\ell)}(i,j)$ the
value of block $V(i,j)$ at iteration $\ell$ as a sum of matrix
products of $w_{i,j}$ matrices, thus omitting the fixed non-matrix
terms in the full representation. As seen above, the initial block
values are given by
$B^{(0)}(i,j) = S^{\otimes w_{i,j}} - \mathbb{1}^{\otimes w_i}$, where
we define $A^{\otimes 0} = 1$.  We provide a sweeping algorithm such
that
\begin{equation}\label{Eq:Bij}
B^{(\ell)}(i,j) = \begin{cases}
Q^{\otimes w_{i,j}} & \mbox{if $i=j$ and $w_{i,j} \leq\ell$},\\
0 & \mbox{if $i\neq j$ and $w_{i,j} \leq \ell$},\\
Q^{\otimes w_{i,j}} & \mbox{if $w_{i,j} = \ell+1$},\\
S^{\otimes w_{i,j}} - \mathcal{Q}(w_{i,j},\ell) & \mbox{if $w_{i,j} > \ell+1$},
\end{cases}
\end{equation}
holds for all $\ell \geq 0$, where $\mathcal{Q}(w,\ell)$ is the sum of
matrices $\{\mathbb{1},Q\}^{\otimes w}$ that contain at most $\ell$ terms
equal to $Q$.
The expressions for the second and third case of~\eqref{Eq:Bij} are
special cases of $S^{\otimes
  w_{i,j}}-\mathcal{Q}(w_{i,j},\ell)$.
Namely, we have
$S^{\otimes w} = (\mathbb{1} + Q)^{\otimes w} = \mathcal{Q}(w,w)$,
while for $w \geq 1$ it holds that
$\mathcal{Q}(w,w) - \mathcal{Q}(w,w-1) = Q^{\otimes w}$, and therefore
\[
S^{\otimes w} - \mathcal{Q}(w,w-1) = Q^{\otimes w}
\quad\mbox{and}\quad
S^{\otimes w} - \mathcal{Q}(w,\ell \geq w) = 0.
\]
It can be verified that~\eqref{Eq:Bij} holds for $\ell=0$ by observing
that $\mathcal{Q}(w,0) = \mathbb{1}^{\otimes w}$. Assume
that~\eqref{Eq:Bij} holds for some $\ell \geq 0$. Then the blocks can
be arranged as shown in Figure~\ref{Fig:BlocksB}. Because
$w_{i,j} = w_{j,i} \leq w_i$ we can ignore all rows $i$ with
$w_i \leq \ell$, and likewise for the columns, since
$B^{(\ell)}(i,j) = B^{(\ell)}(j,i)^T$.  Submatrix A consists of all
blocks $(i,j)$ with $w_i\! =\! w_j\! =\!\ell + 1$, and it follows from
$w_{i,i} = w_i$ for the diagonal blocks and $w_{i,j} < w_i$ for the
off-diagonal blocks that A is block diagonal with
$B^{(\ell)}(i,i) = Q^{\otimes \ell+1}$. Submatrix B satisfies
$w_j = \ell+1$ and $w_i > \ell+1$ and therefore also has blocks with
weight at most $\ell+1$. This means that the blocks are again either
$Q^{\otimes \ell+1}$ or zero, and likewise for submatrix C. Given this
structure it is easily seen that we can clear block $B$ by sweeping
rows $i \in \mathcal{I} = \{i \mid w_i > \ell+1\}$ with rows
$k \in \mathcal{K} = \{k \mid w_k = \ell+1\}$.  Consider an arbitrary
row $i \in\mathcal{I}$. For each $k \in\mathcal{K}$ we have
$\vert\mathcal{V}_k\vert = \ell+1$, and it therefore follows from
$\vert\mathcal{V}_i\vert = w_i$ and the assumption that all non-empty
subsets of $\mathcal{V}_i$ are present, that there are exactly
$ \binom{w_i}{\ell+1}$ elements $k\in \mathcal{K}$ for which
$\mathcal{V}_k \subset \mathcal{V}_i$ and $w_{i,k} = \ell+1$. We
therefore need to sweep row $i$ with precisely these $k$ values, which
we denote by $\mathcal{K}'$ with implicit dependency on $i$ and
$\ell$. For the effect on block D, consider a block $(i,j)$ with an
arbitrary $j$ for which $w_j > \ell+1$. If $w_{i,j} \leq \ell$ we have
$B^{(\ell)}(i,j) = 0$ and it follows from
$\mathcal{V}_k \subset \mathcal{V}_i$ that $w_{k,j} \leq \ell$ for all
$k\in\mathcal{K}'$, which means that all updates to it are zero as
well. For block $B^{(\ell)}(i,k)$ in C to be nonzero we must have
$\mathcal{V}_k \subset \mathcal{V}_j$, which is the case for exactly
$ \binom{w_{i,j}}{\ell+1}$ values of $k\in\mathcal{K}'$ since
$\vert\mathcal{V}_i \cap \mathcal{V}_j\vert = w_{i,j}$. All elements
$k$ outside this set, say $\mathcal{K}''$, will have
$\vert \mathcal{V}_j \cap \mathcal{V}_k\vert < w_i = \ell+1$ and
therefore correspond to a zero block $(k,j)$. Because each block
$(k,j)$ for $k\in\mathcal{K}''$ is equal to $Q^{\otimes\ell+1}$ and
multiplied by $O(i,j,I)$, we conclude that block $(i,j)$ is swept by
all matrices $\{\mathbb{1},Q\}^{\otimes w_{i,j}}$ with exactly
$\ell+1$ terms equal to $Q$. That means that $B^{(\ell+1)}(i,j)$ is
updated to $S^{\otimes w_{i,j}} - \mathcal{Q}(w_{i,j},\ell+1)$. Once
submatrix B is cleared we can repeat the same set of sweeps over the
column indices. Since all blocks in B are zero this does not affect
submatrix D and only zeros out all blocks in C. The claim
that~\eqref{Eq:Bij} holds for all $\ell \geq 0$ then follows directly
by induction. For $\ell = \max_i(w_i)$ we see that the diagonal
elements $B^{(\ell)}(i,i)$ are $Q^{\otimes w_i}$ and invertible, and
all off-diagonal elements are zero, as required.

\begin{figure}[!t]
\centering
\includegraphics[width=0.3\textwidth]{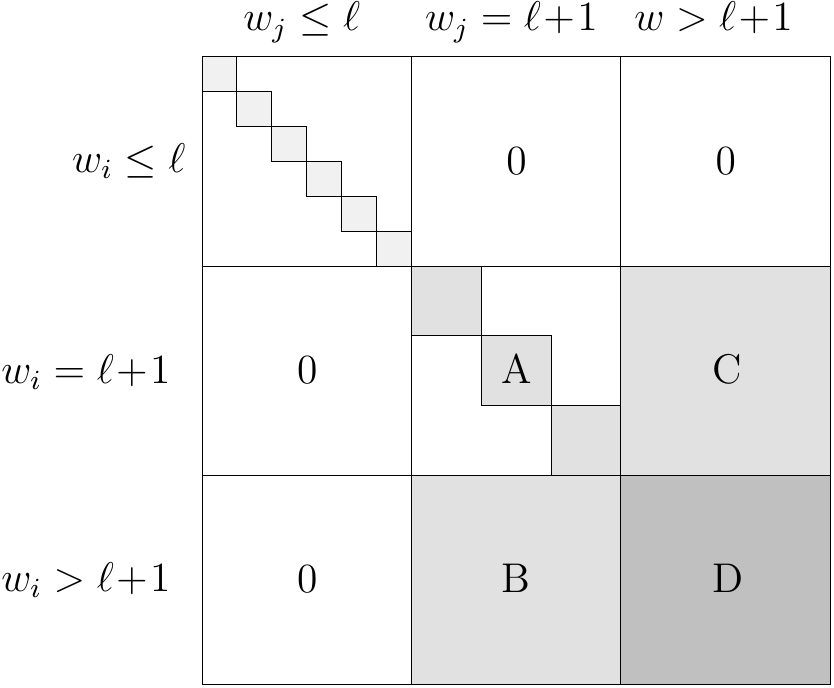}
\caption{Structure of the blocks $B^{(\ell)}(i,j)$ at iteration $\ell$.}\label{Fig:BlocksB}
\end{figure}
\end{proof}

\subsection{Sample complexity and error analysis}\label{Sec:ErrorAnalysis}
In order to fit the Lindbladian noise model we need to estimate
individual fidelities $f_i$. We refer to the set of Paulis that are
measured in the experiment as $\mathcal{B}$. The task at hand is to
reconstruct the full Pauli-Lindblad model $\Lambda$ by only measuring
a sparse subset $|\mathcal{B}| \ll 4^n-1$ of fidelities and then
fitting the model to determine the parameters $\lambda_k$ with
$k \in \mathcal{K}$. The deviation of these parameters from the
assumed ground truth is bounded in~\eqref{Eq:paramBND}. The error
bound in the following theorem \ref{Thm:Channel} bounds the deviation
of all fidelities of the ground truth Pauli-Lindblad model $\Lambda$
from the model fidelities obtained parameter estimates for
$\lambda_k$. The estimation of the directly measured fidelities from
$\mathcal{B}$ is done using random circuit instances of the form shown
in Figure~\ref{Fig:Benchmark} for various cycle lengths $k$. For a
fixed $k$, the expected value for observable $P_i \in \mathcal{B}$,
measured using appropriate basis changes and readout twirling, is
given by $\alpha_i f_i^k$. Measuring a single shot for each qubit for
a single circuit instance is equivalent to sampling an element from a
distribution over $\{-1,1\}$ with expectation value $\alpha_i
f_i^k$. For the deviation from the expected value we can apply
Hoeffding's inequality, which states that for given $N$ independent
random variables $X_j$ sampled from any distribution $[-\beta,\beta]$,
the deviation of $\bar{X} = N^{-1}\sum_{i=1}^N X_i$ to the expected
value satisfies
\begin{equation}\label{Eq:Hoeffding}
\Pr\left(\big\vert \bar{X} - \mathbb{E}(X)\big\vert \geq
  \epsilon\right) \leq 2\exp\left(-\frac{N\epsilon^2}{2\beta}\right).
\end{equation}
From this it follows that, by taking
$N \geq 2\log(2/\delta)/\epsilon^2$ samples, the estimate
$\alpha_if_i^k + \epsilon_{i,k}$ satisfies
$\vert\epsilon_{i,k}\vert \leq \epsilon$ with probability at least
$1-\delta$. The number of samples in this case corresponds to the
number of circuit instances. We will revisit the sample complexity
below, but first state the following result assuming sufficiently
accurate samples:

\begin{theorem}\label{Thm:Channel}
  Denote the Pauli terms in a given Pauli-Lindblad channel by
  $\mathcal{K}$, and assume we have benchmark fidelities $\mathcal{B}$
  such that $f_i \geq 1/2$ for all $i\in\mathcal{B}$ and
  $M = \mathcal{M}(\mathcal{B},\mathcal{K})$ is full column rank. Let
  $k\geq 1$ be an integer such that $f_i^k \geq 1/2$ for all
  $i\in\mathcal{B}$, and assume that the readout and sampling errors
  satisfy $\alpha_i \geq 1/2$ and
  $\vert \epsilon_{i,\ell}\vert \leq \epsilon < 1/4$ for all
  $i\in\mathcal{B}$ and $\ell\in\{0,k\}$. Then the estimated inverse
  channel fidelities $\widehat{(f_j^{-1})}$ for any $j$ and scaling
  factor $\hat{\gamma}$ satisfy
\begin{equation}\label{Eq:ChannelBounds}
C_{\epsilon}^{-\tau}
\leq f_j(\widehat{f_j^{-1}})\leq 
C_{\epsilon}^{\tau}
\qquad\mbox{and}\qquad
\gamma C_{\epsilon}^{-\tau}
\leq \hat{\gamma}\leq
\gamma C_{\epsilon}^{\tau},
\end{equation}
where $C_{\epsilon} = (1+4\epsilon) / (1-4\epsilon)$ and
$\tau = \sqrt{\vert\mathcal{K}\vert\cdot\vert\mathcal{B}\vert}/ (\sigma_{min}(M)k)$.
\end{theorem}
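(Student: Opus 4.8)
The strategy is to reduce both inequalities in~\eqref{Eq:ChannelBounds} to a single perturbation bound on the fitted coefficient vector $\hat\lambda$ relative to the true coefficient vector $\lambda$ of the channel. By~\eqref{Eq:FidelityFromGamma}, for an arbitrary Pauli $P_j$ the model fidelity equals $\exp(-2\,m_j^{\mathsf T}\lambda)$, where $m_j\in\{0,1\}^{|\mathcal K|}$ collects the symplectic overlaps $\langle j,k\rangle_{sp}$, $k\in\mathcal K$; since by Section~\ref{Sec:ChannelOperations} the inverse channel is obtained by negating the coefficients, the estimate built from $\hat\lambda$ has inverse fidelity $\widehat{f_j^{-1}}=\exp(2\,m_j^{\mathsf T}\hat\lambda)$, while (the channel being by assumption a Pauli-Lindblad channel with terms $\mathcal K$) $f_j=\exp(-2\,m_j^{\mathsf T}\lambda)$ exactly. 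Hence $\log\!\big(f_j\,\widehat{f_j^{-1}}\big)=2\,m_j^{\mathsf T}(\hat\lambda-\lambda)$, and likewise $\log(\hat\gamma/\gamma)=2\,\mathbf 1^{\mathsf T}(\hat\lambda-\lambda)$ with $\mathbf 1$ the all-ones vector of length $|\mathcal K|$. Since $\|m_j\|_2\le\sqrt{|\mathcal K|}$ and $\|\mathbf 1\|_2=\sqrt{|\mathcal K|}$, Cauchy--Schwarz shows it suffices to prove
\[
\|\hat\lambda-\lambda\|_2\;\le\;\frac{\sqrt{|\mathcal B|}\,\log C_\epsilon}{2k\,\sigma_{\min}(M)},
\]
since then $\big|\log(f_j\,\widehat{f_j^{-1}})\big|\le 2\sqrt{|\mathcal K|}\,\|\hat\lambda-\lambda\|_2\le\tau\log C_\epsilon$, and identically for $\log(\hat\gamma/\gamma)$, which is exactly~\eqref{Eq:ChannelBounds}.

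First I would bound the error of each directly estimated fidelity. Writing $g_i:=f_i^{k}$, the SPAM-robust estimator obtained by dividing the depth-$k$ observable by the depth-$0$ observable is $\hat g_i=(\alpha_i g_i+\epsilon_{i,k})/(\alpha_i+\epsilon_{i,0})$, so $\hat g_i/g_i=(\alpha_i+\epsilon_{i,k}/g_i)/(\alpha_i+\epsilon_{i,0})$. Using $g_i\ge 1/2$ to bound $|\epsilon_{i,k}/g_i|\le 2\epsilon$, together with $\alpha_i\ge 1/2$, $|\epsilon_{i,0}|\le\epsilon$, and the fact that $(\alpha+2\epsilon)/(\alpha-\epsilon)$ is decreasing and $(\alpha-2\epsilon)/(\alpha+\epsilon)$ is increasing in $\alpha$ (so both are extremal at $\alpha=1/2$), one obtains
\[
C_\epsilon^{-1}\;\le\;\frac{1-4\epsilon}{1+2\epsilon}\;\le\;\frac{\hat g_i}{g_i}\;\le\;\frac{1+4\epsilon}{1-2\epsilon}\;\le\;C_\epsilon,
\]
that is, $|\log\hat g_i-\log g_i|\le\log C_\epsilon$ for every $i\in\mathcal B$. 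Because $\hat f_i=\hat g_i^{1/k}$ and $g_i=f_i^{k}$, this gives $\|\log\hat f-\log f\|_2\le\sqrt{|\mathcal B|}\,(\log C_\epsilon)/k$, so the perturbation of the right-hand side of~\eqref{Eq:NNLS}, i.e.\ of the quantity $b:=-\tfrac12\log f$ (which equals $M\lambda$ at the true solution) to $\hat b:=-\tfrac12\log\hat f$, satisfies $\|\hat b-b\|_2\le\sqrt{|\mathcal B|}\,(\log C_\epsilon)/(2k)$.

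The remaining step is to propagate $\hat b-b$ through the nonnegative least-squares fit. The key observation is that, since the channel is representable on $\mathcal K$ with $\lambda\ge 0$, the exact data $b=M\lambda$ already lies in the polyhedral (hence closed) convex cone $\mathcal C:=\{M\mu:\mu\ge 0\}\subseteq\mathbb R^{|\mathcal B|}$, so $M\lambda$ is the Euclidean projection $P_{\mathcal C}(b)$, whereas by definition of~\eqref{Eq:NNLS} one has $M\hat\lambda=P_{\mathcal C}(\hat b)$. Non-expansiveness of the metric projection onto a closed convex set then gives $\|M(\hat\lambda-\lambda)\|_2=\|P_{\mathcal C}(\hat b)-P_{\mathcal C}(b)\|_2\le\|\hat b-b\|_2$, and full column rank of $M$ converts this to $\|\hat\lambda-\lambda\|_2\le\|M(\hat\lambda-\lambda)\|_2/\sigma_{\min}(M)\le\sqrt{|\mathcal B|}\,(\log C_\epsilon)/(2k\,\sigma_{\min}(M))$, which is precisely the bound demanded by the reduction; combined with the first paragraph this establishes~\eqref{Eq:ChannelBounds}.

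I expect the projection step to be the one genuinely non-routine point: it requires recognising that the nonnegativity constraint makes $\hat b\mapsto M\hat\lambda$ a metric projection onto the closed convex cone $\mathcal C$ (closedness coming from its being polyhedral), so that the benchmark error propagates with unit operator norm and the conditioning factor $\sigma_{\min}(M)^{-1}$ enters only when passing from $M(\hat\lambda-\lambda)$ back to $\hat\lambda-\lambda$; a generic perturbation bound for constrained least squares would instead yield a weaker, possibly vacuous, constant. A smaller point needing care is the ratio estimate in the second paragraph, where the hypotheses $\alpha_i\ge 1/2$ and $f_i^{k}\ge 1/2$ are exactly what let one replace $\epsilon/f_i^{k}$ by $2\epsilon$ and pin the extremal value at $\alpha_i=1/2$.
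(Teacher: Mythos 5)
Your proposal is correct and follows essentially the same route as the paper: the per-fidelity bound $|\log\hat f_i-\log f_i|\le\log(C_\epsilon)/k$ from the ratio estimator under the $\alpha_i,f_i^k\ge 1/2$ assumptions, propagation through the nonnegative least-squares fit via non-expansiveness of Euclidean projection onto a closed convex set together with $\sigma_{\min}(M)$ (your cone argument is exactly the special case of the paper's Lemma~\ref{Lemma:LSQR} that is actually used, since the exact data $M\lambda$ lies in the cone), and the final conversion to fidelity and $\gamma$ errors with the factor $\sqrt{|\mathcal K|}$ (your Cauchy--Schwarz step is equivalent to the paper's $\ell_1$/0--1-entries argument).
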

\begin{proof}
  The analysis follows the error bounds on the measured fidelities by
  Flammia and Wallman in~\cite{FLA2020Wa}. The protocol estimates the
  fidelity based on sampled values for $\alpha_if_i^{\ell}$ for a pair
  of depths $\ell\in \{0,k\}$. Given that the additive error in the
  sampled values is bounded by $\epsilon$, the estimated fidelity
  $\hat{f}_i$ satisfies
\[
\frac{\alpha_if_i^k - \epsilon}{\alpha_if_i^0 + \epsilon}
\leq \hat{f}_i^k\leq
\frac{\alpha_if_i^k + \epsilon}{\alpha_if_i^0 - \epsilon}.
\]
Dividing the enumerator and denominator by $\alpha_i$ and using the
assumption that $\alpha_i \geq 1/2$ and $f_i^k \geq 1/2$, gives
\[
f_i^k\frac{1 - 4\epsilon}{1+2\epsilon} \leq \hat{f}_i^{k}
\leq f_i^k\frac{1 + 4\epsilon}{1-2\epsilon}.
\]
By relaxing the denominator, taking the logarithm, and reorganizing we
obtain
\[
\log C_{\epsilon}^{-1} \leq k\log(\hat{f}_i) - k\log(f_i) \leq \log
C_{\epsilon},
\]
and therefore
\begin{equation}\label{Eq:EstLogFidelity}
\big\vert \log(\hat{f}_i) - \log(f_i)\big\vert\leq
\log(C_{\epsilon})/k.
\end{equation}
In order to solve the least-squares problem in Eq.~\eqref{Eq:NNLS} we
need to estimate the fidelities in the set $\mathcal{B}$. Given the
bound on the elementwise error in Eq.~\eqref{Eq:EstLogFidelity}, we
can bound the two-norm of the vector of log fidelities of length by
$\sqrt{\vert\mathcal{B}\vert}\log(C_{\epsilon})/k$. To bound the error
in the estimated parameters $\hat{\lambda}$ we use
Theorem~\ref{Lemma:LSQR}, below, which gives
\begin{equation}\label{Eq:paramBND}
\Vert \hat{\lambda} - \lambda\Vert_2
\leq
\frac{\log(C_{\epsilon})\sqrt{\vert\mathcal{B}\vert}}{2\sigma_{\min}(M)k}\;.
\end{equation}

For the scaling parameter $\gamma$, we use
$\Vert \hat{\lambda}-\lambda\Vert_1 \leq
\sqrt{\vert\mathcal{K}\vert}\Vert\hat{\lambda}-\lambda\Vert_2$ and
Eq.~\eqref{Eq:Eta} to get the right-hand side of
Eq.~\eqref{Eq:ChannelBounds}.

We invert the estimated channel by flipping the sign, and obtain the
log fidelities by multiplication with $M$. Given that the entries in
$M$ are either zero or one, and that the estimated coefficients
$\hat{\lambda}$ are all nonnegative, the deviation in
$M(\lambda-\hat{\lambda})$ is bounded again by
$\Vert\lambda-\hat{\lambda}\Vert_1$. Multiplying by the factor of two
that appears in Eq.~\eqref{Eq:Mf}, we thus have
\[
\vert\log(\widehat{f_i^{-1}}) - \log(f_i^{-1}) \vert \leq 
\tau\log(C_{\epsilon}),
\]
which is easily rewritten to obtain the left-hand side
of Eq.~\eqref{Eq:ChannelBounds}.
\end{proof}

\begin{lemma}\label{Lemma:LSQR}
  Given a closed convex set $\mathcal{C}$, and full column rank matrix
  $A \in \mathbb{C}^{s\times t}$ with singular value decomposition
  $A=U\Sigma V^*$. Then the solution to the constrained least-squares
  problem:
\[
x(b) = \mathop{\mathrm{argmin}}_{x\in\mathcal{C}}\quad \half\norm{Ax-b}_2^2.
\]
satisfies
\begin{equation}\label{Eq:LSQRBound1}
\norm{x(b_1) - x(b_2)}_2 \leq \frac{1}{\sigma_{\min}(A)}\norm{U^*(b_1 - b_2)}_2.
\end{equation}
\end{lemma}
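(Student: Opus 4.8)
The plan is to use the first-order optimality conditions for minimizing a convex function over a convex set, combined with the strong convexity that full column rank of $A$ supplies. Write $f_b(x) = \half\norm{Ax-b}_2^2$, so $\nabla f_b(x) = A^*(Ax-b)$. Since $f_b$ is convex and differentiable and $\mathcal{C}$ is closed and convex, $x(b)$ is the minimizer over $\mathcal{C}$ if and only if it satisfies the variational inequality $\mathrm{Re}\,\langle A^*(Ax(b)-b),\,y-x(b)\rangle \geq 0$ for all $y\in\mathcal{C}$. (Existence of the minimizer follows from coercivity of $f_b$, which holds because $\norm{Ax}_2\geq\sigma_{\min}(A)\norm{x}_2$, and uniqueness from strict convexity of $f_b$; so $x(\cdot)$ is well defined.)

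First I would instantiate this inequality twice: for $b=b_1$ with test point $y=x(b_2)$, and for $b=b_2$ with test point $y=x(b_1)$. Abbreviating $x_1=x(b_1)$, $x_2=x(b_2)$, adding the two inequalities, and rearranging, the quadratic term $\mathrm{Re}\,\langle A^*A(x_1-x_2),\,x_1-x_2\rangle=\norm{A(x_1-x_2)}_2^2$ is isolated and one obtains the monotonicity estimate $\norm{A(x_1-x_2)}_2^2 \leq \mathrm{Re}\,\langle b_1-b_2,\,A(x_1-x_2)\rangle$. This is the crux; the rest is linear algebra.

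Next I would bring in the singular value decomposition $A=U\Sigma V^*$ in thin form, so $U$ has orthonormal columns, $\Sigma$ is square diagonal with strictly positive entries (full column rank), and $V$ is unitary. Setting $z=V^*(x_1-x_2)$ gives $A(x_1-x_2)=U\Sigma z$, hence $\norm{A(x_1-x_2)}_2=\norm{\Sigma z}_2$ and $\mathrm{Re}\,\langle b_1-b_2,\,A(x_1-x_2)\rangle=\mathrm{Re}\,\langle U^*(b_1-b_2),\,\Sigma z\rangle$. The monotonicity estimate together with Cauchy--Schwarz yields $\norm{\Sigma z}_2^2 \leq \norm{U^*(b_1-b_2)}_2\,\norm{\Sigma z}_2$, so $\norm{\Sigma z}_2 \leq \norm{U^*(b_1-b_2)}_2$ (if $\Sigma z=0$ then $x_1=x_2$ by injectivity of $A$, and the bound is trivial). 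Finally, unitarity of $V$ gives $\norm{x_1-x_2}_2=\norm{z}_2$, while $\norm{\Sigma z}_2\geq\sigma_{\min}(A)\norm{z}_2$; combining, $\sigma_{\min}(A)\norm{x_1-x_2}_2 \leq \norm{U^*(b_1-b_2)}_2$, which is exactly \eqref{Eq:LSQRBound1}.

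The only place requiring care is the precise form of the optimality condition over a general closed convex set — in particular carrying the $\mathrm{Re}$ in the complex case and verifying that the cross terms combine as claimed when the two variational inequalities are summed — together with the observation that only the projection of $b_1-b_2$ onto the column space of $A$ is relevant, which is precisely what the factor $U^*(b_1-b_2)$ encodes. Everything else (the SVD bookkeeping and the two elementary norm bounds) is routine.
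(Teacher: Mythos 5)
Your proof is correct, but it takes a different route from the paper's. The paper changes variables to $y = Ax$, recasting the constrained least-squares problem as a Euclidean projection onto the convex image set $\mathcal{C}' = A\mathcal{C}$; it then observes that $\mathcal{C}'$ lies in the column space of $U$, so $\mathcal{P}_{\mathcal{C}'}(b) = \mathcal{P}_{\mathcal{C}'}(UU^*b)$, and invokes non-expansiveness of projection onto a convex set to get $\norm{Ax(b_1)-Ax(b_2)}_2 \leq \norm{U^*(b_1-b_2)}_2$, finishing with $\norm{A v}_2 \geq \sigma_{\min}(A)\norm{v}_2$. You instead stay in the $x$-variable and derive the monotonicity estimate $\norm{A(x_1-x_2)}_2^2 \leq \mathrm{Re}\,\langle b_1-b_2,\,A(x_1-x_2)\rangle$ by summing the two variational inequalities, then extract the $U^*$ factor via the thin SVD and Cauchy--Schwarz. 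These are essentially two presentations of the same underlying fact --- the non-expansiveness lemma the paper cites is itself standardly proved by exactly the variational-inequality summation you perform --- so your argument is the ``unpacked,'' self-contained version, at the cost of a bit more bookkeeping with $\Sigma$ and $V$. One small bonus of your write-up is that you explicitly address existence and uniqueness of $x(b)$ (coercivity and strict convexity from full column rank), which the paper leaves implicit. Both proofs correctly isolate the key point that only the component of $b_1-b_2$ in the column space of $A$ enters the bound.
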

\begin{proof}
Define $\mathcal{C}' = A\mathcal{C} = \{Ac \mid c\in\mathcal{C}\}$ and
\[
y(b) = \mathop{\mathrm{argmin}}_{y\in\mathcal{C}'}\ \half\norm{y - b}_2^2 =: \mathcal{P}_{\mathcal{C}'}(b)
\]
There is a one-to-one correspondence between points in $\mathcal{C}'$
and $\mathcal{C}$, and for the solution we have $y(b) = Ax(b)$.
Moreover, because $\mathcal{C}'$ lies in the subspace spanned by
$U \in \mathbb{C}^{s\times t}$, we have
$\mathcal{P}_{\mathcal{C}'}(b) = \mathcal{P}_{\mathcal{C}'}(UU^*b)$.
It then follows from the fact Euclidean projection onto a convex set
($\mathcal{P}_{\mathcal{C}'}$) is non-expansive, that
\begin{align*}
\sigma_{\min}(A)\norm{x(b_1) - x(b_2)}_2
&\leq \norm{y(b_1) - y(b_2)}_2\\
& = \norm{\mathcal{P}_{\mathcal{C}'}(b_1) - \mathcal{P}_{\mathcal{C}'}(b_2)}_2\\
&= \norm{\mathcal{P}_{\mathcal{C}'}(UU^*b_1) - \mathcal{P}_{\mathcal{C}'}(UU^*b_2)}_2\\
& \leq \norm{UU^*(b_1 - b_2)}_2\\
& = \norm{U^*(b_1 - b_2)}_2. 
\end{align*}
\end{proof}

\subsubsection{Measurement bases}\label{Sec:Bases}

In our discussion so far we considered the estimation of individual
fidelities by sampling random circuit instances and processing their
measurements. However, given a single basis it is possible to estimate
a large number of fidelities using the same measurements. When
considering a two-local Pauli-Lindblad noise model it suffices to
consider all of the nine $\{X,Y,Z\}^{\otimes 2}$ bases on each qubit
pair. Under some mild conditions on the qubit topology, we now show
that is suffices to measure using a total of nine bases. That is,
there exist nine Pauli strings such that the substrings corresponding
to a pair of connected qubits cover all nine local bases.

\begin{theorem}\label{Thm:Combine}
  Given a qubit topology $(\mathcal{V},\mathcal{E})$ whose vertices
  are ordered in such a way that no vertex $v\in\mathcal{V}$ is
  preceded by more than two connected vertices. Then there exist nine
  Pauli strings such that for each $(v_i,v_j) \in \mathcal{E}$ the
  substrings at locations $v_i$ and $v_j$ exactly cover
  $\{X,Y,Z\}^{\otimes 2}$.
\end{theorem}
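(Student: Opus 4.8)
The plan is to build the nine strings from linear functionals over the three-element field, which turns the covering requirement into a graph-colouring problem that the ordering hypothesis settles greedily.

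First I identify $\{X,Y,Z\}$ with $\mathbb{Z}_3$ and the index set $\{1,\dots,9\}$ with $\mathbb{Z}_3^2$, writing a generic index as $m=(x,y)$. I then restrict attention to strings of a special form: to each vertex $v\in\mathcal{V}$ I attach a pair $(a_v,b_v)\in\mathbb{Z}_3^2$ and declare the symbol of the $m$-th string at position $v$ to be $a_v x+b_v y$ modulo $3$. With this choice, the nine symbol-pairs appearing at a fixed pair of positions $(v_i,v_j)$, as $m$ ranges over $\mathbb{Z}_3^2$, form the image of the linear map $\mathbb{Z}_3^2\to\mathbb{Z}_3^2$ whose matrix has rows $(a_{v_i},b_{v_i})$ and $(a_{v_j},b_{v_j})$. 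Hence those substrings exactly cover $\{X,Y,Z\}^{\otimes 2}$ if and only if this $2\times2$ matrix is invertible over $\mathbb{Z}_3$, equivalently iff the two rows are both nonzero and not proportional.

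The nonzero vectors of $\mathbb{Z}_3^2$ fall into exactly four proportionality classes, represented by $(1,0)$, $(0,1)$, $(1,1)$, $(1,2)$; I view these classes as colours. By the previous step it now suffices to colour the vertices so that every edge is bichromatic. This is where the hypothesis enters: process the vertices in the given order, and on reaching a vertex at most two of its neighbours have already been processed, so at most two colours are forbidden; since four colours are available (three would already suffice) a legal colour always remains, and a greedy choice at each vertex produces the colouring. The corresponding nine linear strings are the ones claimed (and if $\mathcal{E}=\emptyset$ the statement is vacuous and any nine strings work).

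The only point needing care is the equivalence in the second paragraph — that ``the nine substring-pairs at $v_i,v_j$ exhaust $\{X,Y,Z\}^{\otimes2}$'' coincides with invertibility of the coefficient matrix, and that invertibility over $\mathbb{Z}_3$ is precisely non-proportionality of its two rows — but this is routine finite linear algebra. I expect no genuine obstacle: the $2$-degeneracy assumption on the vertex ordering is exactly what guarantees the greedy colouring step cannot get stuck.
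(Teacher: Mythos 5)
Your proof is correct, and it takes a genuinely different route from the paper's. The paper argues directly on the strings: it processes vertices in the given order, assigns permutations of $X,Y,Z$ in blocks determined by the already-fixed predecessors, and in the hardest case (two connected predecessors) verifies by an explicit case enumeration, displayed in a figure, that a compatible assignment for the new vertex always exists --- essentially checking by hand that two mutually ``orthogonal'' order-3 assignments can always be extended by a third. Your construction instead restricts from the outset to the linear family of strings indexed by $\mathbb{Z}_3^2$, where the covering condition at an edge becomes invertibility of a $2\times 2$ matrix over $\mathbb{Z}_3$, i.e.\ non-proportionality of the two row vectors; the four proportionality classes of nonzero vectors then act as colours, and the ordering hypothesis feeds a greedy $4$-colouring (indeed $3$ classes would already suffice). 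This replaces the paper's figure-based case analysis with routine finite linear algebra, at the cost of producing only the structured (linear) solutions rather than the more general permutation-based ones the paper's procedure can output; both arguments use the degeneracy-2 ordering in exactly the same place, namely to guarantee that at most two constraints are active when a vertex is processed. The only small caveats --- that ``exactly cover'' follows from surjectivity of a map between two nine-element sets, and that the edgeless case is vacuous --- you have already addressed.
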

\begin{proof}
  Given a vertex $v_i$, there are three cases to consider. In the
  first case, none of the predecessors of $v_i$ is connected to $v_i$
  and we simply assign a random permutation of three instances of $X$,
  $Y$, and $Z$ to location $v_i$ of the strings. In the second case,
  $v_i$ is connected to exactly one predecessor, vertex $v_j$. We
  assign a random permutation of $X$, $Y$, and $Z$ to the string
  location $v_i$ for those strings where $v_j$ is equal to $X$, and
  repeat the same for $Y$ and $Z$. In the third case $v_i$ is
  connected to two predecessors, $v_j$ and $v_k$. Assuming, without
  loss of generality that the strings are ordered such that the first
  three strings have $X$ at location $v_j$, followed by three strings
  with $Y$ and then three strings with $Z$. We can freely reorder the
  groups of three strings as well as the strings within each
  group. The possible values for $v_k$ can then always be reordered to
  those given in Figure~\ref{Fig:Complete}, where the string values at
  location $v_k$ are indicated by shades of gray. The figure also
  provides an example assignment for Pauli character at the current
  location, $v_i$, such that that each block of three as we all each
  shade of gray contains each of $X$, $Y$, or $Z$ exactly once. It
  follows that the substrings of locations $(v_i,v_j)$ and $(v_i,v_k)$
  contain the required nine strings of length two, as desired.
\end{proof}

\begin{figure}[t]
\centering
\includegraphics[width=0.95\textwidth]{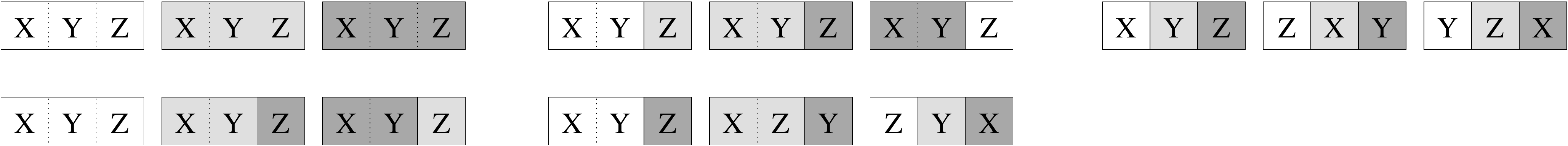}
\caption{Assigning Pauli bases for a vertex connected to two vertices
  for which the bases are already fixed. Each group of three locations
  corresponds to a permutation of $X$, $Y$, and $Z$ values for the
  first node. The shades of gray represent $X$, $Y$, and $Z$ in some
  order for the second node. Example assignments for the basis values
  of the current vertex are as shown. Note that each block of three
  locations, as well as each shade of gray, contains each basis
  exactly once.}\label{Fig:Complete}
\end{figure}

Another way to view the conditions for Theorem~\ref{Thm:Combine} is
that we iteratively visit vertices such that no more than two
connected vertices has already been visited. This condition applies
for commonly used two-dimensional grid and heavy-hexagon topologies.
For a regular two-dimensional grid this can be done in a left-to-right
and top-to-bottom fashion. The vertices of the heavy-hexagon topology
have a maximum degree of three and no two such vertices are
connected. As a very simple algorithm we could, for instance, first
sample values for the isolated vertices with degree three, which then
leaves only vertices of degree one or two, which are then easily
completed.

\subsubsection{Overall noise-learning complexity}

When using Hoeffding's inequality~\eqref{Eq:Hoeffding} we can select
the probability $\delta$ with which the estimated fidelity
exceeds~$\epsilon$. When considering $K$ different fidelity estimates,
each with failure probability $\delta$ and possibly correlated, it
follows from the union bound that the probability that at least one
fails is bounded above by $K\delta$. This means that all
$K = \vert\mathcal{B}\vert$ fidelity estimates are simultaneously
$\epsilon$ accurate with probability at least $1 - K\delta$,
regardless of whether they are estimated independently or using the
shared sampled obtained for the nine different bases as described
above. For a desired overall success probability of $1 - \delta'$ it
thus suffices to choose $\delta = \delta'/\vert\mathcal{B}\vert$.
Substitution in Eq.~\eqref{Eq:Hoeffding} and rearranging then gives a
sample complexity of
\[
N \geq \frac{2\log(2\vert\mathcal{B}\vert/\delta')}{\epsilon^2}
\]
circuit instances per basis. In case we use nine bases, each with
depths zero and $k$, this gives a total number of $18\lceil N\rceil$
circuit instances. The value of $k$ may not be known in advance, but
we may select a value $k_{\max}$ and then use a binary search to find
the largest $k$ for which all fidelities are above $1/2 + \epsilon$.
This takes at most $\lceil\log_2(k_{\max})\rceil$ trials. For these to
all succeed with probability at least $1-\delta'$, we can choose
$\delta = \delta' / (\vert\mathcal{B}\vert
\cdot\lceil\log_2(k_{\max})\rceil)$.

\section{Noise learning for two-qubit Clifford gates with crosstalk} \label{Sec:TwoQubitGates}

The results from the previous section also apply to noise channels
associated with layers of arbitrary Clifford gates. For instance, we
may have a layer of controlled-{\sc{not}} (CX) or
controlled-phase~(CZ) operations whose implementation is subject to
noise. Twirling the associated noise is possible by adding pairs of
Pauli operations before and after the operation such that the second
Pauli equals the first up to conjugation by the ideal Clifford
operator associated with the layer, up to a global phase. Learning
procedures of noise in such circuit families for more general Pauli
channels have been derived
in~\cite{ERH2019WPMa,PhysRevX.4.011050,HEL2019XVWa}. Given estimates
of all fidelities in $\mathcal{B}$ we can fit the noise model and
apply error mitigation with the same theoretical guarantees without
any change.  The one significant difference from the single-qubit
scenario, however, lies in the benchmarking process to estimate the
fidelities.

Assuming the noise channel $\Lambda$ of a noisy CZ gate has been
twirled to a Pauli channel, we can then consider the fidelity of Pauli
IX. This Pauli is one of the different components of the initial state
after applying a ZX basis change, obtained by applying a Hadamard gate
on the second qubit. Given that $\Lambda$ is diagonal in the Pauli
basis, applying the noise channel incurs a multiplicative fidelity
term $f_{IX}$, while leaving the Pauli term itself unchanged. Applying
the ideal CZ gate corresponds to conjugation with the CZ operator,
which changes IX to ZX. For the second application of the noisy CZ
gate we first apply the noise channel $\Lambda$, which now incurs an
$f_{ZX}$ fidelity term since the current Pauli is ZX. Finally,
applying the second ideal CZ gate changes the Pauli back to the
initial IX.  Repeated application of the gate, as before, may
therefore give rise to exponentiated products of terms, such as
$(f_{IX}f_{ZX})^k$. This process, along with the Pauli-transfer
diagram for CZ, can be illustrated as follows:
\begin{center}
\begin{tabular}{ccc}
\raisebox{21pt}{\includegraphics[width=0.40\textwidth]{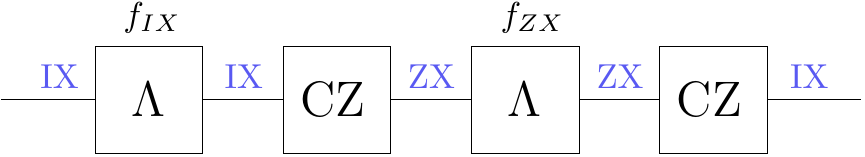}}& \hspace*{20pt}&
\includegraphics[width=0.20\textwidth]{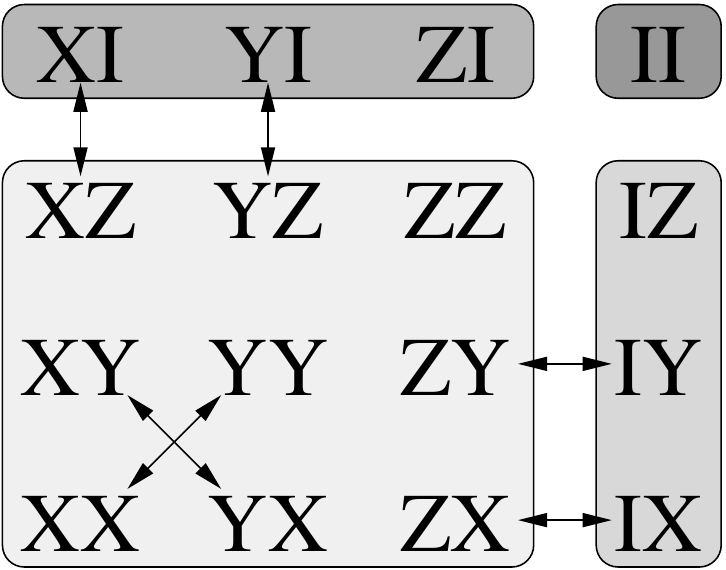}
\end{tabular}
\end{center}
For Pauli terms that are invariant under conjugation by CZ, such at
IZ, and ZI, we obtain powers of the individual fidelities
themselves. For other Paulis that are not invariant, such as XX, we
can engineer powers of the associated fidelities by inserting
additional single-qubit gates after the noisy gate of interest (see
Figure~\ref{Fig:Layers2}e). For instance, for XX we can map the
resulting Pauli YY back to XX by applying phase gates. Note that this
is possible only if application of the gate does not change the
support of the Pauli. For the Pauli pairs indicated by the horizontal
and vertical arrows in the transfer diagram, including the IX-ZX pair
discussed earlier, we cannot resolve individual fidelities this
way. However, given only products of fidelities complicates extracting
individual fidelities: the equality
$\sfrac{1}{\alpha}f_{IX}\cdot\alpha f_{ZX} = f_{IX}f_{ZX}$ holds for
all nonzero values of $\alpha$.

There are various ways of dealing with this degeneracy. The first
approach is to assume that the two fidelities appearing as a pair are
equal. This assumption, which we refer to as the symmetry assumption
throughout this work, allows us to use existing benchmark results and
directly extract the desired fidelities from the estimated cross terms
by simply taking the square root of the product. To motivate this,
consider the Lindblad evolution using a Hamiltonian $H=\pi\cdot CZ/2$,
where $CZ$ denotes the $4\times 4$ matrix representation of the CZ
operator in the standard basis. When setting the diffusive part of the
Lindbladian to a Pauli channel, we observed in preliminary simulations
that conjugate Pauli pairs under the time evolution of the Lindbladian
(which implements the noisy CZ operation) have the same fidelity. This
also applies for resolvable fidelities, as seen in
Figure~\ref{fig:dd-learning-1} for CX gates.

In randomized benchmarking it is common to assume that certain gates,
such as Clifford gates are subject to the same noise channel. As a
second approach, we could therefore make the reasonable assumption
that CZ and CX gates are affected by the same noise. Given that the CZ
gate is implemented as CX conjugated by $(I\otimes H)$, we have
\begin{align*}
C_X\Lambda
& = \tilde{C}_X\\
& = (I\otimes H)\tilde{C}_Z(I\otimes H)\\
& = (I\otimes H)C_Z\Lambda(I\otimes H)\\
& = \underbrace{(I\otimes H)C_Z(I\otimes H)}_{C_X}(I\otimes H)\Lambda(I\otimes H).
\end{align*}
We must therefore have that
$\Lambda = (I\otimes H)\Lambda(I\otimes H)$.  This implies that the
fidelities for $P_1X_2$ and $P_1Z_2$ are the same for any Pauli $P$ on
the first qubit. For the CZ gate this would amount to the assumption
that $f_{IX} = f_{IZ}$, and likewise for the remaining three pairs of
cross terms. Given that we can learn $f_{IZ}$, $f_{XX}$, $f_{YX}$ and
$f_{ZZ}$, we can use this assumption to then infer the fidelities
$f_{IX}$, $f_{XZ}$, $f_{YZ}$, and $f_ZX$.

A third option is to estimate individual fidelities by applying the
noisy gate only once. The main difficulty here is that the initial and
final Pauli component are generally no longer the same. Consequently,
the readout-error correction achieved by dividing with the appropriate
zero-depth fidelity~\cite{BER2020MTa-arXiv} can only remove the SPAM
errors completely when the initial state is exactly the ground state
$\ket{0}$. We consider the topic of finding alternative techniques
that can accurately estimate the individual fidelities for two-qubit
gates as an important topic for future work.

Given that most of our fidelity estimates now come in pairs we no
longer have access to a vector of individual fidelities $f$, but
rather have the elementwise product of vectors $f_1$ and $f_2$. Given
the Pauli terms corresponding to the entries in the vectors we can
form binary matrices $M_1$ and $M_2$. In the ideal case we then have
that $M_1\lambda = -\log(f_1)/2$ and $M_2\lambda =
-\log(f_2)/2$. Adding the two it follows that for pairwise products we
have $(M_1+M_2)\lambda = -\log(f_1\cdot f_2)/2$, where $\cdot$ denotes
elementwise multiplication. We can again obtain the model parameters
$\lambda$ by solving a nonnegative least-squares problem, this time
with $M = M_1+M_2$ and $f = f_1\cdot f_2$. The model parameters are
again unique when $M_1+M_2$ is full column rank, and we next consider
conditions on the measured fidelity pairs that guarantee this.

\begin{figure*}[t]
\centering\setlength{\tabcolsep}{8pt}
\begin{tabular}{ccccc}
\includegraphics[height=74pt]{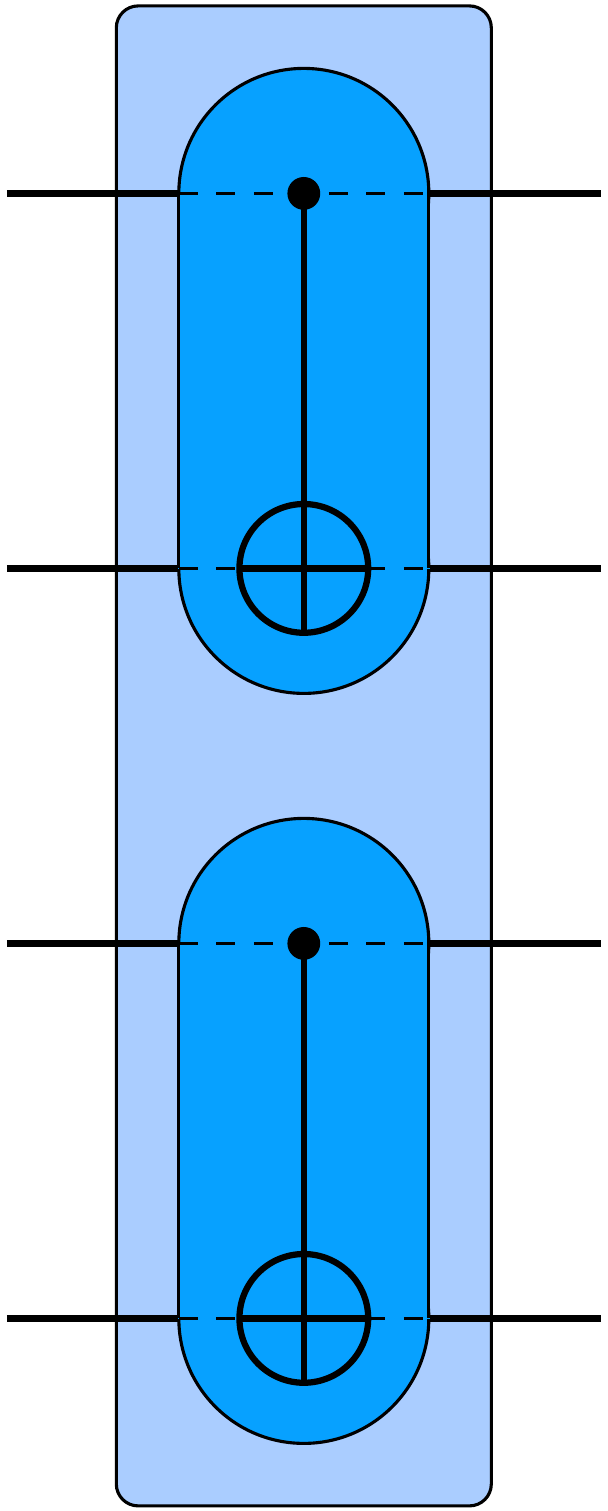}
&
\includegraphics[height=74pt]{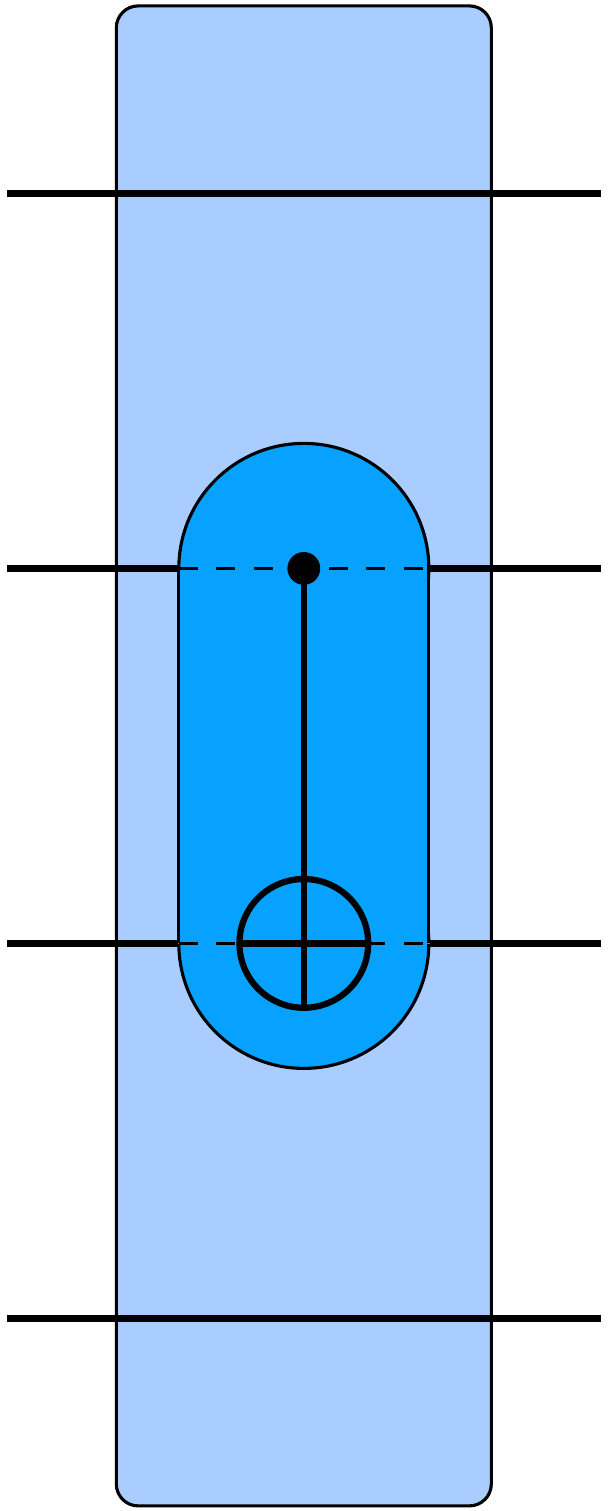}
&
\includegraphics[height=74pt]{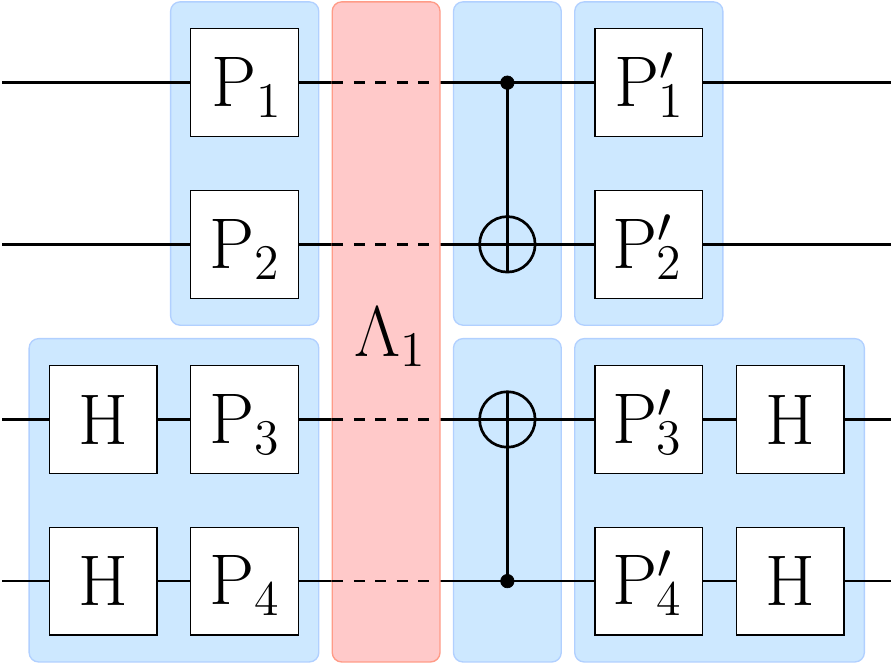}
&
\includegraphics[height=74pt]{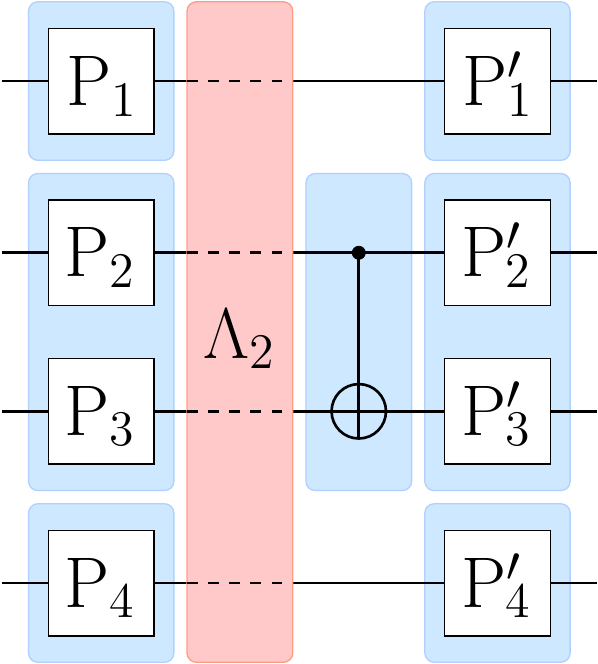}
&
\raisebox{-10pt}{\includegraphics[height=86pt]{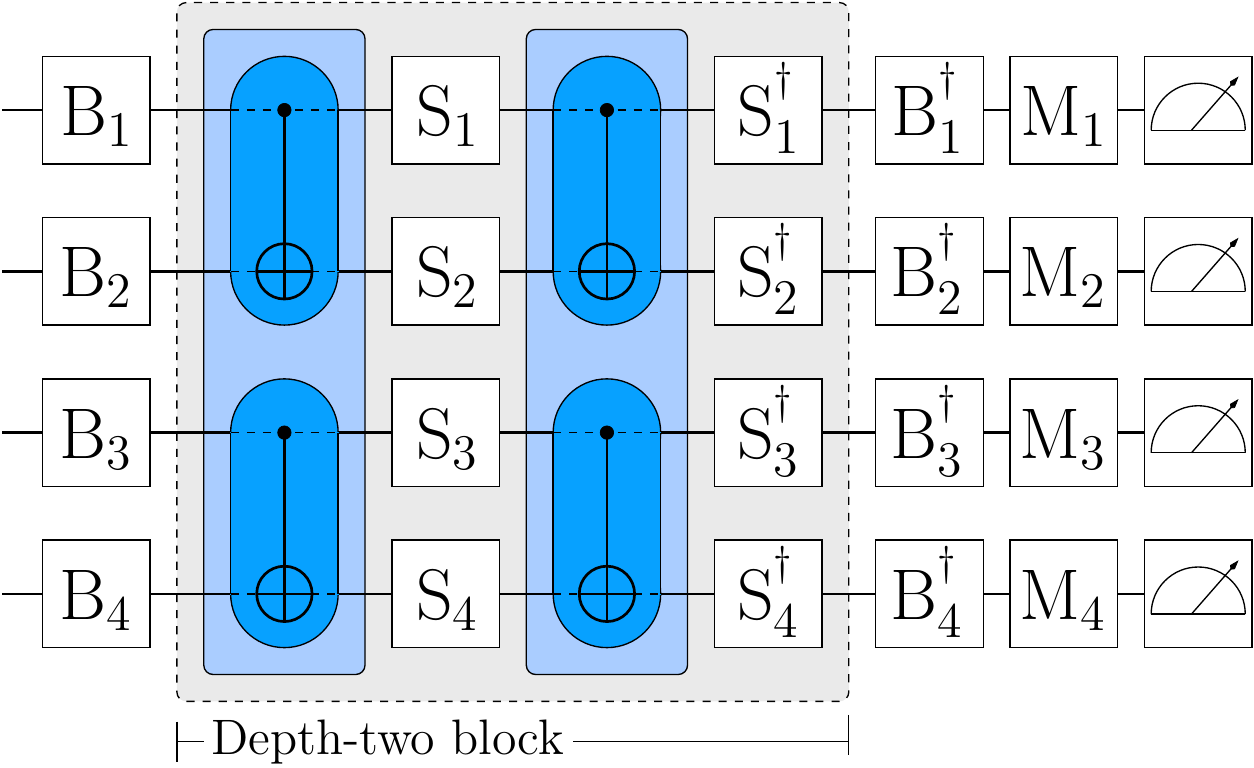}}\\[6pt]
{\bf{a}} & {\bf{b}} & {\bf{c}} & {\bf{d}} & {\bf{e}}\\
\end{tabular}
\caption{\textbf{Overview of circuit layers and noise-learning
    circuits.} Example four-qubit layers with ({\bf{a}}) two {\sc{cx}}
  gates and ({\bf{b}}) a single {\sc{cx}} gate. These and other gates
  such as {\sc{cz}} and {\sc{cy}} are implemented using {\sc{cx}}
  gates in the native direction and flanked with single-qubit gates
  where needed. The core part of the layer that includes the {\sc{cx}}
  gates is twirled by randomly sampled Paulis $P_1P_2P_3P_4$ and their
  conjugation under the core gates, $P'_1P'_2P'_3P'_4$. Doing so for
  the layers in ({\bf{a}}) and ({\bf{b}}) gives implementations as
  shown in ({\bf{c}}) and ({\bf{d}}), respectively, with the noise
  associated with the core gates illustrated in red. Benchmarking of
  the layer fidelities is done using circuits ({\bf{e}}) where the
  depth-two block is repeated zero or more times for a given basis
  $B_1B_2B_3B_4$ and with randomly selected readout-mitigation gates
  $M_i$, which are sampled uniformly from identity and $X$ gates. The
  $S_i$ gates are used to control which Paulis are included in the
  fidelity pairs.}\label{Fig:Layers2}
\end{figure*}

\subsection{Full rankedness of $M$ when dealing with fidelity pairs}\label{Sec:FullRankM2}

Given a layer of non-overlapping two-qubit Clifford gates such that
each gate squares to the identity and such that the support of a Pauli
and that of the conjugation by the gate overlap (for instance,
conjugation of Pauli IX would not result in Pauli XI). This condition
is met for commonly used gates such that CX or CZ gates. We would like
to construct a Pauli-Lindblad noise model for the qubits that are
included in the gates, along with additional qubits for context, if
needed. The model terms consist of all unit-weight Paulis supported on
the model qubits, as well as all weight-two Paulis supported on pairs
of model qubits that are physically connected. We denote the complete
list of Pauli terms by $K$. Benchmarking using even number of layer
applications allows us to estimate the product of certain fidelity
pairs in a SPAM error free manner. Other fidelities can be estimated
based on the application of single layers, or based on symmetry
assumptions. In order to fit the noise model we assume access to the
following fidelity estimates of the Pauli noise channel: (1) for each
qubit $i \in [1,n]$ we have access to the fidelities for all
unit-weight Paulis $V_i = \{X_i,Y_i,Z_i\}$; and (2) for each connected
qubit pair $(i,j)$ we have access to products of fidelities for
$P_1 \in P_{i,j} = \{X_iX_j,X_iY_j,\ldots,Z_iZ_j\}$ and corresponding
Paulis $P_2 \in P_{i,j}'$ following application of the layer. We
assume that the Pauli terms on qubits $i$ and $j$ of $P_2$ are either
the same as those of $P_1$, or change to the identity. This can always
be achieved by inserting appropriate single-qubit gates during
benchmarking. For qubit pairs $(i,j)$ without a gate but with gates on
each of the qubits, $P_2$ can have a weight up to four. For pairs with
a gate the weight of $P_2$ is either one or two. The weight of Paulis
$P_1$ is always two. Collecting all $V_i$ and $P_1$ terms in list
$B_1$ and all $V_i$ and $P_2$ terms in list $B_2$ such that we have
the fidelity product for pairs at corresponding locations in the list
and setting the list of all model terms as $K=B_1$, we have the
following result.

\begin{theorem}
Given $B_1$, $B_2$ and $K$ as above, then $M = \mathcal{M}(B_1,K) + \mathcal{M}(B_2,K)$ is full rank. 
\end{theorem}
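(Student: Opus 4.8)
The plan is to regard $M$ as the matrix $\mathcal{M}(K,K)$ perturbed by $\mathcal{M}(B_2,K)$, and to re-run the block-elimination argument of Section~\ref{Sec:FullRankM} on the perturbed matrix. Since $K=B_1$, the first summand $\mathcal{M}(B_1,K)$ is exactly $\mathcal{M}(K,K)$, which is full rank by the theorem of Section~\ref{Sec:FullRankM} (apply it with the supports taken to be the connected qubit pairs carrying model terms, together with any model qubit not lying in one, so that the resulting vertex sets are precisely the supports of the Paulis in $K$ and $\mathcal{P}=K$). Consequently $M$ is a square matrix of order $|K|=3n+9e$ for $n$ model qubits and $e$ connected pairs, and it suffices to show that adding $\mathcal{M}(B_2,K)$ does not destroy invertibility. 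One simplification is available at the outset: for a unit-weight benchmark Pauli the paired Pauli satisfies $P_2=P_1$, so the rows of $M$ indexed by singletons are exactly twice the corresponding rows of $\mathcal{M}(K,K)$; after factoring out that fixed diagonal row scaling, only the rows indexed by weight-two Paulis on connected pairs carry a genuine perturbation.

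Following Section~\ref{Sec:FullRankM}, I would order rows and columns by the size of the support of the associated Pauli --- singletons first, then connected pairs --- so that every block $(a,b)$ of $\mathcal{M}(B_1,K)$ takes the tensor form $\tfrac12\bigl(O(a,b,\mathbb 1)-O(a,b,S)\bigr)$ built from the maps $O(\cdot,\cdot,op)$ of Eq.~\eqref{Eq:OS}. The key new ingredient is the structure of $\mathcal{M}(B_2,K)$. Writing $C=C^{\dagger}$ for the self-inverse Clifford layer, the entry of $\mathcal{M}(B_2,K)$ in row $P_1$ and column $Q$ equals $\langle CP_1C,\,Q\rangle_{sp}=\langle P_1,\,CQC\rangle_{sp}$, since Clifford conjugation preserves the symplectic form; equivalently the effective row Pauli is the conjugate $P_2=CP_1C$. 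By the hypothesis that each gate's support overlaps that of its conjugate, together with the normalization (ensured by inserted single-qubit gates) that $P_2$ agrees with $P_1$ on the qubits of its pair or is the identity there, $P_2$ is supported on a small explicit neighbourhood: on the pair itself when that pair carries a gate, and on at most the four qubits $\{i,i',j,j'\}$ when the connected pair $(i,j)$ carries no gate but its endpoints are gated to $i',j'$. Its restriction to any column's support is then again a single-qubit tensor pattern of the $\mathbb 1$ / $Q$ / $\mathbf e$ / $\mathbf e^{T}$ type used in Section~\ref{Sec:FullRankM}, so every row of $M$, restricted to a block-column, lies in the same class of sums of tensor-structured vectors that is closed under the sweeps of Eq.~\eqref{Eq:SweepOp}; the new features are merely that within one block-row the individual rows may carry slightly different support patterns (since the weight of $P_2$ depends on $P_1$), and that $\mathcal{M}(B_2,K)$ couples a pair-indexed block only to blocks on supports of size at most two, so the support-size ordering driving the sweeps is still respected.

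I would then apply the sweeps of Eq.~\eqref{Eq:SweepOp} to $M$ and verify, by the same induction as in Section~\ref{Sec:FullRankM} but now carrying the $\mathcal{M}(B_2,K)$-terms along, that every block stays a sum of $\{\mathbb 1,Q\}$-tensors with the prescribed padding. Since the sweeps clear blocks in order of increasing support size and the perturbation creates no couplings to larger supports, the elimination again terminates in a block-diagonal matrix whose diagonal block for a model support $S$ is $Q^{\otimes |S|}$ from $\mathcal{M}(B_1,K)$ plus a correction from $\mathcal{M}(B_2,K)$: for singletons the correction simply doubles $Q$, and for a gated pair it is the image of the $9\times 9$ block under the permutation of Paulis induced by the gate, so one checks that $Q^{\otimes 2}$ plus that permuted copy remains nonsingular --- the step where the gate-support-overlap hypothesis is decisive, since it is what prevents the permutation from aligning the block with a singular one.

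I expect the main obstacle to be the bookkeeping for connected pairs whose endpoints carry gates to \emph{other} qubits: there $P_2$ has weight up to four and is not itself a model Pauli, so the corresponding rows of $\mathcal{M}(B_2,K)$ are not rows of $\mathcal{M}(K,K)$, and the tensor-structured induction of Section~\ref{Sec:FullRankM} must be genuinely extended over these enlarged qubit sets --- checking that the extra padding slots transform correctly under the sweeps, that heterogeneous support patterns within a block-row do not spoil the elimination, and that such rows get cleared against the singleton and gated-pair blocks without re-coupling parts already diagonalized. The other delicate point is establishing nonsingularity of the perturbed diagonal pair-blocks for every admissible self-inverse two-qubit Clifford gate, not only \textsc{cx} and \textsc{cz}, which again rests on the overlap assumption.
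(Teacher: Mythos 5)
Your overall strategy --- treating $\mathcal{M}(B_1,K)=\mathcal{M}(K,K)$ as the invertible base case and carrying the $\mathcal{M}(B_2,K)$ perturbation through the same support-ordered block elimination --- is the route the paper takes, and your reduction of the singleton rows to a harmless factor of two is correct. But two steps as written do not go through. First, for a gated pair you describe the $B_2$ contribution to the diagonal $9\times 9$ block as the image of $Q^{\otimes 2}$ under the permutation of Paulis induced by the gate, and propose to verify that $Q^{\otimes 2}$ plus that permuted copy is nonsingular. That is the wrong object and, taken literally, the check fails: conjugation by \textsc{cz} swaps $XX\leftrightarrow YY$ (and sends $XZ$ to the weight-one $XI$, so there is not even a permutation of the nine weight-two Paulis), and for any transposition $\Pi$ the matrix $(I+\Pi)Q^{\otimes 2}$ has two equal rows and is singular. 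What saves the block is not the support-overlap hypothesis but the normalization you yourself quoted earlier: after the inserted single-qubit gates, $P_2$ restricted to the pair either equals $P_1$ or drops characters to the identity, so its row of $\mathcal{M}(B_2,K)$ is either identical to the corresponding $B_1$ row or is a singleton row that the sweep annihilates. The paper's elimination therefore yields $-2\,Q\otimes Q$ plus a row-zeroed copy of itself, i.e.\ $D(Q\otimes Q)$ with $D$ diagonal with entries $-2$ and $-4$, which is trivially invertible; no permutation lemma is needed or available.

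Second, the case you flag as ``the main obstacle'' --- ungated connected pairs whose endpoints carry gates to other qubits, where $P_2$ has weight up to four and is not a model Pauli --- is exactly where the paper does the remaining work, and your sketch leaves it unresolved. The paper's device is to factor each such $P_2$ into two factors with disjoint supports, one living on $i$ or on the gated pair $(i,i')$ and one on $j$ or $(j,j')$, and to sweep each factor against the already-diagonalized singleton or gated-pair row blocks; because the gates are non-overlapping ($i'\neq j'$) the two factors never collide, and the residual diagonal block is again $-4\,Q\otimes Q$. Without this factorization, or an equivalent mechanism for clearing those weight-four rows, the induction you outline does not close, so as it stands the proposal establishes the easy cases but not the theorem.
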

\begin{proof}
  We consider increasingly large blocks of $M$ and show that each of
  them is full rank. We start with the subblock corresponding to the
  unit-weight Paulis. We then add blocks corresponding to qubit pairs
  that contain a gate, and finally add the qubit pairs that do not
  contain a gate.  Starting with some notation, define by $I_n$ the
  $n\times n$ identity matrix and let
\[
Q = \left(\begin{array}{ccc}0&1&1\\1&0&1\\1&1&0\end{array}\right),
\quad
I = \left(\begin{array}{ccc}1&0&0\\0&1&0\\0&0&1\end{array}\right),
\quad
e = \left(\begin{array}{c}1\\1\\1\end{array}\right),
\quad
\mathbb{1} = Q+I = ee^T =
\left(\begin{array}{ccc}1&1&1\\1&1&1\\1&1&1\end{array}\right).
\]
For individual qubits it can be seen that $\mathcal{M}(V_i,V_j)$ is
$Q$ when $i=j$ and $0$ otherwise.  Setting
$V = V_1\oplus\cdots\oplus V_n$, where $\oplus$ denotes list
concatenation, it then follows that $\mathcal{M}(V,V) = I_n\otimes Q$,
which is full rank since both $I_n$ and $Q$ are full rank. Next, we
show that the matrix remains full rank if we add a single edge with a
gate. We illustrate this step on an example with three qubits and add
an edge on qubits (1,2). Consider
$\mathcal{M}(V\oplus P_{1,2},V\oplus P_{1,2})$, which has the
following structure
\begin{center}
\begin{tabular}{c|ccc|c}
&{\footnotesize{$V_1$}}&{\footnotesize{$V_2$}}&{\footnotesize{$V_3$}}
&{\footnotesize{$P_{1,2}$}}\\
\hline
{\footnotesize{$V_1$}} & $Q$ & $0$ & $0$ & $Q \otimes e^T$\\
{\footnotesize{$V_2$}} & $0$ & $Q$ & $0$ & $e^T \otimes Q$\\
{\footnotesize{$V_3$}} & $0$ & $0$ & $Q$ & 0\\
\hline
{\footnotesize{$P_{1,2}$}} & $Q\otimes e$ & $e\otimes Q$&0
& $I\otimes Q + Q\otimes I$\\
\end{tabular}
\end{center}
We can eliminate the $\mathcal{M}(P_{1,2},V)$ block by subtracting the
Kronecker product of row-block for $V_1$ by $e$ and the Kronecker
product of $e$ with the row-block for $V_2$. Doing so changes to
lower-right $\mathcal{M}(P_{1,2},P_{1,2})$ block to
\[
I\otimes Q + Q\otimes I - Q\otimes \mathbb{1} - \mathbb{1}\otimes Q =
I\otimes Q + Q\otimes I - Q\otimes (Q+I) - (Q+I)\otimes Q
= -2Q\otimes Q,
\]
which is full rank. This means that $M_{1,2}$ is full rank. Now
consider $M'_{1,2} = \mathcal{M}(V\oplus P_{1,2}',V\oplus E_{1,2})$ in
which we replaced $P_{1,2}$ by $P'_{1,2}$ in the rows. The rows in
$M'_{1,2}$ corresponding to elements in $E'_{1,2}$ that are weight two
exactly match those in $M_{1,2}$. The remaining rows correspond to
Paulis with weight one and therefore correspond to one of the rows in
$V$. Elimination of the lower-left block therefore results in a
lower-right block that is $-2Q\otimes Q$, but with some rows zeroed
out. The rows we sweep with in $M_{1,2}$ and $M'_{1,2}$ are identical,
which means we can perform the row sweeps with half the weight in the
sum $M_{1,2}+M'_{1,2}$. The resulting matrix is $D(Q\otimes Q)$ with
diagonal matrix $D$ with terms $-2$ and $-4$. That means that even
though $M'_{1,2}$ may not be full rank, the sum of the two matrices
is. Given that the qubit pairs with gates do not have any overlap we
can simply repeat the same procedure for each such pair.
Moving on to pairs $(i,j)$ without a gate we note that we can again
factor each Pauli term as a product of two Paulis. If there is a gate
$(i,i')$ then one part of the factorization will be a Pauli supported
on either $i$ or $(i,i')$. If there is no gate on $i$, then the Pauli
is simply supported on $i$. The same applies to qubit $j$ with
possible gate $(j,j')$. Given that gates do not overlap we never have
$i'=j'$ and the supports of the two factors will therefore always be
disjunct. Based on the assumptions we have that corresponding Paulis
in $P_{i,j}$ and $P_{i,j}'$ have the same term for qubit $i$ and
likewise for qubit $j$. That means that
$\mathcal{M}(P_{i,j},P_{i,j}) = \mathcal{M}(P_{i,j}',P_{i,j}) =
I\otimes Q+Q\otimes I$. If qubit $i$ does not have a gate we sweep the
first Pauli factors with a row from $V_i$. If qubit $i$ does have an
incident gate we can sweep with the appropriate row from the
$P_{i,i'}$ (or $P_{i',i}$) block if the support changes, and otherwise
use a row from $V_i$. Doing the same for $j$, we see that we can sweep
the lower-right block of the new matrix and end up with a combined
$-4Q\otimes Q$ lower-right block. As an aside, note that sweeping is
done directly on the combined matrix, since all Paulis $P_{i,j}$ on
pairs $(i,j)$ with a gate have weight two, whereas we possibly need to
sweep with their weight-one counterpart found only in $P_{i,j}'$.
\end{proof}

\section{Probabilistic error cancellation and error-analysis}
The purpose of noise mitigation is to accurately estimate the
expectation value of observables. For a circuit consisting of ideal
operations $\mathcal{U}_l\circ\cdots\circ\mathcal{U}_1$, initial state
$\rho_0$, and observable $A$, which we assume to have an operator norm
$\norm{A} \leq 1$, we would like to estimate
\[
\langle A\rangle = \Tr\left[A \, \mathcal{U}_l\circ\cdots\circ\mathcal{U}_1(\rho_0)\right].
\]
Each of the maps $\mathcal{U}_i$ is available only through its noisy
version $\tilde{\mathcal{U}}_i = \mathcal{U}_i \circ \Lambda_i$, where
$\Lambda_i$ is twirled and assumed to be a Pauli-Lindbladian
channel. Using the techniques described earlier, we can learn this
channel in experiment up to an error as given in Theorem
\ref{Thm:Channel} giving rise to the channel estimate
$\hat{\Lambda}_i$. We can implement the inverse $\hat{\Lambda}_i^{-1}$
of this channel estimate in experiment as described in section
\ref{sec:Sample_inv}. \\

\subsection{Sampling from the inverse}\label{sec:Sample_inv}
The PEC error mitigation protocol asks that we sample the noise
inverse by a quasi-probabilistic technique described in
Section~\ref{Sec:QPNoiseInversion}. For the noise process we are
working with the Pauli-twirling method as explained in
Section~\ref{Sec:Twirling} and learn the resulting sparse noise model
following Section~\ref{Sec:Benchmarking}. Although our noise
model~\eqref{Eq:FinalForm} represents a Pauli channel, it is not in
the canonical form shown in~\eqref{Eq:PauliChannel}. If we denote by
$\mathcal{K}$ the set of $k$ values that are included in the noise
model, then it is easily seen that there there are
$2^{\vert\mathcal{K}\vert}$ different products of the identity and
$P_k$ terms in~\eqref{Eq:FinalForm}, each with a possibly different
weight. In order to find the coefficient for a certain Pauli $P$ in
the canonical representation~\eqref{Eq:PauliChannel} we would have to
identify and sum up weights of all products that result in this
particular Pauli to obtain the right coefficient in the canonical
expansion. Moreover, the error-mitigation method asks that we then
invert and re-normalized the expansion accordingly. Following these
steps as outlined directly is computationally clearly intractable.\\

Instead, we produce the samples from the inverse by exploiting the
product structure of the model~\eqref{Eq:FinalForm}. The channel
$\Lambda = \mbox{exp}[{\cal L}] $ is given as a product of
$\vert\mathcal{K}\vert$ individual (commuting),
c.f.~\eqref{Eq:SuperLindbladian2}, Pauli channels
$\left(w_k\rho + (1-w_k)P_k\rho P_k\right)$, with
$w_k = (1+e^{-2\lambda_k})/2$. The inverse of the overall channel then
reduces to the product of the individual inverse channels. We can
write these inverse channels as
$(2w_k - 1)^{-1}\left(w_k\rho - (1-w_k)P_k\rho P_k\right)$. The full
inverse channel is given then by the product
\begin{align}
    \Lambda^{-1}(\rho) = \gamma \prod_{k \in {\cal K}}\left(w_k\cdot - (1-w_k)P_k\cdot P_k\right)\rho,
\end{align}
where the sampling overhead $\gamma$ is given as the product of the
individual normalizing factors so that
\begin{equation}\label{Eq:Eta}
\gamma =  \prod_{k\in\mathcal{K}} (2 w_k-1)^{-1} = \exp\Big(\sum_{k\in\mathcal{K}}2\lambda_k\Big)\;.
\end{equation}
This means the application of the inverse $\Lambda^{-1}(\rho)$ can be
sampled according to the following steps. For every $k \in {\cal K}$
we sample the identity matrix with probability $w_k$, and $P_k$ with
probability $1-w_k$. Each time we sample a Pauli matrix $P_k$, we
record the minus sign $(-1)$. To produce a single sample of the full
inverse it then suffices to multiply all the (Abelian) Pauli terms we
have sampled as well as all observed signs. The final Pauli is then
inserted in the random circuit instance and the measurement sample for
this instance is then obtained by multiplying the observed outcome
with the final sign and the factor $\gamma$. This procedure has to be
applied at every layer $i = 1,2, \ldots l$ of the circuit, c.f. Fig~1a
(main text) so that all these factors compound. This means that the
sampling protocol has to be applied to the noise channel $\Lambda_i$
for each layer $i = 1,2,\ldots,l$. This means that every layer
contributes a multiplicative factor of $\gamma_i$ to the sampling
overhead resulting in the full overhead
$\gamma(l) = \prod_{i=1}^l \gamma_i$. Likewise, we have to record the
total number of times $m$ by which we have sampled a Pauli matrix for
all the layers, so that we can assign the global sign flip as
$(-1)^m$. Note, that this sampling procedure does not change the form
of the random quantum circuits we need to sample. In fact this error
mitigation procedure only uses instances of Pauli-twirled quantum
circuits and only modifies the classical distribution from which the
circuits are drawn and multiples the output by the factor
$(-1)^m\gamma(l)$. These additional steps are all taken only in
classical pre- and post-processing.

It is also possible to explicitly expand subsets of terms
in~\eqref{Eq:FinalForm} and work with Pauli channels that contain more
terms. Since combining terms we are able to decrease $\gamma$. This
enables us to make a trade-off between the computational complexity of
expanding the channels and sample complexity due to scaling parameter
$\gamma$.

\subsection{Error bounds for probabilistic error cancellation}
\label{Sec:ErrorAnalysisObservables}

Let us assume for simplicity that observable $A$ can be diagonalized
in the computational basis and has eigenvalues $X \in \{-1,+1\}$, as
is for example the case for Pauli observables. We absorb the factor
$\pm 1$ that originate from the quasi-probability sampling method,
c.f. section \ref{sec:Sample_inv} into the random variable $X$
already. Note, that the general case can be reduced to estimating
Pauli-observables or other binary measurements. Furthermore, while
considering the error bound for the PEC protocol, we assume that there
are no state preparation and readout errors. These can be addressed
through other means \cite{BER2020MTa-arXiv,BRA2021SKMa}. This means,
we can sample $N$ noise-mitigated circuit instances and measure the
observable $A$ to obtain $r = 1,2 \ldots N$ individual samples
$X_{r}$.  From these, we can estimate the observable expectation value
as
\begin{equation}\label{Eq:AHatN}
\langle \hat{A}_N\rangle := \gamma(l)\frac{1}{N}\sum_{r=1}^N X_{r} = \gamma(l)\mathbb{E}(X).
\end{equation}
The following Theorem provides a bound on the difference between the
actual and estimated expectation value for observable $A$.

\begin{theorem}\label{thm:error-bound}
  Assume that all noise channels $\Lambda_i$ are learned at each layer
  $i = 1,2, \ldots l$ of the circuit with a multiplicative error as in
  Theorem~\ref{Thm:Channel}. Then it holds with probability at least
  $1-\delta$ for $\delta > 0$, that
\[
\vert\langle{A}\rangle - \langle \hat{A}_N\rangle\vert
\leq 
(C_{\epsilon}^{l \tau} - 1) + \gamma(l) \sqrt{2\log(2/\delta)/N},
\]
where $N$ is the number of error-mitigation circuit instances,
$\gamma(l) = \prod_{i=1}^l \gamma_i$ is the product of the scaling
factors $\gamma_i$ for the estimated channels $\hat{\Lambda}_i$, and
$C_{\epsilon}$ and $\tau$ are as in Theorem~\ref{Thm:Channel}.
\end{theorem}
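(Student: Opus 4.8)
The plan is to split the total error $|\langle A\rangle - \langle\hat A_N\rangle|$ into a \emph{bias} term arising from the imperfectly learned channels and a \emph{statistical} term arising from finite sampling, and bound each separately. Write $\langle\hat A_N\rangle = \gamma(l)\,\mathbb{E}(X) + (\langle\hat A_N\rangle - \gamma(l)\mathbb{E}(X))$, where $\gamma(l)\mathbb{E}(X)$ is the exact expectation value produced by the (biased) mitigation procedure that uses the estimated inverses $\hat\Lambda_i^{-1}$, and the parenthetical difference is the Monte Carlo fluctuation. By the triangle inequality,
\[
|\langle A\rangle - \langle\hat A_N\rangle|
\leq |\langle A\rangle - \gamma(l)\mathbb{E}(X)|
+ |\gamma(l)\mathbb{E}(X) - \langle\hat A_N\rangle|.
\]
First I would handle the statistical term. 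Each $X_r\in[-1,1]$, so $\gamma(l)X_r\in[-\gamma(l),\gamma(l)]$; applying Hoeffding's inequality (Eq.~\eqref{Eq:Hoeffding} with $\beta=\gamma(l)$, or more precisely the version with range $2\gamma(l)$) to the $N$ i.i.d.\ samples gives $|\gamma(l)\mathbb{E}(X) - \langle\hat A_N\rangle| \leq \gamma(l)\sqrt{2\log(2/\delta)/N}$ with probability at least $1-\delta$. That reproduces the second summand in the claimed bound.

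Next I would bound the bias term. The mitigated procedure applies $\hat\Lambda_i^{-1}$ before the noisy layer $\tilde{\mathcal U}_i = \mathcal U_i\circ\Lambda_i$, so each layer effectively realizes $\mathcal U_i\circ(\hat\Lambda_i^{-1}\circ\Lambda_i) = \mathcal U_i\circ \mathcal{E}_i$ with residual error channel $\mathcal{E}_i := \hat\Lambda_i^{-1}\circ\Lambda_i$. Since $\hat\Lambda_i$ and $\Lambda_i$ are both Pauli channels with the same model Paulis, $\mathcal{E}_i$ is again a Pauli channel whose Pauli fidelities are the ratios $f_j^{(i)}/\hat f_j^{(i)}$, which by Theorem~\ref{Thm:Channel} all lie in $[C_\epsilon^{-\tau}, C_\epsilon^{\tau}]$. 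The plan is to show that such a channel is close to the identity in the relevant sense: because $\mathcal{E}_i$ is diagonal in the Pauli basis and $\|A\|\leq 1$, the error $|\Tr[A(\mathcal V(\rho)) - A(\mathcal V'(\rho))]|$ introduced by replacing one ideal-plus-identity layer by the ideal-plus-$\mathcal{E}_i$ layer is controlled by how far $\mathcal{E}_i$ is from the identity channel. Telescoping over the $l$ layers — inserting one $\mathcal{E}_i$ at a time and using that unitaries and quantum channels are contractive in the appropriate norm, and that all intermediate observables still have norm at most $1$ — yields $|\langle A\rangle - \gamma(l)\mathbb{E}(X)| \leq \prod_{i=1}^l c_i - 1$ where $c_i$ bounds the deviation of $\mathcal{E}_i$ from identity. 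The key numeric input is that the extreme Pauli-fidelity ratios of $\mathcal{E}_i$ are within $C_\epsilon^{\tau}$ of $1$, which gives $c_i \leq C_\epsilon^{\tau}$ and hence the bound $C_\epsilon^{l\tau} - 1$.

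Combining the two bounds and applying a union bound (absorbing the Theorem~\ref{Thm:Channel} failure probabilities and the Hoeffding failure probability into the single $\delta$, up to the harmless relabeling already implicit in the statement) gives the claimed inequality. The main obstacle is making the bias telescoping rigorous: one has to pick the right norm in which quantum channels and the observable behave contractively (the diamond norm on channels together with the operator norm on $A$ is the natural choice), verify that conjugation by the sampled twirling Paulis and the single-qubit interleaved layers does not spoil the Pauli-diagonal structure of the residual channels, and control the accumulation of $l$ multiplicative factors so that only the stated $C_\epsilon^{l\tau}-1$ (rather than something additively larger) appears. A secondary subtlety is that $\gamma(l)$ here is the \emph{estimated} sampling overhead $\hat\gamma(l)$; Theorem~\ref{Thm:Channel} also bounds $\hat\gamma$ against the true $\gamma$, so one should either state the bound in terms of $\hat\gamma(l)$ (as the theorem does) or fold the $\hat\gamma$-vs-$\gamma$ discrepancy into the bias term, which only changes constants.
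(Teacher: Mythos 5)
Your proposal is correct and follows essentially the same route as the paper: the same split into a Hoeffding-controlled statistical term $\gamma(l)\sqrt{2\log(2/\delta)/N}$ and a bias term bounded by telescoping the residual channels $\Lambda_i\circ\hat\Lambda_i^{-1}$ in the diamond norm, using the fidelity-ratio bounds from Theorem~\ref{Thm:Channel} to get $\dnorm{\mathcal{I}d-\Lambda_i\circ\hat\Lambda_i^{-1}}\leq C_\epsilon^\tau-1$ and solving the resulting recurrence to obtain $C_\epsilon^{l\tau}-1$. The subtleties you flag (choice of the diamond norm, and the fact that $\gamma(l)$ is the estimated overhead) are exactly the points the paper addresses in the remarks preceding its proof.
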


To simplify notation throughout the manuscript we have simply referred
to the noise channel as $\Lambda$ independently of whether we are
dealing with the ideal channel or its estimate $\hat{\Lambda}$
obtained from the noise-learning procedure. To account for a full
error analysis we now have to make an explicit distinction. Note
however, that crucially both the error-bound in the theorem
\ref{thm:error-bound}, as well as the quasi-probabilistic noise
inversion method in section \ref{sec:Sample_inv} depend on the
estimated value for $\gamma(l)$ obtained from the learning experiments
and do not need the knowledge of the ideal values for the exact
channel $\Lambda$. Furthermore, we point out that the estimates can
naturally be related to the ideal values by Theorem~\ref{Thm:Channel}.

\begin{proof}
  There are two contributions to the error, first the increased
  sampling error that arises due to the PEC protocol itself and second
  the error we occur due to errors in the noise-learning procedure
  that determine the estimate for the $\hat{\Lambda}_i$. As discussed
  in this section, the random variable $X$ in Eq.~\eqref{Eq:AHatN}
  satisfies
\begin{equation}\label{Eq:EtaExpX}
\gamma(l)\mathbb{E}(X) =
\Tr\left [ A \, \left( \mathcal{U}_l \circ\Lambda_l \circ \hat{\Lambda}_l^{-1} \right)
\circ\cdots\circ
\left(\mathcal{U}_1 \circ \Lambda_1 \circ \hat{\Lambda}_1^{-1}\right)(\rho_0)\right].
\end{equation}
 Bounding the right-hand side of Hoeffding's inequality~\eqref{Eq:Hoeffding} by $\delta$ gives an additive error for $\mathbb{E}(X)$ of
\begin{equation}\label{Eq:EpsilonSample}
\epsilon_s \leq \sqrt{2\log(2/\delta)/N}
\end{equation}
with probability at least $1 - \delta$. This allows us to estimate
$\mathbb{E}(X)$ in Eq.~\eqref{Eq:EtaExpX} up to an additive sampling
error of $\epsilon_{s}$. In order to bound
$\vert\langle A\rangle - \langle\hat{A}\rangle\vert$, we first define
\[
\mathcal{T}_k = \mathcal{U}_k\circ\cdots\circ\mathcal{U}_1,
\qquad\mbox{and}\qquad
\mathcal{S}_k = \left(\mathcal{U}_k \circ \Lambda_k \circ\hat{\Lambda}_k^{-1}\right) \circ\cdots\circ
\left(\mathcal{U}_1 \circ \Lambda_1 \circ \hat{\Lambda}_1^{-1}\right).
\]
It then follows from the triangle inequality and properties of the trace that
\begin{align}
\vert\langle A\rangle - \langle\hat{A}\rangle\vert
& \leq \gamma(l) \epsilon_s + \vert \langle A\rangle - \gamma(l)\mathbb{E}(X)\vert\notag\\
& = \gamma(l) \epsilon_s +
\vert \Tr\left[A\,\mathcal{T}_l(\rho_0)\right] - \Tr\left[A\,\mathcal{S}_l(\rho_0)\right]\vert\notag\\
& \leq \gamma(l) \epsilon_s +
\Vert A\Vert
\Vert (\mathcal{T}_l - \mathcal{S}_l)(\rho_0)\Vert_1\notag\\
& \leq \gamma(l) \epsilon_s +
\dnorm{\mathcal{T}_l - \mathcal{S}_l},\label{Eq:AAhat}
\end{align}
with $\norm{A} \leq 1$. The last inequality follows from the
definition of the diamond norm $\dnorm{\cdot}$, which has a number of
useful, properties.

For TCP-maps $T$ we have $\dnorm{T} \leq 1$, whereas for general
linear maps $A$ and $B$ the norm is sub-multiplicative and thus
satisfies $\dnorm{A \circ B} \leq \dnorm{A}\dnorm{B}$. For linear maps
we therefore have
\begin{equation}\label{Eq:DiamonNormAB}
\dnorm{A_1 \circ A_2 - B_1 \circ B_2}
\leq
\dnorm{A_1 - B_1}\dnorm{A_2} + \dnorm{B_1}\dnorm{A_2 - B_2}.
\end{equation}
Note, that both $\Lambda$ and $\hat{\Lambda}^{-1}$ have diagonal
Pauli-transfer matrices. The combined map
$\Lambda \circ \hat{\Lambda}^{-1}$ has therefore eigenvalues
$f_j(\widehat{f_j^{-1}})$ that are bounded by
$C_{\epsilon}^{-\tau} \leq f_j(\widehat{f_j^{-1}})\leq
C_{\epsilon}^{\tau}$ according to Theorem~\ref{Thm:Channel}. Hence, we
immediately have that
$C_{\epsilon}^{-\tau} \leq \dnorm{\Lambda \circ \hat{\Lambda}^{-1}}
\leq C_{\epsilon}^{\tau}$.  From this it follows that
\begin{align*}
\dnorm{\mathcal{T}_l - \mathcal{S}_l}
& = \dnorm{\mathcal{U}_l\circ\mathcal{I}d\circ \mathcal{T}_{l-1}
- \mathcal{U}_l \circ (\Lambda_l \circ \hat{\Lambda}_l^{-1})\circ\mathcal{S}_{l-1}}\\
& \leq 
\dnorm{\mathcal{I}d\circ \mathcal{T}_{l-1}
- (\Lambda_l \circ \hat{\Lambda}_l^{-1})\circ\mathcal{S}_{l-1}}\\
& \leq
\dnorm{\mathcal{I}d - \Lambda_l \circ \hat{\Lambda}_l^{-1}}
\dnorm{\mathcal{T}_{l-1}}
+ \dnorm{\Lambda_l \circ \hat{\Lambda}_l^{-1}}
\dnorm{\mathcal{T}_{l-1}-\mathcal{S}_{l-1}}\\
& \leq (C_{\epsilon}^{\tau} - 1) +
C_{\epsilon}^{\tau}\dnorm{\mathcal{T}_{l-1}-\mathcal{S}_{l-1}}.
\end{align*}
For the final iteration we can take $\mathcal{T}_{-1}$ and
$\mathcal{S}_{-1}$ to be the identity, giving
$\dnorm{\mathcal{T}_1 - \mathcal{S}_1} \leq
C_{\epsilon}^{\tau}-1$. Solving the resulting recurrence relation
gives
\[
\dnorm{\mathcal{T}_l - \mathcal{S}_l} \leq C_{\epsilon}^{l \tau} - 1.
\]
The result then follows from Eqs.~\eqref{Eq:EpsilonSample}
and~\eqref{Eq:AAhat}.
\end{proof} 

\subsection{Weak exponential scaling}\label{Sec:Scaling}

Consider a layer consisting of $k$ non-overlapping two-qubit gates
such that each gate on qubits $i$, $j$ is affected by a local
two-qubit depolarizing channel
$\mathcal{D}(\rho) = f\rho + \frac{1-f}{4}\mbox{Tr}_{ij}[\rho]$, such
that the fidelity for any Pauli is $f$. For each channel we can form a
two-local error model, for which it follows from~\eqref{Eq:Mf} that
all model coefficient in $\lambda$ are $-\log(f)/16$.  Given that the
gates do not overlap we can combine the individual noise channel into
the layer-level noise channel using the results from
Section~\ref{Sec:ChannelOperations}. It is then easy to see that the
overall noise model has $15k$ nonzero model coefficients, all equal to
$-\log(f)/16$.  Using~\eqref{Eq:Eta} it then follows that
$\gamma = \exp(-(15k/8)\log(f))$. This expression allows us to analyze
the growth of $\gamma$ for the Ising model in the main text. For $n$
qubits we have one layer with $\lfloor n/2\rfloor$ gates and one layer
with $\lfloor (n-1)/2\rfloor$ gates. In Figure~\ref{Fig:GammaEpsilon}a
we plot the value of $\gamma$ as a function of $(1-f)$ for different
number of qubits $n$. The plot in Figure~\ref{Fig:GammaEpsilon}b then
shows for $n=50$ the relative number of circuit instances that need to
be sampled to attain a similar variance in the estimated observable
for different number of Trotter steps. Although the curves rise
quickly in the error $(1-f)$, the opposite is also true: minor
improvements in gate fidelities lead to a huge decrease in the number
of circuit instances that need to be sampled and therefore enable
simulation of larger systems.

\begin{figure}[ht]
\centering
\begin{tabular}{ccc}
\includegraphics[width=0.45\textwidth]{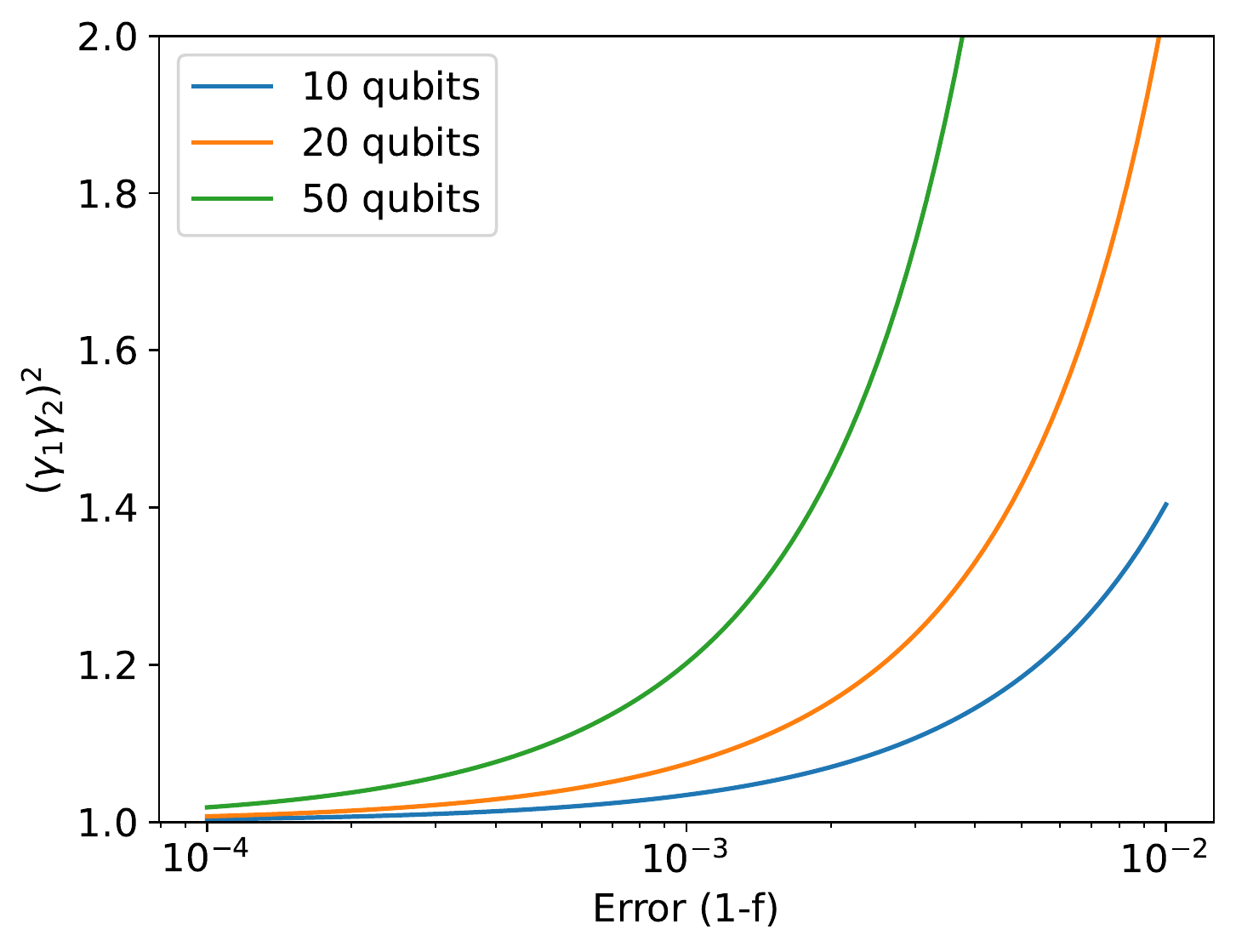}&&
\includegraphics[width=0.45\textwidth]{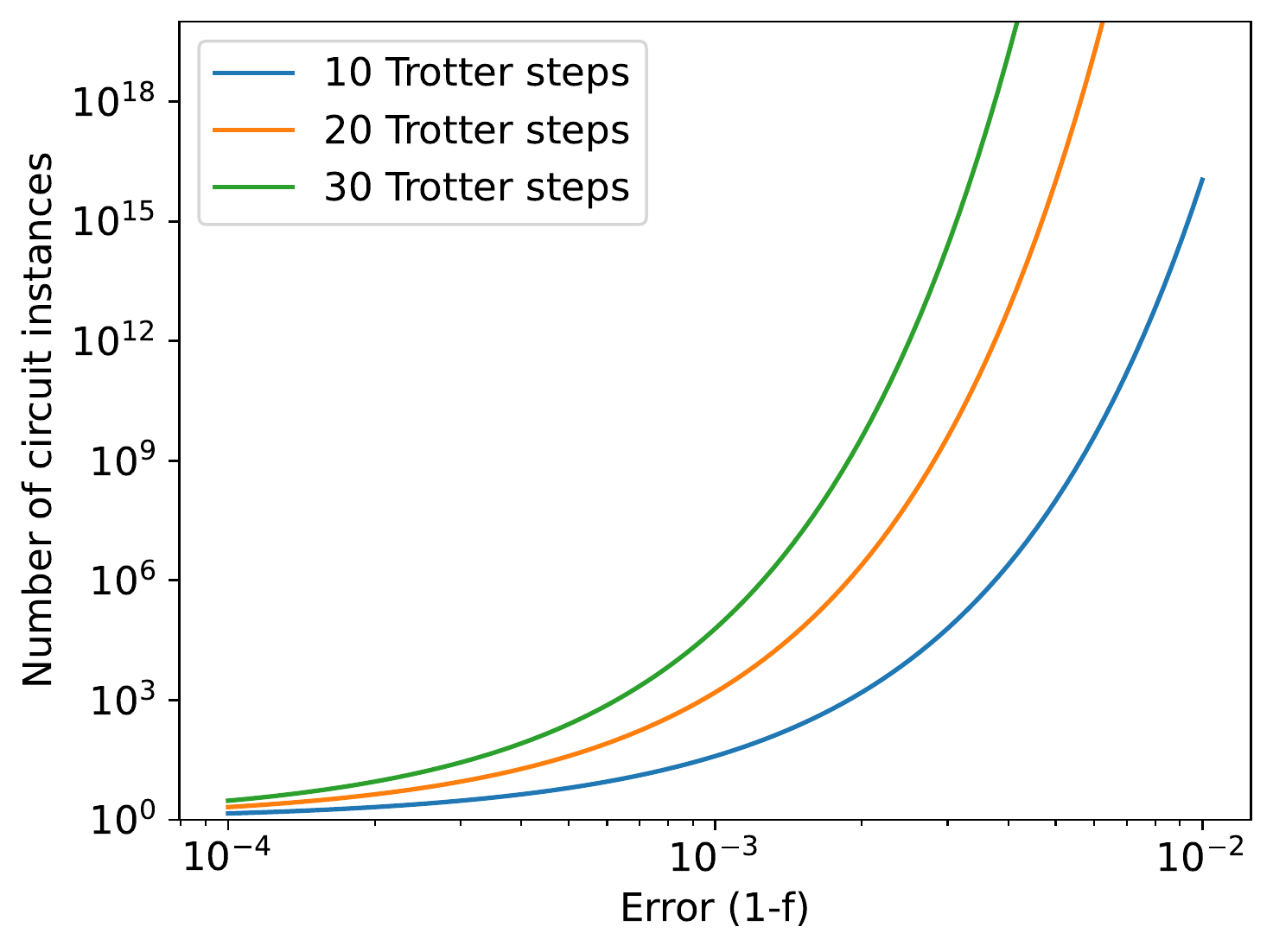}\\[-2pt]
{\bf{a}} && {\bf{b}}
\end{tabular}
\caption{Scaling of ({\bf{a}}) the sampling overhead factor
  $\gamma = (\gamma_1\gamma_2)^2$ for the Ising simulations discussed
  in the main text, in case each {\sc{cx}} gate is affected by
  depolarizing noise with fidelity $f$. Plot ({\bf{b}}) shows the
  relative number of circuit instances needed for $n=50$ to achieve a
  similar variance in estimates. Slight improvements in the gate
  fidelity lead to significant reductions in the required number of
  circuit instances.}\label{Fig:GammaEpsilon}
\end{figure}

\section{Setup of the experiment}\label{sec:exp-setup}

\subsection{Devices of the experiment}\label{subsec:Devices-of-exp}

We performed the experiments on superconducting quantum processors
\cite{Jurcevic2020,Zhang2020-laser-anneal}, which utilized
fixed-frequency transmon qubits \cite{Koch2007}. All devices were
patterned to realize heavy-hexagon lattices (see Fig.~1d of the main
text). The experiments presented in the main text were all obtained on
the same 27-qubit Falcon processor, named~\textsc{ibm\_hanoi}.
Other iterations of the protocol were executed on other Falcon chips
(\textsc{ibm\_mumbai}, \textsc{ibm\_kolkata}, \textsc{ibm\_syndey},
and \textsc{ibm\_montreal}), with results and conclusions similar to
those presented for \textsc{ibm\_hanoi}. We view these additional
tests as an indicator for the reproducibility and robustness of the
protocol.

\subsection{Specifications of the primary device}\label{subsec:Specifications-of-exp}

Basis gates --- All circuits were transpiled to the standard basis
gate set~$\big\{ I,\sqrt{X},X,R_{Z},\mathrm{CX}\big\}$. The
single-qubit gates $\sqrt{X}$ and $X$ were implemented using the
standard circuit quantum electrodynamics (cQED) all-microwave-control
setup, using Gaussian pulses with calibrated DRAG
decoupling~\cite{Motzoi2009,JChow2010-DRAG}, each with a total gate
time of 35.5~ns (4$\sigma$ Gaussian pulses). The~$I$ and~$R_{Z}$
pulses were virtualized~\cite{PhysRevA.96.022330}, and hence took no
time. The two-qubit CX gates were implemented using cross-resonance
pulses \cite{Paraoanu2006,Chow2011} with gate times optimized to
maximize fidelity. Figure~\ref{fig:Total-two-qubit-gate} shows the
chip topology of \textsc{ibm\_hanoi} along with the duration of its CX
gates in their native direction. The CX gate duration ranges from 181
to 519~ns, and their average error, as estimated by randomized
benchmarking, was~0.98\%. In all experiments, the qubits were cooled
to the ground state prior to the start of the protocol.

\begin{figure}[!t]
\begin{centering}
\includegraphics[width=0.90\textwidth]{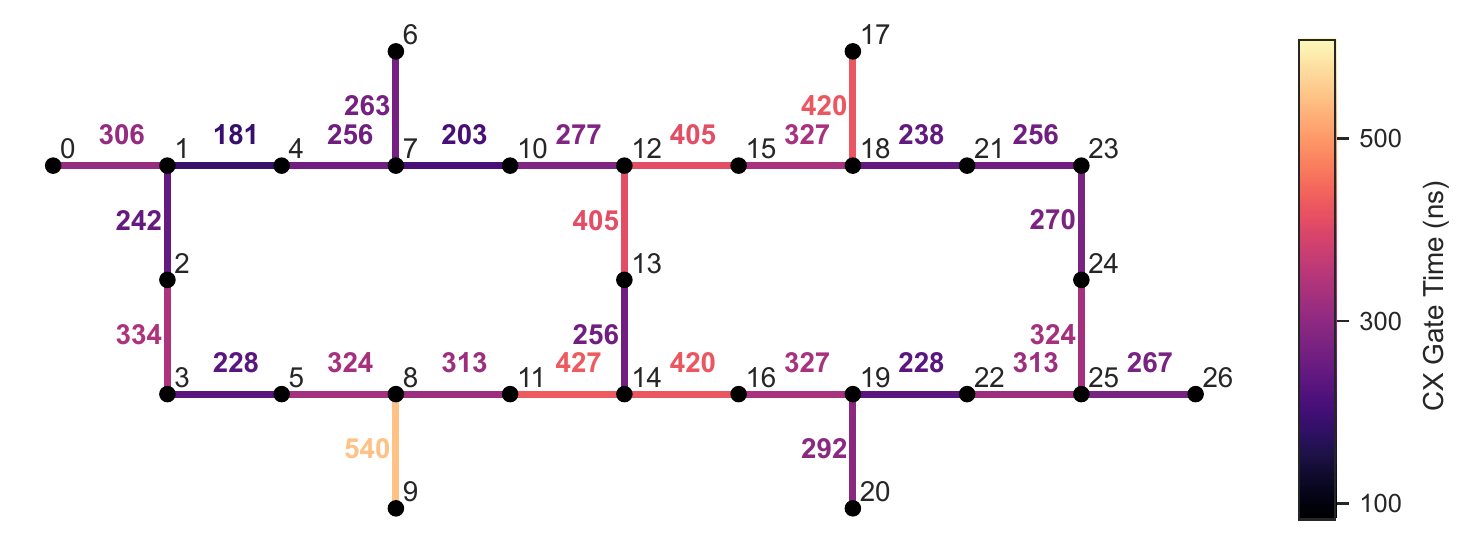}
\par\end{centering}
\caption{
\label{fig:Total-two-qubit-gate}\textbf{Quantum
processor topology and native CX gate duration for {\sc{ibm\_hanoi}}.}
Each node represents a qubit and is labeled by the physical qubit
number (black text). CX Gates are represented by edges, along with
their gate time in nanoseconds (colored text).  }
\end{figure}

\medskip\noindent{Coherence} --- Processor \textsc{ibm\_hanoi} had a
quantum volume of 64 \cite{Jurcevic2020} and average
energy-relaxation~$T_{1}$ and Hanh echo~$T_{2}^{E}$ times of
$151\,\mathrm{\mu s}$ and $107\,\mathrm{\mu s}$, respectively. As
typical for supercomputing qubits~\cite{Klimov2018,Carroll2021}, these
times fluctuated over the duration of the experiments. In
Fig.~\ref{fig:t2e-violin}, we summarize the distribution of the
variations in $T_{2}^{E}$ over a two-month long period for each of the
27 qubits. We limited the effects of temporal fluctuations by
interleaving our mitigation experiments with noise-learning runs every
few hours.

\medskip\noindent{Readout} --- We define the readout assignment
fidelity per qubit as
$\mathcal{F}_{a}=1-{\textstyle\frac{1}{2}}\left(P(1\,\vert\,0)+P(0\,\vert\,1)\right),$
where $P(A\,\vert\,B)$ is the empirical probability to measure the
qubit in state $A\in\{0, 1\} $ given that the qubit was nominally
prepared in state~$B\in\left\{ 0,1\right\} $.  The average assignment
readout error $1-\mathcal{F}_{a}$ across all qubits in our device was
$2.5\%$. We note that the probability distribution is biased due to
energy relaxation such that $P(1\,\vert\,0)<P(0\,\vert\,1)$.

\begin{figure}[t]
\begin{centering}
\includegraphics{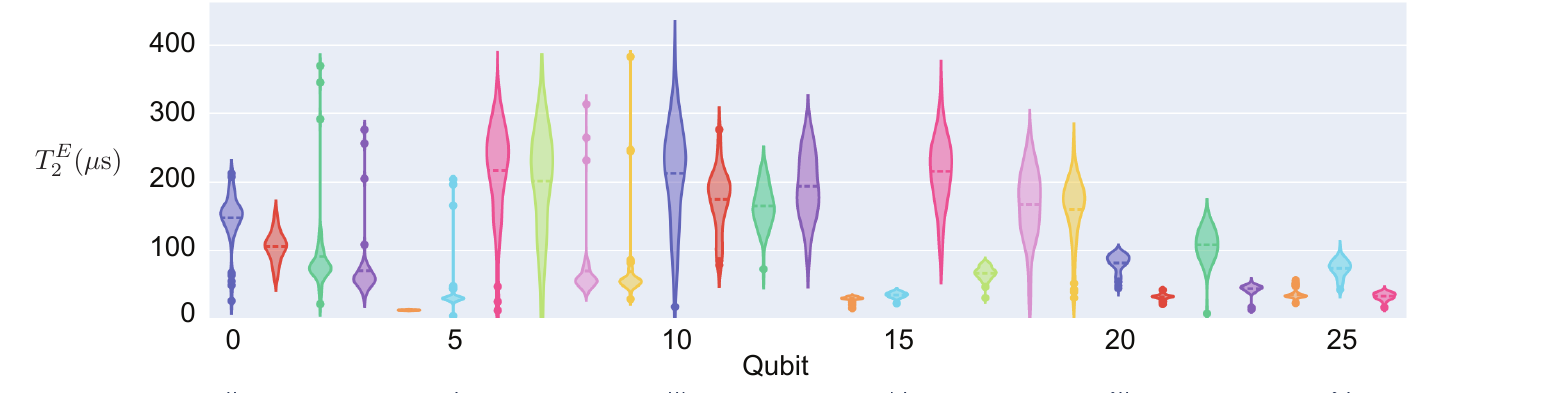}
\par\end{centering}
\caption{ 
\label{fig:t2e-violin}\textbf{Device
coherences and time variability. }Distribution of the $T_{2}$-echo
relaxation times for each of the qubits of device \textsc{ibm\_hanoi}
over a two-month period, depicted as a violin plot. A larger
horizontal width of the violin plot indicates a larger probability of
sampling this $T_{2}^{E}$ time. The horizontal dashed lines indicate
the mean of the distribution.}
\end{figure}

\begin{table}[th]
\begin{centering}
{\setlength{\extrarowheight}{2pt}
\setlength{\doublerulesep}{4pt}
\addtolength{\tabcolsep}{4pt}
\begin{tabular}{l|c|c}
 & Symbol & Value\tabularnewline
\hline 
Quantum volume & $\mathrm{QV}$ & $64$\tabularnewline
Energy relaxation lifetime & $T_{1}$ & $151\,\mathrm{\mu s}$\tabularnewline
Hanh echo time & $T_{2}^{E}$ & $107\,\mathrm{\mu s}$\tabularnewline
Readout assignment error & $1-\mathcal{F}_{a}$ & $2.5\%$\tabularnewline
CX error &  & 0.98\%\tabularnewline
\end{tabular}}
\par\end{centering}
\caption{\label{tab:Summary-table-device-spec}Summary table of average device
metrics for \textsc{ibm\_hanoi}. Symbols explained in Sec.~\ref{subsec:Specifications-of-exp}.}
\end{table}

\subsection{Dynamical decoupling}\label{subsec:Dynamical-decoupling}
\label{subsec:dd-and-learning}

As a result of the different gate times, qubits in our layer of CX
gates can experience idle periods. This holds especially true for
context qubits, which are idle for the full duration of the layer. To
lessen the effects of decoherence and low-frequency noise during these
idle periods, we apply dynamical decoupling
(DD)~\cite{Viola1999,Zanardi1999}, which was recently demonstrated to
improve circuit fidelity \cite{Jurcevic2020}. For dynamical decoupling
we use a standard~$X_{p}-X_{m}$ sequence, which is the simplest
version of a Car-Purcell-Meiboom-Gill (CPMG) echo
train~\cite{Carr1954,Meiboom1958}. This is illustrated in
Fig.~\ref{fig:Structure-of-dd}.

\begin{figure}[ht]
\begin{centering}
\includegraphics[width=282pt]{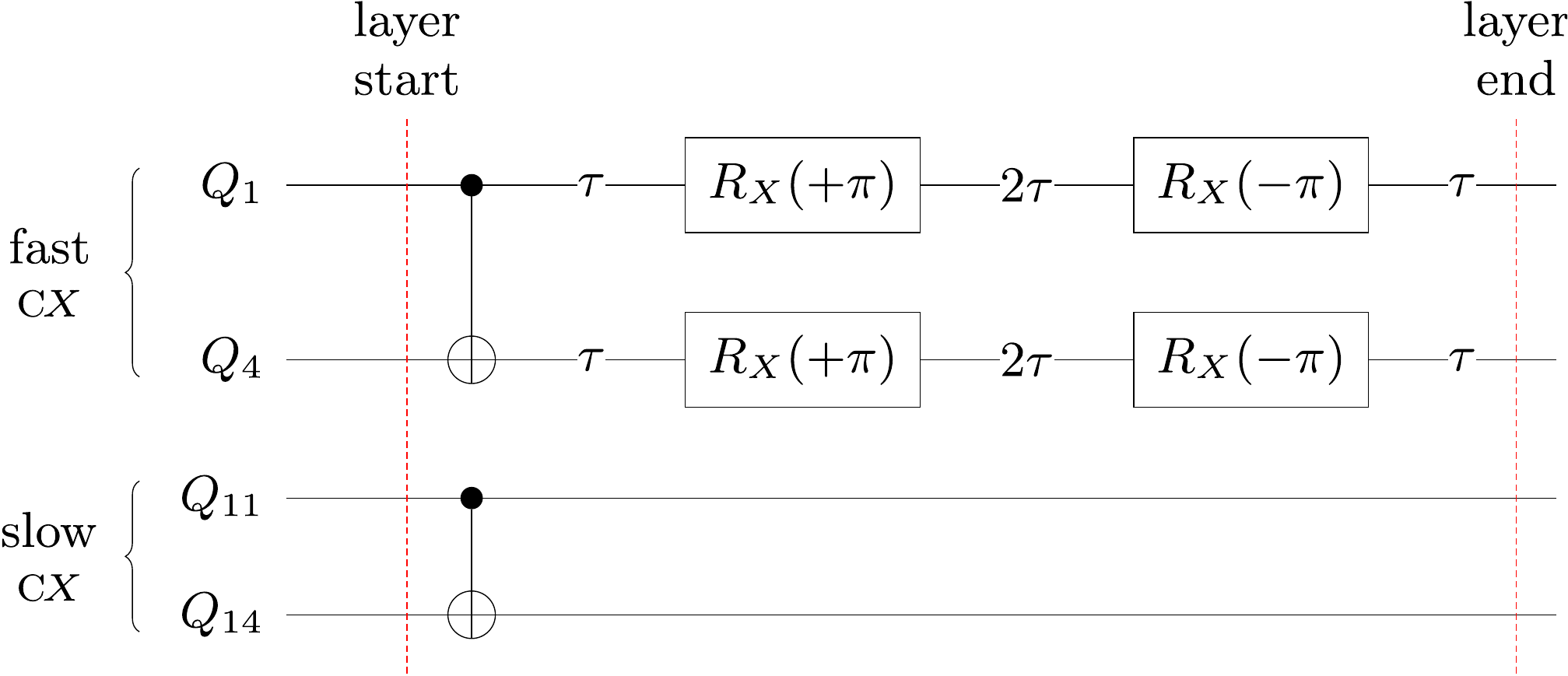}
\par\end{centering}
\caption{
\label{fig:Structure-of-dd}
\textbf{Dynamical decoupling inside a layer.} Structure of a simple
dynamical decoupling sequence used during idle qubit times,
illustrated on a four-qubit layer with concurrent \textsc{cx} gates on
qubits $1-4$ and $11-14$. The delay duration~$\tau$ is calculated as
the idle time minus the duration of the two $R_X$ gates, divided by
four.}
\end{figure}

\begin{figure}[th]
\begin{centering}
\includegraphics[width=\textwidth]{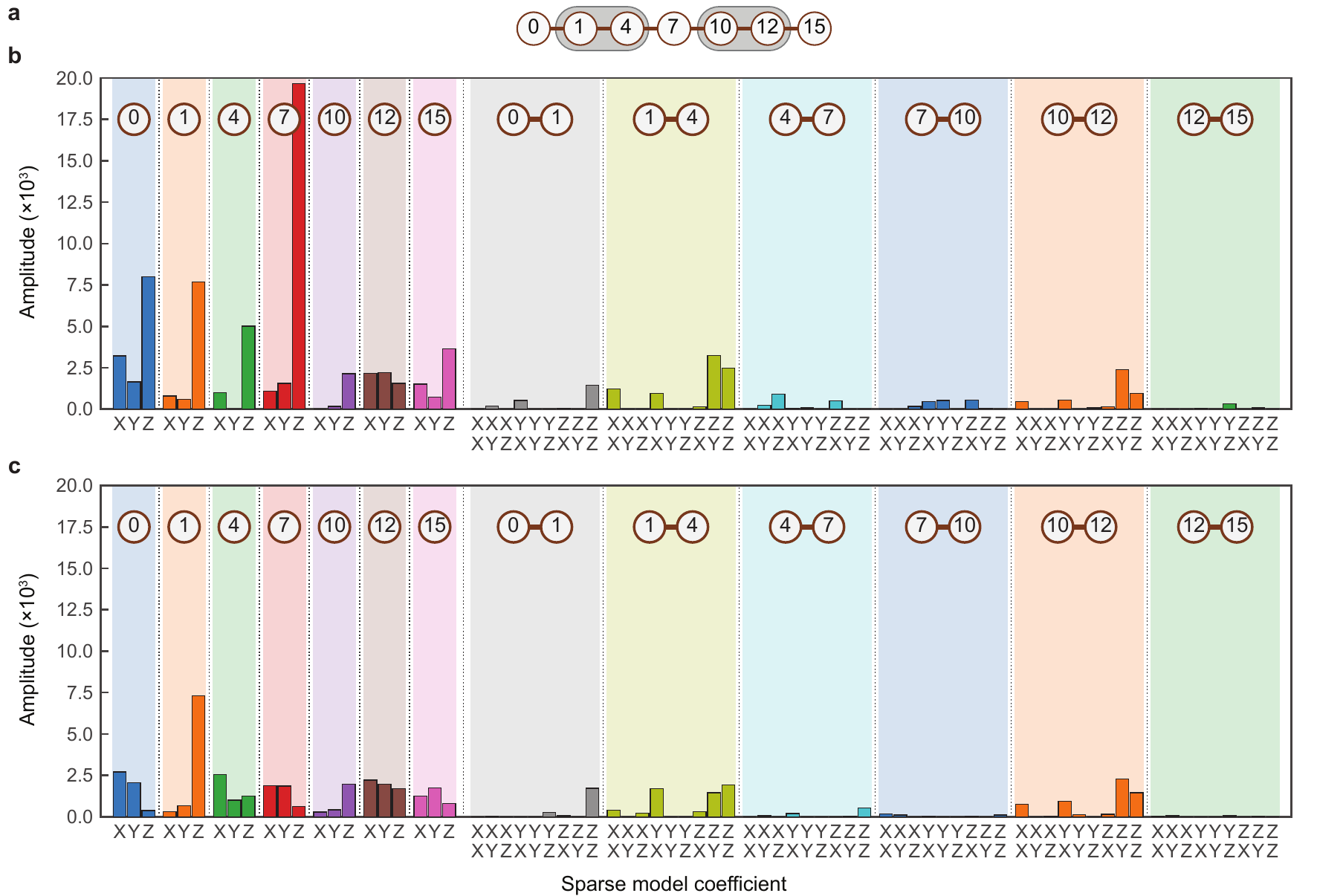}
\par\end{centering}
\caption{
\label{fig:dd-learning}\textbf{Noise model coefficients in the presence and absence of dynamical decoupling.} (\textbf{a})
Schematic depiction of a 7-qubit layer on \textsc{ibm\_kolkata} with
{\sc{cx}} gates on qubits $1-4$ and~$10-12$, and (idle) context qubits
0, 7, and 15. ({\bf{b}},{\bf{c}}) Plots of the learned noise-model
coefficients in the absence ({\bf{b}}) and presence ({\bf{c}}) of
dynamical decoupling within the layer using an $X_{p}-X_{m}$ sequence
on the context qubits. The numbered circles overlaid over each section
of the bar plots indicate the support of the model Pauli terms.}
\end{figure}

To study the effect of dynamical decoupling on the structure of noise
in our system, we considered a 7-qubit layer with two CX gates and
three (idle) context qubits, as illustrated in
Fig.~\ref{fig:dd-learning}a.  We then learned the layer with and
without dynamical decoupling applied to the context qubits. The noise
model coefficients obtained without dynamic decoupling are shown in
Fig.~\ref{fig:dd-learning}b. The dominant noise in the system
corresponds to unit-weight Pauli-Z terms.  The origin of these
dominant noise terms may be attributed to $T_{2}$ qubit dephasing and
other coherent $Z$-noise arising from crosstalk.  Repeating the
experiment with dynamical decoupling enabled resulted in the model
coefficients shown in Fig.~\ref{fig:dd-learning}c.  It is seen that
the large Pauli-Z noise terms on the idle qubits are significantly
reduced.

\subsection{Additional learning and control experiments}\label{Sec:AdditionalLearning}

\begin{figure}[th]
\begin{centering}
\includegraphics[width=0.99\textwidth]{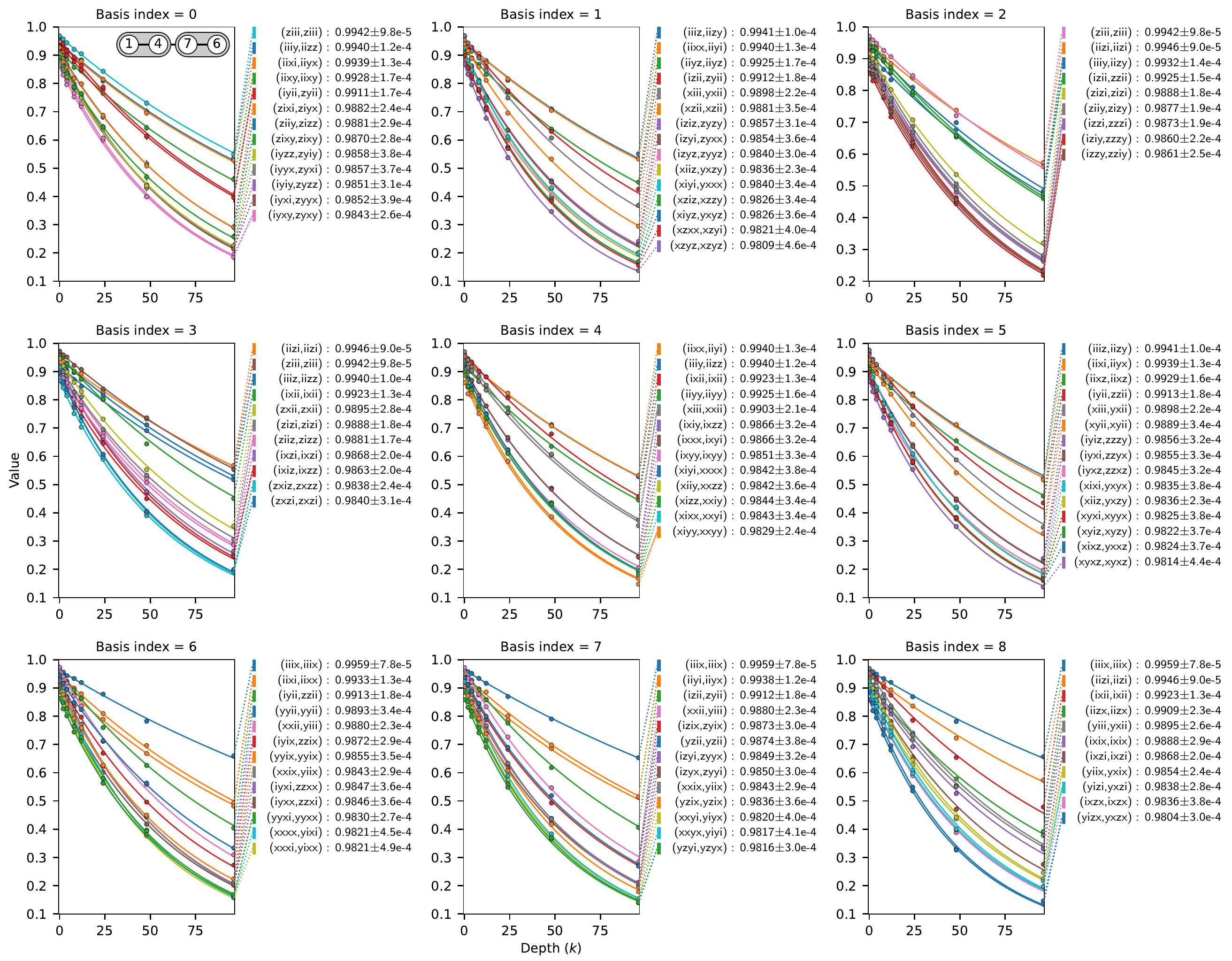}
\end{centering}
\caption{\textbf{Raw
data of noise-learning experiments.} 
Plots of the learning data in all nine different bases for the
four-qubit layer of Fig.~2a of the main text. The layer consists of
two concurrent \textsc{cx} gates applied to qubits 1--4 and 7--6 on
\textsc{ibm\_hanoi} (see top-left inset).  Each plot gives the
observable expectation values for different circuit depths along with
the exponentially decaying function fitted through the data points,
with decay rate corresponding to the square root of the product of two
fidelities. Fitting for a given fidelity pair is done for all
occurrences within and across the different bases. For instance the
fidelity for {\sc{ziii}} is determined based on the data obtained for
basis indices 0 and 2, whereas the fidelity pair
{\sc{iyii}}--{\sc{zyii}} occurs twice for basis index
0.}\label{fig:dd-learning-1}
\end{figure}

\begin{figure}[th]
\begin{centering}
\includegraphics[width=\textwidth]{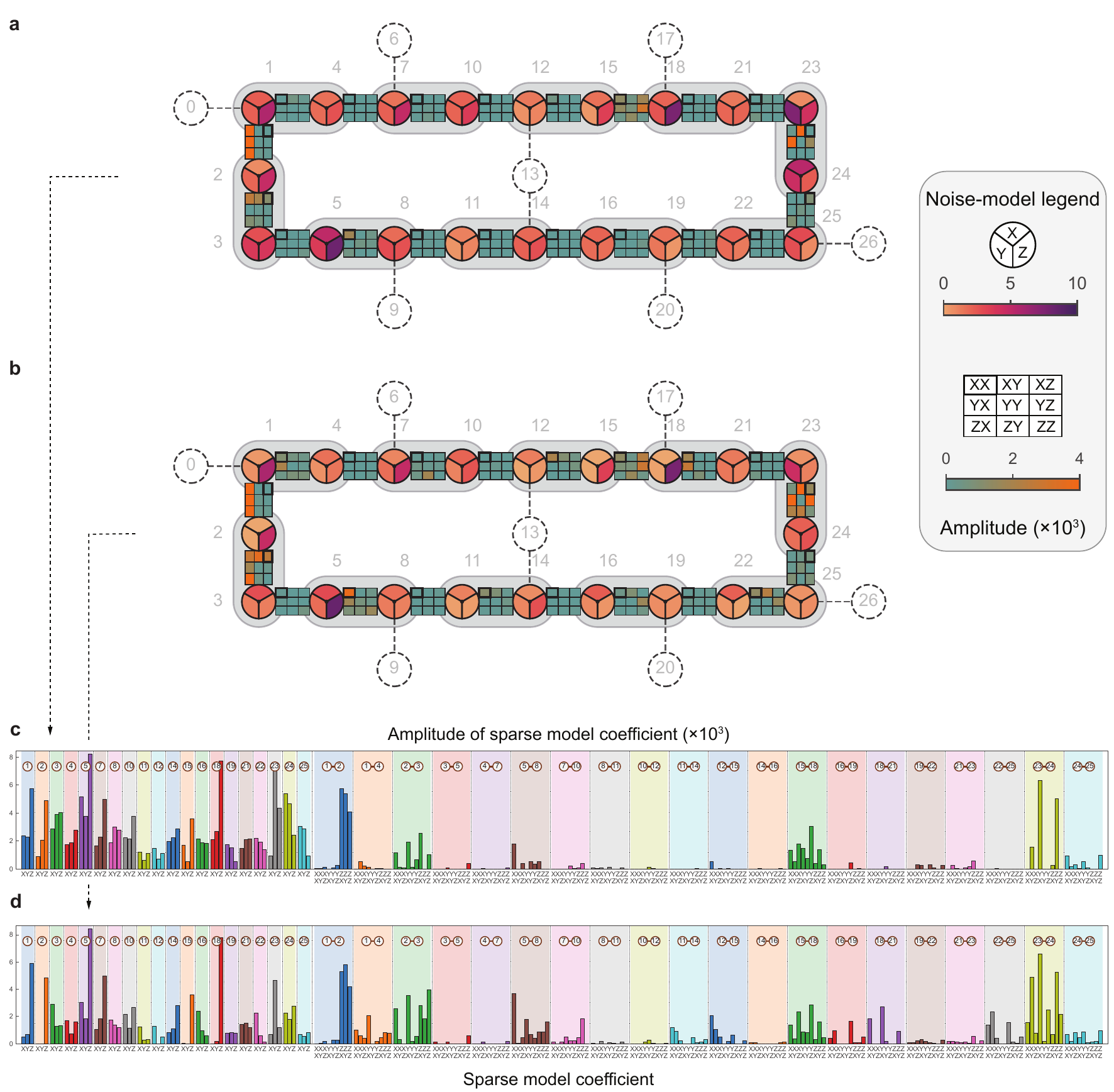}
\par\end{centering}
\caption{\textbf{Comparison between unit-depth and symmetry noise
    model fitting.}  Panels (\textbf{a}) and (\textbf{b}) depict the
  \textsc{ibm\_hanoi} processor topology along with a 20-qubit layer
  consisting of 10 {\sc{cx}} gates. We learn the noise model using
  ({\bf{a}}) unit-depth circuits and ({\bf{b}}) using a symmetry
  assumption on the noise channel. The resulting weight-one Pauli
  generators (X, Y, Z terms) in the Lindblad model are given as wedges
  inside the circular qubit nodes. Similarly, the weight-two Pauli
  generators (XX, XY, and so on) in the Lindblad model are visualized
  by the $3\times 3$ grids connecting pairs of qubits. The legend on
  the right shows the corresponding color bar detailing the noise
  amplitude. Panels (\textbf{c}) and (\textbf{d}) present the same
  data (see down-pointing arrows) as bar plot of the noise-model
  coefficients.}\label{fig:learn-large-chip-symm-unit}
\end{figure}

In Fig.~\ref{fig:dd-learning-1} we provide the full data for the
noise-model learning setup of Fig.~2a in the main text, as measured in
all nine bases determined by the learning protocol.

Finally, in Fig.~\ref{fig:learn-large-chip-symm-unit}, we compare the
noise model extracted using the unit-depth and symmetric learning
post-processing methods for a 20 qubit layer with 10 {\sc{cx}}
gates. The bottom panels of the figure show the noise-model
coefficients obtained using the unit-depth and symmetric methods,
respectively.  Aside from some localized differences, the profiles of
the two noise models were found to match well overall. For all
experiments in this work, aside from the present one, we used
symmetry-based model fitting.

\end{document}